\newcommand{\p}{\partial}
\newcommand{\const}{\mathop{\rm const}\nolimits}
\newcommand{\sgn}{\mathop{\rm sgn}\nolimits}
\newcommand{\rank}{\mathop{\rm rank}\nolimits}
\newcommand{\spanindex}{{\mbox{\tiny$\langle\,\rangle$}}}
\newcommand{\todo}[1][\null]{\ensuremath{\clubsuit}}
\newcommand{\noprint}[1]{}
\newcommand{\checked}[1][\null]{\ensuremath{\boldsymbol{\surd}}}
\newtheorem{theorem}{Theorem}%[section]
\newtheorem{lemma}[theorem]{Lemma}
\newtheorem{corollary}[theorem]{Corollary}
\newtheorem{proposition}[theorem]{Proposition}
\newtheorem*{proposition*}{Proposition}
{\theoremstyle{definition}
\newtheorem{definition}[theorem]{Definition}

\newtheorem{remark}[theorem]{Remark}
\newtheorem*{notation*}{Notation}
}
\begin{document}
\par\noindent {\LARGE\bf
Equivalence groupoids and group classification\\ of multidimensional nonlinear Schr\"odinger equations
\par}
%Equivalence groupoids and group classification of multidimensional nonlinear Schrödinger equations

\vspace{5mm}\par\noindent{\large
C\'elestin Kurujyibwami$^{\dag}$ and Roman O. Popovych$^\ddag$
}\par\vspace{2mm}\par

\vspace{2mm}\par\noindent{\it
$^\dag$\,College of Science and Technology, University of Rwanda, P.O.\,Box: 3900, Kigali, Rwanda}\\
$\phantom{^\dag}$\,E-mail: celeku@yahoo.fr

\vspace{2mm}\par\noindent{\it
$^\ddag$\,Fakult\"at f\"ur Mathematik, Universit\"at Wien, Oskar-Morgenstern-Platz 1, 1090 Wien, Austria\\
$\phantom{^\ddag}$\,Institute of Mathematics of NAS of Ukraine, 3 Tereshchenkivska Str., 01024 Kyiv, Ukraine}\\
$\phantom{^\ddag}$\,E-mail: rop@imath.kiev.ua

\vspace{6mm}\par\noindent\hspace*{10mm}\parbox{140mm}{\small
We study admissible and equivalence point transformations between generalized multidimensional nonlinear Schr\"odinger equations
and classify Lie symmetries of such equations.
We begin with a wide superclass of Schr\"odinger-type equations, which includes all the other classes considered in the paper.
Showing that this superclass is not normalized, we partition it into two disjoint normalized subclasses,
which are not related by point transformations.
Further constraining the arbitrary elements of the superclass,
we construct a hierarchy of normalized classes of Schr\"odinger-type equations.
This gives us an appropriate normalized superclass
for the non-normalized class of multidimensional nonlinear Schr\"odinger equations with potentials and modular nonlinearities
and allows us to partition the latter class into three families of normalized subclasses.
After a preliminary study of Lie symmetries of nonlinear Schr\"odinger equations with potentials and modular nonlinearities
for an arbitrary space dimension, we exhaustively solve the group classification problem for such equations
in space dimension two.
}

\noprint{
MSC: 35Q55, 35A30, 35B06
35-XX Partial differential equations
 35Axx General topics
  35A30 Geometric theory, characteristics, transformations [See also 58J70, 58J72]
 35Bxx Qualitative properties of solutions
  35B06 Symmetries, invariants, etc.
 35Qxx Partial differential equations of mathematical physics and other areas of application [See also 35J05, 35J10, 35K05, 35L05]
  35Q55 NLS equations (nonlinear Schroedinger equations) {For dynamical systems and ergodic theory, see 37K10}

Keywords:
nonlinear Schroedinger equations %nonlinear Schrodinger equations
equivalence groupoid
group classification of differential equations
Lie symmetry
equivalence group
point transformation
algebraic method of group classification
}

\section{Introduction}

Nonlinear Schr\"odinger-type equations appear in physics in many contexts.
They model nonlinear physical systems in hydrodynamics, optics, acoustics,
quantum condensates, heat pulses in solids, plasma physics, quantum mechanics
and biomolecular dynamics, to name just a few areas; see for instance
\cite{Belmonte-Beitia&Perez&Vekslerchick2008,clar1993a,
Gagnon&Winternitz1988,Gagnon&Winternitz1993,Garcia-Ripoll&Perez-Garcia&Vekslerchik2001}
and references therein.

There is a vast literature on the study of Schr\"odinger equations
within the framework of symmetry analysis of differential equations,
including their Lie symmetries and point transformations between them;
see, e.g., the reviews in~\cite{clar1992a}, \cite[Chapter~17]{CRC_vol2}, \cite[Section~4]{Popovych&Kunzinger&Eshragi2010} and below.
This study was systematically begun in the 1970's with linear Schr\"odinger equations
\cite{boy1975,Miller1977,Niederer1972,Niederer1973,Niederer1974}.
Nevertheless, linear Schr\"odinger equations are quite different from nonlinear ones
in symmetry and other related properties, and their study needs specific methods,
including modifications of methods of group classification;
see~\cite{Kurujyibwami&Basarab-Horwath&Popovych2018,Nikitin2017} and references therein.

To the best of our knowledge, the first paper
with essential usage of specific Lie symmetries of nonlinear Schr\"odinger equations
was the paper~\cite{Talanov1970}.
It was indicated there that the (1+2)-dimensional cubic Schr\"odinger equations
admits additional (conformal) symmetries
in comparison with the case of general power modular nonlinearity $|\psi|^\gamma\psi$.
This displays the fact that the power $\gamma=2$ is critical for space dimension two.
Lie symmetries of nonlinear Schr\"odinger equations
were systematically considered for the first time in~\cite{BoyerSharpWinternitz1976},
where an inverse group classification problem for (1+1)-dimensional nonlinear Schr\"odinger equations
of the form $i\psi_t+\psi_{xx}+F(t,x,\psi,\psi^*)=0$ was solved.
More specifically, subalgebras of the essential~\cite{popo2008a} Lie invariance algebra
of the (1+1)-dimensional free Schr\"odinger equations were classified
and equations of the above form that are invariant with respect to the listed subalgebras were constructed.
Later, other inverse group classification problems for nonlinear Schr\"odinger equations
were also considered, see, e.g., \cite{Fushchych&Cherniha1995,Gungor1999,ride1993a,Stoimenov&Henkel2005}.
Some Lie reductions of nonlinear Schr\"odinger equations with power modular nonlinearity,
were carried out in~\cite{anco2013a,Fushchych&Moskaliuk1981,Fushchych&Serov1987,Gagnon&Winternitz1990a}.
In~\cite{Gagnon&Winternitz1988,Gagnon89a,Gagnon89b,Gagnon89c,Gagnon89d},
classical Lie symmetry analysis was comprehensively carried out
for (1+3)-dimensional cubic-quintic Schr\"odinger equations,
whose nonlinear terms are of the form $(a_2|\psi|^4+a_1|\psi|^2+a_0)\psi$,
where $a_0$, $a_1$ and~$a_2$ are real constants with $(a_1,a_2)\ne(0,0)$.
This included the computation of the maximal Lie invariance algebras of such equations
depending on values of the parameters~$a$'s, the classification of subalgebras of these algebras,
Lie reductions using the obtained subalgebras, the construction of exact invariant solutions
and additional analysis of some specific submodels like the spherical submodel.
Lie reductions and invariant solutions of coupled systems of two (1+2)-dimensional nonlinear Schr\"odinger equations
were considered in~\cite{Gagnon1992}.
Partially invariant solutions of (1+1)-dimensional nonlinear Schr\"odinger equations
of the form $i\psi_t+\psi_{xx}+f\psi+g\psi_x=0$ with complex-valued parameter functions~$f$ and~$g$ of~$(|\psi|,|\psi|_x)$
were constructed in~\cite{mart1992a}.
Classical Lie reductions and nonclassical reductions for generalized nonlinear Schr\"odinger equations
of the form $i\psi_t+\psi_{xx}+b_1(|\psi|^2\psi)_x+b_2(|\psi|^2)_x\psi+a_2|\psi|^4\psi+a_1|\psi|^2\psi=0$
with $b_1,b_2\in\mathbb C$, $a_1,a_2\in\mathbb R$ and $(b_1,b_2,a_2)\ne(0,0,0)$
were carried out in~\cite{clar1992a,flor1990a,flor1992a}.
Analogous reductions were discussed in~\cite{clar1992a} for the case of space dimension three
and in~\cite{clar1993a} for similar cylindrical equations.
The integration of obtained reduced equations gave a number of exact solutions for the above Schr\"odinger equations.
Conditional symmetries of nonlinear Schr\"odinger equations with modular nonlinearity, $f(|\psi|)\psi$,
and of phase Schr\"odinger equations whose nonlinear terms are
of the form $i\beta\psi\mathop{\rm arg}\psi^2+f(|\psi|)\psi$ with $\beta\in\mathbb R_{\ne0}$
were obtained in~\cite{Fushchych&Chopyk1993} for specific differential constraints.
The systems of determining equations for Lie symmetries of
systems of nonlinear Schr\"odinger equations with inhomogeneous nonlinearities
were used in~\cite{Wittkopf2004,Wittkopf&Reid2000,Wittkopf&Reid2001}
for testing a software for solving overdetermined systems of PDE's.
Special attention was paid to the case of power modular nonlinearity
with potentials that are quadratic with respect to space variables.
In~\cite{Nikitin&Popovych2001}, a new method of group classification,
the so-called method of furcate splitting, was suggested,
which is especially efficient for group classification of classes of differential equations
whose arbitrary elements are functions of one or two arguments.
Using this advanced method, the complete group classification
of nonlinear Schr\"odinger equations of the form $i\psi_t+\Delta \psi+F(\psi,\psi^*)=0$
was carried out for an arbitrary space dimension.
An original method of generalized B\"acklund--Darboux transformations was applied 
in~\cite{Sakhnovich2001,Sakhnovich2006a,Sakhnovich2006b,Sakhnovich2008}
to finding exact solutions of the coupled matrix nonlinear Schr\"odinger equations,
including those with an external potential.
Exact solutions of semilinear radial Schr\"odinger equations with power modular nonlinearities 
were constructed in~\cite{anco2015c} via combining the method of group foliation with separation of variables.
% using the version of generalized B\"acklund--Darboux transformation that is based on operator identities.
Note that a collection of exact solutions of various nonlinear Schr\"odinger equations
was presented in~\cite{poly2012A}.
\looseness=-1

\looseness=-1
The algebraic method of group classification was first applied to Schr\"odinger equations
in~\cite{Gagnon&Winternitz1993}, where admissible transformations and Lie symmetries
for variable-coefficient generalizations of (1+1)-dimensional cubic Schr\"odinger equations
of the form $i\psi_t+G(t,x)\psi_{xx}+W(t,x)|\psi|^2\psi+V(t,x)\psi=0$
with complex-valued functions~$G$, $W$ and~$V$ of~$(t,x)$ with $\mathop{\rm Re}G\ne0$ and $\mathop{\rm Re}W\ne0$
were studied.
Exact solutions of such equations with $G=1$ were considered in~\cite{Belmonte-Beitia&Perez&Vekslerchick2008}.
Lie symmetries of more general variable-coefficient cubic-quintic nonlinear Schr\"odinger equations
were classified in~\cite{ozem2013a}.
Some results of~\cite{ride1993a} were extended in~\cite{Zhdanov&Roman2000}
via classifying (1+1)-dimensional nonlinear Schr\"odinger equations of the form
$i\psi_t+\psi_{xx}+F(t,x,\psi,\psi^*,\psi_x,\psi^*_x)=0$
with Lie-symmetry groups of dimensions one, two and three.
The equations invariant with respect to the Galilei group and its natural extensions
were selected among the listed ones,
which was used for symmetry classification of Galilei-invariant complex Doebner--Goldin models;
see also~\cite{Fushchych&Chopyk&Nattermann&Scherer1995}
for the classical Lie symmetry analysis of these models.
Group classification problems for various classes of nonlinear Schr\"odinger equations with modular nonlinearities and potentials
were solved on the series of papers
\cite{Ivanova2002,Ivanova&Popovych&Eshraghi2005,
Popovych&Eshraghi2004Mogran,Popovych&Ivanova&Eshraghi2004Cubic,
Popovych&Ivanova&Eshraghi2004Gamma}.
Such equations are important for applications and were intensively studied
within the framework of group analysis of differential equations,
theory of partial differential equations, etc.,
see e.g.
\cite{Baumann&Nonnenmacher1987,BialynickiSw2004,Doebner&Goldin&Nattermann1999,
Froehlich&Gustafson&Jonsson&Sigal2004,keng2007a,Perez-Garcia&Torres&Konotop2006}.
In fact, the consideration of the above group classification problems stimulated
revisiting the entire framework of the algebraic method of group classification,
which was based on the notion of normalized class of differential equations
\cite{popo2006a,Popovych2006ap,Popovych&Kunzinger&Eshragi2010}.

\looseness=-1
In the present paper, we study admissible point transformations within classes of multi-dimensional generalized nonlinear Schr\"odinger equations.
This allows us to classify Lie symmetries of multidimensional nonlinear Schr\"odinger equations with potentials and modular nonlinearities
using the algebraic method of group classification of differential equations.
We essentially generalize results of~\cite{Popovych&Kunzinger&Eshragi2010}
by considering an arbitrary space dimension and more general classes of nonlinear Schr\"odinger equations.
For this, we combine and further develop various techniques of modern group analysis of differential equations.
We apply
splitting into normalized subclasses,
gauging arbitrary elements by equivalence transformations,
looking for normalized superclasses
and singling out appropriate subalgebras of the corresponding equivalence algebras.
The basic notions and concepts used in this paper,
such as class of differential equations, equivalence groupoid, equivalence group, equivalence algebra, group classification
and normalization properties of a class of differential equations,
can be found in
\cite{bihl2012b,Kurujyibwami&Basarab-Horwath&Popovych2018,Opanasenko&Bihlo&Popovych2017,
popo2006a,Popovych&Kunzinger&Eshragi2010,Vaneeva&Bihlo&Popovych2020}
as well as in references therein.
See also \cite{blum2010A,Olver1986,Ovsiannikov1982}
for the general theory of group analysis of differential equations.

Looking for an appropriate normalized superclass for the class of nonlinear Schr\"odinger equations with potentials and modular nonlinearities,
we begin with the study of point transformations within the very wide class~$\mathscr N$ of generalized (1+$n$)-dimensional
($n\geqslant2$) nonlinear Schr\"odinger equations with ``variable mass'' of the form
\begin{gather}\label{general form of generalizedSchEqs}
i\psi_t+G(t,x,\psi,\psi^*,\nabla\psi,\nabla\psi^*)\psi_{aa}+F(t,x,\psi,\psi^*,\nabla\psi,\nabla\psi^*)=0,
\end{gather}
where $G$ and $F$ are smooth complex-valued functions of their arguments with~$G\ne 0$.
The class~$\mathscr N$ is the superclass for all classes of Schr\"odinger equations considered in the present paper.
This class is not normalized but can be partitioned into the two normalized subclasses~$\mathscr N_0$ and~$\bar{\mathscr N}_0$
singled out within the class $\mathscr N$ by the constraints~$G^*\ne-G$ and~$G^*=-G$, respectively.

\begin{notation*}
Throughout the paper,
$t$ and $x=(x_1,\dots,x_n)$ are the real independent variables,
$\psi$~is the unknown complex-valued function of $t$ and $x$, and $\rho:=|\psi|$.
For a complex value $\beta$,
the notation $\beta ^*$ denotes its conjugate, and we define
$\hat\beta=\beta$ if $T_t>0$ and $\hat\beta=\beta^*$ if $T_t<0$,
where $T$ is $t$-component of point transformations in the space with the coordinates $(t,x,\psi,\psi^*)$.
%\[\hat\beta=\beta \quad\mbox{if}\quad T_t>0  \quad\mbox{and}\quad \hat\beta=\beta^* \quad\mbox{if}\quad T_t<0.\]
Subscripts of functions denote differentiation with respect to the corresponding variables.
The indices $a$ and $b$ run from $1$ to $n$, the indices $\mu$ and $\nu$ run from $0$ to $n$, $x_0:=t$,
and we assume summation over repeated indices.
$\nabla\psi=(\psi_1,\dots,\psi_n)$ and $\nabla\psi^*=(\psi^*_1,\dots,\psi^*_n)$.
The total derivative operator $\mathrm D_\mu$ is defined as
$\mathrm D_\mu=\p_\mu+\psi_\mu\p_\psi+\psi^*_\mu\p_{\psi^*}+\psi_{\mu\nu}\p_{\psi_\nu}+\psi^*_{\mu\nu}\p_{\psi^*_\nu}+\cdots$.
Given a class of differential equations,
$\pi$ denotes the natural projection of the joint space of the variables and the arbitrary elements on the space of the variables only.
$E$ is the $n\times n$ identity matrix.
\end{notation*}

Choosing specific forms of $G$ and $F$, we further single out several subclasses from the normalized class~$\mathscr N_0$.
An important subclass is the normalized class~$\bar{\mathscr F}$ singled out from $\mathscr N_0$
by the condition that $G$ is a (nonzero) real constant.
The class~$\bar{\mathscr F}$ is mapped by a family of scalings of~$t$ and alternating the sign of this variable,
which are equivalence transformations of the class~$\bar{\mathscr F}$,
to its normalized subclass~$\mathscr F$ associated with the constraint $G=1$,
i.e., consisting of equations of the form
\begin{gather}\label{general form of generalizedSchEqs withconstantmass}
i\psi_t+\psi_{aa}+F(t,x,\psi,\psi^*,\nabla\psi,\nabla\psi^*)=0.
\end{gather}
We combine this consideration of the class~$\mathscr F$ with $n\geqslant2$
with earlier results on the class~$\mathscr F$ with $n=1$ from~\cite{Popovych&Kunzinger&Eshragi2010}
and thus further assume $n\geqslant1$.
We simultaneously constrain the arbitrary elements $G$ and $F$ by the equations $G=1$ and $F_{\psi_a}=F_{\psi^*_a}=0$
to single out the normalized class $\mathscr F_1$ whose equations are of the general form
\begin{gather}\label{general form of NLSchEqs withconstantmass}
i\psi_t+\psi_{aa}+F(t,x,\psi,\psi^*)=0.
\end{gather}
Additionally restricting $F$ to be $F=S(t,x,\rho)\psi$,
we obtain the normalized class~$\mathscr S$ of multidimensional
nonlinear Schr\"odinger equations of the form
\begin{equation}\label{eq:ClassS}
i\psi_t+\psi_{aa}+S(t,x,\rho)\psi=0,\quad S_\rho\ne 0,
\end{equation}
where $S$ is an arbitrary smooth complex-valued function of its arguments.
After a preliminary study of the Lie symmetries of equations from the class~$\mathscr S$,
we further constrain the arbitrary function $S$ by putting $S(t,x,\rho)=f(\rho)+V(t,x)$, $f_\rho\ne 0$.
This gives the non-normalized class~$\mathscr V$ of multidimensional nonlinear Schr\"odinger equations
with potentials and modular nonlinearities of the form
\begin{equation}\label{MNLinSchEqs_2}
i\psi_t+\psi_{aa}+f(\rho)\psi+V(t,x)\psi=0,\quad f_\rho\ne 0,
\end{equation}
where $V$ is an arbitrary smooth complex-valued potential depending on $t$ and $x$,
and $f$ is an arbitrary complex-valued nonlinearity depending only on $\rho$.
The class~$\mathscr S$ is a quite narrow and appropriate normalized superclass of the class~$\mathscr V$.
Using results on the class~$\mathscr S$, we partition the class~$\mathscr V$ into normalized subclasses,
study properties of Lie symmetries of equations from this class
and solve completely the group classification problem for the class~$\mathscr V$ with $n=2$.
We note that the case $n=1$ and the case $f=\rho^2$ with $n=2$ were studied in~\cite{Popovych&Kunzinger&Eshragi2010}.

The paper is organized as follows.

In Section~\ref{sect.pointtransformationsmultinonlin}
we compute the equivalence groupoid~$\mathcal G^\sim_\mathscr N$ of the class~$\mathscr N$.
This allows us to find the equivalence group~$G^\sim_\mathscr N$ of the class $\mathscr N$
and the equivalence groups~\smash{$G^\sim_{\mathscr N_0}$} and~\smash{$G^\sim_{\bar{\mathscr N}_0}$} of its subclasses~$\mathscr N_0$ and~$\bar{\mathscr N}_0$,
to check the normalization of these subclasses and to show that equations from these subclasses
are not related by point transformations.
As a result, we obtain a partition of the equivalence groupoid~$\mathcal G^\sim_\mathscr N$,
$\mathcal G^\sim_{\mathscr N}=\mathcal G^\sim_{\mathscr N_0}\bigsqcup\mathcal G^\sim_{\bar{\mathscr N}_0}$.

A hierarchy of normalized subclasses of~$\mathscr N_0$ is constructed in Section~\ref{sectionequivalence grooup}.
For each next class in the chain $\mathscr N_0\supset\tilde{\mathscr F}\supset\mathscr F\supset\mathscr F_1\supset\mathscr S$,
we use results on its directly preceding normalized superclass for describing the equivalence groupoid of this class,
which reduces to finding the equivalence group of this class and checking its normalization.
For the class~$\mathscr S$, we additionally derive its equivalence algebra~$\mathfrak g^\sim_\mathscr S$
as the set constituted by the infinitesimal generators of one-parameter subgroups
of its equivalence group~$G^\sim_{\mathscr S}$.

Section~\ref{sec:PreliminaryAnalysisOfLieSymsOfClassS} is devoted to analyzing
properties of the maximal Lie invariance algebras, $\mathfrak g_S$, of equations, $\mathcal L_S$, from the class~$\mathscr S$.
Applying the Lie infinitesimal criterion, we derive the system of determining equations
for Lie symmetries of equations~$\mathcal L_S$.
The analysis of this system yields the kernel invariance algebra~$\mathfrak g^\cap$, which is the intersection of all~$\mathfrak g_S$,
and a preliminary description of the algebras~$\mathfrak g_S$.
This description includes
the general form of elements of~$\mathfrak g_S$, the classifying equations for functions and constants parameterizing these elements,
the least upper bound of the dimensions of the algebras~$\mathfrak g_S$, which is equal to $n(n+3)/2+4$,
as well as estimates of dimensions of distinguished subalgebras of~$\mathfrak g_S$.

Section~\ref{sec:NSchEPMN} deals with the preliminary scheme of group classification for the class~$\mathscr V$
in the case of an arbitrary space dimension~$n$, which is based on the previous study of the class~$\mathscr S$.
We find the equivalence group of the class~$\mathscr V$ and show that this class, unlike the class~$\mathscr S$, is not normalized.
Then the class $\mathscr V$ is partitioned
into the disjoint normalized subclasses~$\mathscr V'$, $\mathscr P_0$ and~$\mathscr P_\lambda$, $\lambda\in\mathbb R_{\ne0}$,
which are respectively related, up to gauge equivalence transformations in~$\mathscr V$,
to general, logarithmic and power values of the arbitrary element~$f$, i.e.,
$f=f(\rho)$, $f=\delta\ln\rho$ and $f=\delta\rho^\lambda$,
where $\delta$ is an arbitrary complex constant.
The equivalence groups of these subclasses are constructed as well.
In order to study Lie symmetries of equations from these subclasses,
we treat the nonlinear terms in a special way:
by fixing $f(\rho)$ in $\mathscr V'$
and $\delta$ (after gauging to $|\delta|=1$ and $\delta_2\geqslant 0$ by scaling equivalence transformations) in~$\mathscr P_\lambda$, $\lambda\in\mathbb R$,
and then considering~$V$ as the only arbitrary element.
We refine estimates for the dimensions of maximal Lie invariance algebras and of their distinguished subalgebras,
which we obtain for equations from the superclass~$\mathscr S$, for each of the above normalized subclasses with specific nonlinearity~$f$.

The complete group classifications of these subclasses in the (1+2)-dimensional case are presented
in Sections~\ref{sec:NSchEPMNVf(1+2)D}--\ref{sec:NSchEPPowerMN(1+2)D}, respectively.
We introduce $\mathcal G^\sim_{\mathscr V}$-invariant parameters,
which depend on values of the arbitrary elements~$f$ and~$V$
and are related to the dimensions of distinguished subalgebras of the corresponding maximal Lie invariance algebras.
The ranges of these parameters can be estimated using the derived estimations of the subalgebra dimensions;
in fact, they take only small integer values.
We use them to distinguish subalgebras of the projections of the equivalence algebras of the above normalized subclasses
to the space with the coordinates $(t,x,\psi,\psi^*)$
that are appropriate as the maximal Lie invariance algebras of some equations from these classes.
Therefore, it is natural to split the solution of the group classification problems in these subclasses
into different cases depending on values of the introduced parameters.
In this way, we obtain a complete list of $\mathcal G^\sim_{\mathscr V}$-inequivalent Lie symmetry extensions in the class $\mathscr V$ with $n=2$
as the union of the group classification lists for the subclasses~$\mathscr V'$, $\mathscr P_0$ and $\mathscr P_\lambda$, $\lambda\in\mathbb R_{\ne0}$,
up to $G^\sim_{\mathscr V'}$-, $G^\sim_{\mathscr P_0}$- and $G^\sim_{\mathscr P_\lambda}$-equivalences, respectively.

As mentioned above, the results of the present paper generalize those given in~\cite{Popovych&Kunzinger&Eshragi2010}
for the (1+1)-dimensional nonlinear Schr\"odinger equations with potentials and modular nonlinearities
and for (1+2)-dimensional cubic Sch\-r\"odinger equations with potentials.
It is relevant to also complete the study for the case of spatial dimension three as the most important from the physical point of view.
Note that the complexity of group classification of nonlinear Schr\"odinger equations with potentials and modular nonlinearities
exponentially grows when the spatial dimension increases.

We intend to exploit the results of this paper to obtain Lie reductions and invariant solutions
for the class of nonlinear Schr\"odinger equations of the form~\eqref{MNLinSchEqs_2} in future publications.

\section[Admissible and equivalence transformations within the superclass]
{Admissible and equivalence transformations\\ within the superclass}\label{sect.pointtransformationsmultinonlin}

Using the direct method, we compute the equivalence groupoid~$\mathcal G^\sim_\mathscr N$
for the class $\mathscr N$ with $n\geqslant2$,
which consists of the equations of the form~\eqref{general form of generalizedSchEqs}.
In the course of this and other computations,
we should assume $(\psi,\psi^*)$ to be the tuple of the dependent variables
and extend the corresponding arbitrary-element tuple with the complex conjugates
of its components, e.g., $(G,F)$ with~$G^*$ and~$F^*$.
Confining a given differential function of $(\psi,\psi^*)$ to the solution set of an equation from the class $\mathscr N$,
we should supplement this equation by its complex conjugate,
thus considering the system of two equations instead of the single equation;
see~\cite[Section~1]{Kurujyibwami&Basarab-Horwath&Popovych2018}.
We find all point transformations of the general form
\begin{gather}\label{eq:GenPointTrans}
\varphi\colon\ \tilde t=T(t,x,\psi,\psi^*),\,\
\tilde x_a =X^a(t,x,\psi,\psi^*),\,\
\tilde\psi=\Psi(t,x,\psi,\psi^*),\,\
\tilde\psi^*=\Psi^*(t,x,\psi,\psi^*)
\end{gather}
with ${\rm d}T\wedge{\rm d}X^1\wedge\dots\wedge{\rm d}X^n\wedge{\rm d}\Psi\wedge{\rm d}\Psi^*\ne 0$,
that map a fixed equation~$\mathcal L_{GF}$ from the class~$\mathscr N$
to an equation~$\mathcal L_{\tilde G\tilde F}$,
\begin{equation*}
i\tilde \psi_{\tilde t}+
\tilde G(\tilde t,\tilde x,\tilde\psi,\tilde\psi^*,\tilde\nabla\tilde\psi,\tilde\nabla\tilde\psi^*)\tilde\psi_{\tilde x_a\tilde x_a}
+\tilde F(\tilde t,\tilde x,\tilde\psi,\tilde\psi^*,\tilde\nabla\tilde\psi,\tilde\nabla\tilde\psi^*)=0,
\end{equation*}
from the same class.

\begin{theorem}\label{thm:ClassNEquivGroupoid}
The equivalence groupoid~$\mathcal G^\sim_\mathscr N$ of the class $\mathscr N$ with $n\geqslant2$
is constituted by triples of the form $((G,F),\varphi,(\tilde G,\tilde F))$.
Here $\varphi$ is a point transformation of the form~\eqref{eq:GenPointTrans}
whose components satisfy the equations
\begin{subequations}\label{eq:ClassNEquivGroupoid}
\begin{gather}\label{eq:ClassNEquivGroupoidA}
T_{\psi}=T_{\psi^*}=0,\quad T_a=0, \quad
X^a_\psi=X^a_{\psi^*}=0,\quad
X^b_aX^c_a=H\delta_{bc},
\end{gather}
for some positive smooth real-valued function~$H$ of $(t,x)$, $T_t\ne0$, $\delta_{bc}$ is the Kronecker delta and,
\begin{gather*}
\mbox{if}\quad G^*\ne-G, \quad\mbox{then}\quad \Psi_\psi\Psi_{\psi^*}= 0.
\end{gather*}
The transformed arbitrary elements~$\tilde G$ and~$\tilde F$ are given by
\begin{gather}\label{eq:ClassNEquivGroupoidB}
\begin{split}
&\tilde G= \dfrac{H}{T_t}
\left\{
\begin{array}{lll}
G       &\mbox{if}&\Psi_\psi \ne0,\\[1ex]
(-G^*)  &\mbox{if}&\Psi_{\psi^*}\ne 0,
\end{array}
\right.
\end{split}
\\[1ex]
 \label{eq:ClassNEquivGroupoidC}
 \tilde F=\dfrac{\Psi_\psi}{T_t}F-\dfrac{\Psi_{\psi^*}}{T_t} F^*+
\left(\hat \Delta\Psi+\dfrac{n-2}2\frac{H_a}H\mathrm D_a\Psi\right)\dfrac{\tilde G}{H}
-\dfrac i{T_t}\left(\Psi_t-X^b_t\dfrac{X^b_a}H\mathrm D_a\Psi\right),
\end{gather}
\end{subequations}
where
$\hat \Delta:=\p_{aa}+2\psi_a\p_{a\psi}+2\psi^*_a\p_{a\psi^*}
+\psi_a\psi_a\p_{\psi\psi}+2\psi_a\psi^*_a \p_{\psi\psi^*}+\psi^*_a\psi^*_a\p_{\psi^*\psi^*}$.
\end{theorem}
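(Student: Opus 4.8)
The plan is to apply the direct method on the second-order jet space, treating $(\psi,\psi^*)$ as a pair of dependent variables so that, as explained before the statement, the single equation $\mathcal L_{GF}$ is replaced by the system consisting of $\mathcal L_{GF}$ and its complex conjugate. First I would prolong the general transformation~\eqref{eq:GenPointTrans} to second order, expressing $\tilde\psi_{\tilde t}$ and $\tilde\psi_{\tilde x_a\tilde x_a}$ through the total derivative operators $\mathrm D_\mu$ and the inverse of the space-time Jacobian $M$ whose entries are $\mathrm D_\mu T$ and $\mathrm D_\mu X^a$. Substituting these into the target equation $i\tilde\psi_{\tilde t}+\tilde G\tilde\psi_{\tilde x_a\tilde x_a}+\tilde F=0$, re-expressing everything in the old jet variables, and requiring the result to vanish on the solution set of the source system, I reduce the problem to the multiplier identity $i\tilde\psi_{\tilde t}+\tilde G\tilde\psi_{\tilde x_a\tilde x_a}+\tilde F=\lambda\,\mathcal L_{GF}+\mu\,\mathcal L_{GF}^*$ with jet-function multipliers $\lambda,\mu$; matching coefficients of the jet variables then yields the determining equations.

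The core of the argument is the splitting with respect to second-order derivatives. Since $M^{-1}$ is free of second derivatives, only $\tilde G\tilde\psi_{\tilde x_a\tilde x_a}$ contributes genuine second-order jet coordinates, its second-order part being $\tilde G(M^{-1})_{a\mu}(M^{-1})_{a\nu}(\Psi_\psi\psi_{\mu\nu}+\Psi_{\psi^*}\psi^*_{\mu\nu})$ together with the terms coming from $\mathrm D_\nu(M^{-1})_{a\mu}$. Because the source system contains, among all second derivatives, only the Laplacians $\psi_{aa}$ and $\psi^*_{aa}$, every other second-order coefficient must vanish. Annihilating the coefficients of the time-involving second derivatives $\psi_{tt},\psi_{ta}$ (and their conjugates) forces $T_\psi=T_{\psi^*}=0$, $T_a=0$ and $X^a_\psi=X^a_{\psi^*}=0$; after this $M$ depends on $(t,x)$ alone, $\tilde{\mathrm D}_{\tilde x_a}$ reduces to a combination of spatial $\mathrm D_b$ only, and $T_t\ne0$ with $H>0$ follows from nondegeneracy. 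Annihilating the trace-free part of the coefficient of the spatial Hessian $\psi_{ab}$ then forces the contracted matrix $(M^{-1})_{a\mu}(M^{-1})_{a\nu}$ restricted to spatial indices to be proportional to the identity, which is exactly the conformality relation $X^b_aX^c_a=H\delta_{bc}$.

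With $M$ jet-independent, matching the coefficients of the surviving second derivatives determines the branching and $\tilde G$. The coefficients of $\psi_t$ and $\psi^*_t$ in $i\tilde\psi_{\tilde t}$ fix $\lambda=\Psi_\psi/T_t$ and $\mu=-\Psi_{\psi^*}/T_t$; the reduced principal symbol then contributes $\tilde G\Psi_\psi/H$ as the coefficient of $\psi_{aa}$ and $\tilde G\Psi_{\psi^*}/H$ as that of $\psi^*_{aa}$. Comparing with the right-hand sides $\lambda G$ and $\mu G^*$ gives $\tilde G=(H/T_t)G$ when $\Psi_\psi\ne0$ and $\tilde G=-(H/T_t)G^*$ when $\Psi_{\psi^*}\ne0$. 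If both held simultaneously we would obtain $G^*=-G$; hence the constraint $G^*\ne-G$ forces $\Psi_\psi\Psi_{\psi^*}=0$, which is precisely the stated dichotomy and formula~\eqref{eq:ClassNEquivGroupoidB}.

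Finally, collecting the remaining first- and zeroth-order terms yields~\eqref{eq:ClassNEquivGroupoidC}: the terms $\lambda F+\mu F^*$ supply $\tfrac{\Psi_\psi}{T_t}F-\tfrac{\Psi_{\psi^*}}{T_t}F^*$, the lower-order part of $\tilde\psi_{\tilde x_a\tilde x_a}$ governed by $\mathrm D_\nu(M^{-1})_{a\mu}$ produces $\hat\Delta\Psi$ and the conformal-weight correction $\tfrac{n-2}2\tfrac{H_a}H\mathrm D_a\Psi$, scaled by $\tilde G/H$, while the first-order part of $i\tilde\psi_{\tilde t}$ contributes the $\Psi_t$ and $X^b_t$ terms, and the hat on $\hat\Delta$ records the conjugation forced by time reversal when $T_t<0$. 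I expect the main obstacle to be the second prolongation $\tilde\psi_{\tilde x_a\tilde x_a}$ while $T$ and $X^a$ are still permitted to depend on $\psi$ and $\psi^*$: then $M^{-1}$ carries first-order jet dependence, so $\tilde{\mathrm D}_{\tilde x_a}$ injects extra second-order contributions through $\mathrm D_\nu(M^{-1})_{a\mu}$ that must be disentangled before the splitting becomes transparent; once the projectability of $T$ and $X^a$ is established, the remaining computation is routine bookkeeping.
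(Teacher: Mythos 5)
Your proposal is correct and follows essentially the same route as the paper's proof: the direct method with splitting first with respect to the second-order derivatives (yielding the projectability of $T$ and $X^a$ and the conformality condition $X^b_aX^c_a=H\delta_{bc}$), then with respect to the first-order time derivatives (yielding~$\tilde G$ and the dichotomy $\Psi_\psi\Psi_{\psi^*}=0$ when $G^*\ne-G$), and finally collecting the remaining terms to obtain~$\tilde F$. The only organizational difference is the direction of substitution: you express the transformed derivatives in the source jet variables and match coefficients in a multiplier identity, whereas the paper inverts the first-order relations (via the functions $W$ and $Y$) to express the source derivatives through the target ones and splits with respect to the tilded second derivatives; these are equivalent bookkeeping choices within the same method.
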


\begin{proof}
Let $\varphi$ be a point transformation of the form~\eqref{eq:GenPointTrans} connecting two equations $\mathcal L_{GF}$
and $\mathcal L_{\tilde G\tilde F}$ from the class~$\mathscr N$.
Applying the total derivative operators $\mathrm D_\mu$'s to~$\tilde \psi(\tilde t,\tilde x)=\Psi(t,x,\psi,\psi^*)$
and $\tilde \psi^*(\tilde t,\tilde x)=\Psi^*(t,x,\psi,\psi^*)$,
we obtain the equations
\[
\tilde \psi_{\tilde x_\nu}\mathrm D_\mu X^\nu=\mathrm D_\mu\Psi,\quad
\tilde \psi^*_{\tilde x_\nu}\mathrm D_\mu X^\nu=\mathrm D_\mu\Psi^*,
\]
where we denote $X^0:=T$, and $x_0:=t$. After rearranging their terms, these equations take the form
\begin{gather*}
(\Psi_\psi-\tilde \psi_{\tilde x_\nu}X^\nu_\psi)\psi_\mu+(\Psi_{\psi^*}-\tilde \psi_{\tilde x_\nu}X^\nu_{\psi^*})\psi^*_\mu=-(\Psi_\mu-\tilde \psi_{\tilde x_\nu}X^\nu_\mu),\\
(\Psi^*_\psi-\tilde \psi^*_{\tilde x_\nu}X^\nu_\psi)\psi_\mu+(\Psi^*_{\psi^*}-\tilde \psi^*_{\tilde x_\nu}X^\nu_{\psi^*})\psi^*_\mu=-(\Psi^*_\mu-\tilde \psi^*_{\tilde x_\nu}X^\nu_\mu).
\end{gather*}
Denoting $W=W(t,x,\psi,\psi^*,\tilde\nabla\tilde\psi):=\Psi-\tilde\psi_{\tilde x_\nu}X^\nu$,
we solve the above equations with respect to~$\psi_\mu$ and~$\psi^*_\mu$:
\begin{gather}\label{firstderivatives}
\psi_\mu=-\frac{W_\mu W^*_{\psi^*}-W^*_\mu W_{\psi^*}}Y,\quad \psi^*_\mu=-\frac{W_\psi W^*_\mu-W^*_{\psi}W_\mu}Y,
\end{gather}
where
\begin{equation}\label{Yequation}
\begin{split}
Y:={}& W_\psi W^*_{\psi^*}-W^*_\psi W_{\psi^*}\\
={}&\left|
\begin{array}{lll}
 \Psi_\psi & \Psi_{\psi^*}\\
  \Psi^*_\psi &\Psi^*_{\psi^*}
\end{array}
\right|
-\left|
\begin{array}{lll}
 X^\nu_\psi &  X^\nu_{\psi^*}\\
  \Psi^*_\psi &\Psi^*_{\psi^*}
\end{array}
\right|\tilde\psi_{\tilde x_\nu}
-\left|
\begin{array}{lll}
 \Psi_\psi & \Psi_{\psi^*}\\
  X^\mu_\psi &X^\mu_{\psi^*}
\end{array}
\right|\tilde\psi^*_{\tilde x_\mu}
+\left|
\begin{array}{lll}
 X^\nu_\psi & X^\nu_{\psi^*}\\
X^\mu_\psi &X^\mu_{\psi^*}
\end{array}
\right|\tilde\psi_{\tilde x_\nu}\tilde\psi^*_{\tilde x_\mu}.
\end{split}
\\[1ex]
\end{equation}
Note that $Y\ne0$. Indeed, if $Y=0$, then each of the determinants in~\eqref{Yequation} vanishes,
which implies that the tuples $(T_\psi, X^1_\psi,\dots, X^n_\psi,\Psi_\psi,\Psi^*_\psi)$ and
$(T_{\psi^*},X^1_{\psi^*},\dots, X^n_{\psi^*},\Psi_{\psi^*},\Psi^*_{\psi^*})$ are linearly dependent,
leading to a contradiction of the invertibility of the point transformation.

We compute $\psi_{aa}$ using the representations~\eqref{firstderivatives} for~$\psi_a$:
\begin{gather*}
\begin{split}
\psi_{aa}&=\mathrm D_a\psi_a
=\psi_a\tilde \psi_{\tilde x_\mu}\mathrm D_a\tilde\psi_{\tilde x_\mu}+\psi_a\tilde \psi^*_{\tilde x_\mu}\mathrm D_a\tilde\psi^*_{\tilde x_\mu}+R
\\ 
&=\frac 1Y(\mathrm D_aX^\mu)(\mathrm D_aX^\nu)(\tilde\psi_{\tilde x_\mu\tilde x_\nu}W^*_{\psi^*}-\tilde\psi^*_{\tilde x_\mu\tilde x_\nu}W_{\psi^*})+R,
\end{split}
\end{gather*}
where $R$ is an expression not involving second derivatives of $\psi$ and $\psi^*$,
the precise form of which is not essential.
We substitute the derived expression for~$\psi_{aa}$ into the equation~$\mathcal L_{GF}$
and then substitute for $\tilde\psi_{\tilde x_1\tilde x_1}$ and $\tilde\psi^*_{\tilde x_1\tilde x_1}$
in view of the equation~$\mathcal L_{\tilde G\tilde F}$ and its conjugate.
The obtained equation~$\hat{\mathcal L}$ can be split with respect to $\tilde \psi_{\tilde t\tilde t}$ and $\tilde \psi^*_{\tilde t\tilde t}$.
Collecting the coefficients of the first degrees of these derivatives leads to
the equations $W^*_{\psi^*}(\mathrm D_aT)(\mathrm D_aT)=0$ and $W_{\psi^*}(\mathrm D_aT)(\mathrm D_aT)=0$.
Since $(W^*_{\psi^*},W_{\psi^*})\ne(0,0)$ in view of $Y\ne0$,
these equations obviously imply $(\mathrm D_aT)(\mathrm D_aT)=0$.
The last equation can be split with respect to~$\psi_a$ and~$\psi^*_a$,
which leads to $T_\psi=T_{\psi^*}=0$, $T_aT_a=0$. Hence $T_a=0$ and $T_t\ne0$.
Since $n\geqslant2$, we can also split the equation~\smash{$\hat{\mathcal L}$}
with respect to~$\tilde \psi_{\tilde x_b\tilde x_c}$ and~$\tilde \psi^*_{\tilde x_b\tilde x_c}$ with $(b,c)\ne(1,1)$.
We collect the coefficients of the first degrees of these derivatives.
Taking into account the inequality $(W^*_{\psi^*},W_{\psi^*})\ne(0,0)$,
we derive the equations $(\mathrm D_aX^b)(\mathrm D_aX^c)=0$, $b\ne c$,
and $(\mathrm D_aX^1)(\mathrm D_aX^1)=\dots=(\mathrm D_aX^n)(\mathrm D_aX^n)$,
which can further be split with respect to~$\psi_a$ and~$\psi^*_a$.
The resulting system includes the equations
\begin{gather*}
X^b_\psi X^c_\psi=0,\quad X^b_{\psi^*}X^c_{\psi^*}=0,\quad b\ne c,\quad
(X^1_\psi)^2=\dots=(X^n_\psi)^2,\quad
(X^1_{\psi^*})^2=\dots=(X^n_{\psi^*})^2, \\
X^b_aX^c_a=0,\quad b\ne c,\quad X^1_aX^1_a=\dots=X^n_aX^n_a
\end{gather*}
implying $X^a_\psi=X^a_{\psi^*}=0$ and $X^b_aX^c_a=H\delta_{bc}$ for some positive smooth real-valued function~$H$ of $(t,x)$.
Using these results for $T$ and $X^a$ in~\eqref{firstderivatives}
we obtain $Y=\Psi_\psi\Psi^*_{\psi^*}-\Psi^*_\psi\Psi_{\psi^*}$,
\begin{gather*}
\psi_\mu=
\frac{X^\nu_\mu\Psi^*_{\psi^*}}Y \tilde\psi_{\tilde x_\nu}
-\frac{X^\nu_\mu\Psi_{\psi^*}}Y \tilde\psi^*_{\tilde x_\nu}
-\frac{\Psi_\mu\Psi^*_{\psi^*}-\Psi^*_\mu\Psi_{\psi^*}}Y,\\
\psi_{aa} =\frac HY\left(\Psi^*_{\psi^*}\tilde \psi_{\tilde x_b\tilde x_b}-\Psi_{\psi^*}\tilde \psi^*_{\tilde x_b\tilde x_b}\right)
+\tilde\psi_{\tilde x_b} \mathrm D_a\dfrac{X^b_a\Psi^*_{\psi^*}}Y
-\tilde\psi^*_{\tilde x_b}\mathrm D_a\dfrac{X^b_a\Psi_{\psi^*}}Y-\mathrm D_a\dfrac{\Psi_a\Psi^*_{\psi^*}-\Psi^*_a\Psi_{\psi^*}}Y.\!
\end{gather*}
Due to the above  expressions for~$\psi_t$ and~$\psi_{aa}$ we can easily split the equation~\smash{$\hat{\mathcal L}$}
with respect to $\tilde\psi_{\tilde t}$ and $\tilde\psi^*_{\tilde t}$.
Collecting the coefficients of the first degrees of these derivatives gives
\[
(HG-T_t\tilde G)\Psi^*_{\psi^*}=0,\quad
(HG+T_t\tilde G^*)\Psi_{\psi^*}=0,
\]
and thus the condition~\eqref{eq:ClassNEquivGroupoidB} holds
since $(\Psi^*_{\psi^*},\Psi_{\psi^*})\ne(0,0)$.
If $\Psi^*_{\psi^*}\Psi_{\psi^*}\ne 0$, then $HG=T_t\tilde G=-T_t\tilde G^*$ and $HG=T_t\tilde G=-HG^*$,
i.e., $\tilde G=-\tilde G^*$, $G=-G^*$.
Finally, collecting the remaining terms gives the relation between~$F$ and $\tilde F$.
\end{proof}

\begin{remark}
The determining equations~\eqref{eq:ClassNEquivGroupoidA} for elements
of the equivalence groupoid~$\mathcal G^\sim_\mathscr N$
represent the principal structure properties of equations
from the class $\mathscr N$ with $n\geqslant2$
or, more precisely, of systems that each consists of such an equation and its complex conjugate.
Thus, the equations $T_{\psi}=T_{\psi^*}=T_a=0$,
$X^a_\psi=X^a_{\psi^*}=0$ and $X^b_aX^c_a=H\delta_{bc}$
are associated with the evolution form of these systems, their quasi-linearity
and the involvement of the derivatives of~$\psi$ and~$\psi^*$ of the highest (second) order via Laplacian,
respectively; cf.\ \cite{blum1990c, Kingston&Sophocleous1998} for the case of single equations.
\end{remark}

\begin{corollary}\label{CorolEquivalence group}
The class~$\mathscr N$ is not normalized.
Its equivalence group~$G^\sim_{\mathscr N}$ consists of the point transformations
in the space of $(t,x,\psi,\psi^*,\nabla\psi,\nabla\psi^*, G,G^*,F,F^*)$,%
\footnote{%
According to the definition of equivalence groups, the group~$G^\sim_{\mathscr N}$
acts in the space with the coordinates $(t,x,\psi_{(2)},\psi^*_{(2)}, G,G^*,F,F^*)$,
where the subscript ``$(2)$'' of a dependent variable denotes the collection of jet coordinates corresponding to
the derivatives of this dependent variable up to order two, including the dependent variable itself as its zeroth-order derivative.
At~the same time, the arbitrary elements~$F$ and~$G$ depend only on the jet variables $(t,x,\psi,\psi^*,\nabla\psi,\nabla\psi^*)$.
In view of Theorem~\ref{thm:ClassNEquivGroupoid}, admissible transformations in the class~$\mathscr N$
preserve the subspace of the second-order jet space $\mathrm J^2(t,x|\psi,\psi^*)$ with these coordinates.
Hence, we can restrict the space underlying the group~$G^\sim_{\mathscr N}$
and thus assume that its elements act in the space with the coordinates $(t,x,\psi,\psi^*,\nabla\psi,\nabla\psi^*)$.
It~suffices to present only the transformation components for $(t,x,\psi)$ and for the arbitrary elements.
The transformation components for derivatives of~$\psi$ are constructed from the $(t,x,\psi)$-components
by the standard prolongation using the chain rule.
The transformation components for derivatives of~$\psi^*$ and for the complex conjugates of complex-valued arbitrary elements
are obtained by conjugating their counterparts for derivatives of~$\psi$ and for the corresponding arbitrary elements,
cf.~\cite[Section~1]{Kurujyibwami&Basarab-Horwath&Popovych2018}.
We use the same approach for all the subclasses of~$\mathscr N$ and their reparameterizations
considered in the present paper.
}
where the components for $t$, $x$ and $\psi$ are of the form~\eqref{eq:GenPointTrans}
with~$T$, $X^a$ and~$\Psi$ satisfying the equations~\eqref{eq:ClassNEquivGroupoidA} and $\Psi_\psi\Psi_{\psi^*}=0$,
and the components for~$G$ and~$F$ are of the form~\eqref{eq:ClassNEquivGroupoidB} and~\eqref{eq:ClassNEquivGroupoidC},
respectively.
\end{corollary}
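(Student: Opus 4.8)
The plan is to read the equivalence group directly off the complete description of the equivalence groupoid in Theorem~\ref{thm:ClassNEquivGroupoid}, and then to exhibit an admissible transformation lying outside the action of this group.

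First I would recall that the (usual) equivalence group $G^\sim_{\mathscr N}$ consists of exactly those admissible point transformations whose components for $(t,x,\psi)$ are independent of the arbitrary elements $(G,F)$ and that map \emph{every} equation of the class $\mathscr N$ to an equation of the same class. The equations~\eqref{eq:ClassNEquivGroupoidA} are already independent of $(G,F)$ and hence pass to the group unchanged. The only delicate point is the conditional clause of the theorem: since an equivalence-group element must act consistently on the entire class, in particular on equations with $G^*\ne -G$ (which do belong to $\mathscr N$), the implication ``$G^*\ne -G \Rightarrow \Psi_\psi\Psi_{\psi^*}=0$'' forces the constraint $\Psi_\psi\Psi_{\psi^*}=0$ to hold \emph{unconditionally} for every group element. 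Together with~\eqref{eq:ClassNEquivGroupoidA} this yields precisely the stated defining system for the $(t,x,\psi)$-components, and the induced action on the arbitrary elements is then read off from~\eqref{eq:ClassNEquivGroupoidB}--\eqref{eq:ClassNEquivGroupoidC}. One should also check that these transformations form a group, i.e.\ that~\eqref{eq:ClassNEquivGroupoidA} together with the disjunction $\Psi_\psi\Psi_{\psi^*}=0$ is preserved under composition and inversion, which is routine given the groupoid description.

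To prove that $\mathscr N$ is not normalized, I would exhibit an admissible transformation that is not the restriction of any element of $G^\sim_{\mathscr N}$. Its source must be an equation with $G^*=-G$, i.e.\ one from the subclass $\bar{\mathscr N}_0$: for such $G$ the theorem imposes no constraint coupling $\Psi_\psi$ and $\Psi_{\psi^*}$, so one may choose $\Psi$ with both $\Psi_\psi\ne0$ and $\Psi_{\psi^*}\ne0$. Concretely, fix a purely imaginary $G$ (say $G=i$), pick $T$, $X^a$ satisfying~\eqref{eq:ClassNEquivGroupoidA} and a $\Psi$ with $\Psi_\psi\Psi_{\psi^*}\ne0$, and define $\tilde G$, $\tilde F$ through~\eqref{eq:ClassNEquivGroupoidB}--\eqref{eq:ClassNEquivGroupoidC}. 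By Theorem~\ref{thm:ClassNEquivGroupoid} the resulting triple is a genuine element of $\mathcal G^\sim_{\mathscr N}$, and its target again lies in $\mathscr N$ (indeed in $\bar{\mathscr N}_0$, since in the case $\Psi_\psi\Psi_{\psi^*}\ne0$ the relation~\eqref{eq:ClassNEquivGroupoidB} forces $\tilde G^*=-\tilde G$). Because every element of $G^\sim_{\mathscr N}$ obeys $\Psi_\psi\Psi_{\psi^*}=0$, this admissible transformation cannot be a restriction of any equivalence-group element; hence the equivalence groupoid is strictly larger than the action groupoid of $G^\sim_{\mathscr N}$, and $\mathscr N$ is not normalized.

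The main obstacle is the careful treatment of the conditional constraint. One must justify on the one hand why the \emph{group}, as opposed to an individual admissible transformation, is obliged to satisfy $\Psi_\psi\Psi_{\psi^*}=0$ uniformly, and on the other hand why dropping this constraint over the subclass $\bar{\mathscr N}_0$ yields bona fide admissible transformations that escape the group. Once this dichotomy is settled, the remaining verifications — the group axioms and the membership of the exhibited target in $\mathscr N$ — follow directly from the groupoid description already established in Theorem~\ref{thm:ClassNEquivGroupoid}.
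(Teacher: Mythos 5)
Your proposal is correct and follows the same route the paper intends: the corollary is read off directly from Theorem~\ref{thm:ClassNEquivGroupoid}, with the conditional constraint becoming $\Psi_\psi\Psi_{\psi^*}=0$ unconditionally for group elements (since they must act on equations with $G^*\ne-G$), and non-normalization witnessed by an admissible transformation between equations of $\bar{\mathscr N}_0$ with $\Psi_\psi\Psi_{\psi^*}\ne0$. Both your handling of the dichotomy and the verification that the exhibited target stays in $\mathscr N$ are sound.
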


Since the class~$\mathscr N$ is not normalized,
we partition it into two disjoint subclasses $\mathscr N_0$ and~$\bar{\mathscr N}_0$,
which are singled out by the constraints~$G^*\ne-G$ and~$G^*=-G$, respectively.
It is clear that there are no point transformations mapping equations from the class~$\mathscr N_0$
to equations from the class~$\bar{\mathscr N}_0$.
We can obtain their corresponding equivalence groupoids~$\mathcal G^\sim_{\mathscr N_0}$
and $\mathcal G^\sim_{\bar{\mathscr N}_0}$, and as well as their equivalence groups~$G^\sim_{\mathscr N_0}$
and $G^\sim_{\bar{\mathscr N}_0}$ from Theorem~\ref{thm:ClassNEquivGroupoid}.

\begin{corollary}\label{cor:ClassN0}
The class~$\mathscr N$ is partitioned by the constraints $G^*\ne-G$ and $G^*=-G$
into the two disjoint subclasses~$\mathscr N_0$ and~$\bar{\mathscr N}_0$, respectively,
which are normalized and are not related by point transformations.
The equivalence group~$G^\sim_{\mathscr N_0}$ coincides
with the equivalence group~$G^\sim_{\mathscr N}$ of the entire class~$\mathscr N$.
The equivalence group~$G^\sim_{\bar{\mathscr N}_0}$ consists of the point transformations
in the space of $(t,x,\psi,\psi^*,\nabla\psi,\nabla\psi^*, G,G^*,F,F^*)$,
where the components for $t$, $x$ and $\psi$ are of the form~\eqref{eq:GenPointTrans}
with~$T$, $X^a$ and~$\Psi$ satisfying the equations~\eqref{eq:ClassNEquivGroupoidA},
and the components for~$G$ and~$F$ are of the form~\eqref{eq:ClassNEquivGroupoidB} and~\eqref{eq:ClassNEquivGroupoidC}.
\end{corollary}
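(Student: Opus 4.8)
The plan is to deduce everything directly from Theorem~\ref{thm:ClassNEquivGroupoid}, which already gives a complete description of $\mathcal G^\sim_{\mathscr N}$; the corollary merely reorganizes this description according to the dichotomy $G^*=-G$ versus $G^*\ne-G$. The only computation carrying real content is the verification that this dichotomy is preserved by every admissible transformation. To this end I would look at~\eqref{eq:ClassNEquivGroupoidB}: in either branch the multiplier $H/T_t$ is a nonzero \emph{real} number, so $\tilde G+\tilde G^*=(H/T_t)(G+G^*)$ when $\Psi_\psi\ne0$ and $\tilde G+\tilde G^*=-(H/T_t)(G+G^*)$ when $\Psi_{\psi^*}\ne0$. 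In both cases $\tilde G^*=-\tilde G$ holds if and only if $G^*=-G$. Hence no point transformation can connect an equation with $G^*\ne-G$ to one with $\tilde G^*=-\tilde G$, which simultaneously establishes that $\mathscr N_0$ and $\bar{\mathscr N}_0$ are disjoint, are not related by point transformations, and that the groupoid splits as $\mathcal G^\sim_{\mathscr N}=\mathcal G^\sim_{\mathscr N_0}\sqcup\mathcal G^\sim_{\bar{\mathscr N}_0}$.

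For the normalization of $\mathscr N_0$ I would restrict the theorem to source equations with $G^*\ne-G$. The theorem then forces, in addition to~\eqref{eq:ClassNEquivGroupoidA}, the constraint $\Psi_\psi\Psi_{\psi^*}=0$, which is exactly the constraint defining $G^\sim_{\mathscr N}$ in Corollary~\ref{CorolEquivalence group}, while the transformed arbitrary elements are prescribed by~\eqref{eq:ClassNEquivGroupoidB}--\eqref{eq:ClassNEquivGroupoidC}. Since these constraints do not reference the particular source equation within $\mathscr N_0$, and (by the invariance just shown) the target again lies in $\mathscr N_0$, every admissible transformation of $\mathscr N_0$ extends to an element of $G^\sim_{\mathscr N}$, and conversely every element of $G^\sim_{\mathscr N}$ restricts to such a transformation. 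This two-sided matching gives $G^\sim_{\mathscr N_0}=G^\sim_{\mathscr N}$ and identifies $\mathcal G^\sim_{\mathscr N_0}$ with the action groupoid of $G^\sim_{\mathscr N}$, i.e.\ normalization.

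For $\bar{\mathscr N}_0$ I would again restrict the theorem, now to source equations with $G^*=-G$; here the extra constraint $\Psi_\psi\Psi_{\psi^*}=0$ is \emph{absent}, so the admissible transformations are precisely those satisfying~\eqref{eq:ClassNEquivGroupoidA} with $\tilde G,\tilde F$ given by~\eqref{eq:ClassNEquivGroupoidB}--\eqref{eq:ClassNEquivGroupoidC}, which is exactly the set proposed for $G^\sim_{\bar{\mathscr N}_0}$. The delicate point---and the step I expect to be the main obstacle---is to check that this set acts single-valuedly on $\bar{\mathscr N}_0$: when both $\Psi_\psi\ne0$ and $\Psi_{\psi^*}\ne0$ the two branches of~\eqref{eq:ClassNEquivGroupoidB} are simultaneously active and one must confirm that they agree. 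This is exactly where $G^*=-G$ enters, since then $-(H/T_t)G^*=(H/T_t)G$, so $\tilde G=(H/T_t)G$ is unambiguous; together with $\tilde G\ne0$ (as $H>0$, $T_t\ne0$, $G\ne0$) and $\tilde G^*=-\tilde G$, this shows the action is well defined and maps $\bar{\mathscr N}_0$ into itself. Finally I would note that the constraints~\eqref{eq:ClassNEquivGroupoidA} are closed under composition and inversion and are independent of $(G,F)$, so the proposed set is genuinely a group and $\mathcal G^\sim_{\bar{\mathscr N}_0}$ is its action groupoid, completing the normalization of $\bar{\mathscr N}_0$.
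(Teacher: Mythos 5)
Your proposal is correct and follows exactly the route the paper intends: the corollary is stated as an immediate consequence of Theorem~\ref{thm:ClassNEquivGroupoid}, and your argument is the natural unwinding of that theorem — the reality of $H/T_t$ gives $\tilde G+\tilde G^*=\pm(H/T_t)(G+G^*)$, so the sign condition on $G+G^*$ is invariant, and the remaining constraints on $(T,X^a,\Psi)$ are source-independent, which yields normalization of both subclasses and the identification of the two equivalence groups. You also correctly isolate and resolve the only genuinely delicate point, namely that the two branches of~\eqref{eq:ClassNEquivGroupoidB} coincide on $\bar{\mathscr N}_0$ precisely because $-G^*=G$ there.
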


In other words, the partition $\mathscr N=\mathscr N_0\bigsqcup\bar{\mathscr N}_0$ of the class~$\mathscr N$
leads to a partition of its equivalence groupoid~$\mathcal G^\sim_{\mathscr N}$,
$\mathcal G^\sim_{\mathscr N}=\mathcal G^\sim_{\mathscr N_0}\bigsqcup\mathcal G^\sim_{\bar{\mathscr N}_0}$,
where the equivalence groupoids~$\mathcal G^\sim_{\mathscr N_0}$ and~$\mathcal G^\sim_{\bar{\mathscr N}_0}$
are generated by the corresponding equivalence groups~$G^\sim_{\mathscr N_0}$ and~$G^\sim_{\bar{\mathscr N}_0}$.

\begin{corollary}\label{cor:SubclassesOfN}
If a subclass of the class~$\mathscr N$ intersects both the subclasses~$\mathscr N_0$ and~$\bar{\mathscr N}_0$,
then its equivalence group is a subgroup of~$G^\sim_{\mathscr N}$.
\end{corollary}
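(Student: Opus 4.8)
The plan is to argue that every element of $G^\sim_{\mathscr M}$ already satisfies the defining conditions of $G^\sim_{\mathscr N}$ recorded in Corollary~\ref{CorolEquivalence group}, the decisive point being that the presence in~$\mathscr M$ of an equation from~$\mathscr N_0$ forces the extra constraint $\Psi_\psi\Psi_{\psi^*}=0$ on the point-transformation part of any equivalence transformation of~$\mathscr M$.

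First I would fix an element $\mathcal T\in G^\sim_{\mathscr M}$ and let $\varphi$ denote the point transformation of the variables $(t,x,\psi,\psi^*)$ underlying~$\mathcal T$, which is of the form~\eqref{eq:GenPointTrans}. By the definition of an equivalence group, $\varphi$ maps each equation of~$\mathscr M$ to an equation of~$\mathscr M$; since $\mathscr M\subseteq\mathscr N$, each resulting triple belongs to the equivalence groupoid~$\mathcal G^\sim_{\mathscr N}$. Hence, by Theorem~\ref{thm:ClassNEquivGroupoid}, the components of~$\varphi$ satisfy the equations~\eqref{eq:ClassNEquivGroupoidA}, and for every equation $\mathcal L_{GF}\in\mathscr M$ on which $\mathcal T$ acts the transformed arbitrary elements are given by~\eqref{eq:ClassNEquivGroupoidB}--\eqref{eq:ClassNEquivGroupoidC}.

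Next I would invoke the hypothesis that $\mathscr M$ intersects~$\mathscr N_0$: choose an equation $\mathcal L_{GF}\in\mathscr M\cap\mathscr N_0$, so that $G^*\ne-G$. Applying the $G^*\ne-G$ branch of Theorem~\ref{thm:ClassNEquivGroupoid} to the admissible transformation $((G,F),\varphi,(\tilde G,\tilde F))\in\mathcal G^\sim_{\mathscr N}$ yields $\Psi_\psi\Psi_{\psi^*}=0$. This is a condition on the components of~$\varphi$ alone, independent of the particular equation. Together with~\eqref{eq:ClassNEquivGroupoidA} and with the arbitrary-element components~\eqref{eq:ClassNEquivGroupoidB}--\eqref{eq:ClassNEquivGroupoidC}, it is exactly the description of the elements of~$G^\sim_{\mathscr N}$ given in Corollary~\ref{CorolEquivalence group}; recall from the footnote to that corollary that such a transformation is completely determined by its $(t,x,\psi)$-components. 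Therefore $\mathcal T\in G^\sim_{\mathscr N}$, and since $\mathcal T$ was arbitrary and the group operation is composition in both cases, $G^\sim_{\mathscr M}$ is a subgroup of~$G^\sim_{\mathscr N}$.

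The main thing to get right is the role of the two hypotheses. The intersection with~$\mathscr N_0$ does the real work: it activates the $G^*\ne-G$ branch and thereby excludes the additional transformations (with $\Psi$ depending on both $\psi$ and~$\psi^*$) that populate the larger group~$G^\sim_{\bar{\mathscr N}_0}$. The intersection with~$\bar{\mathscr N}_0$ merely guarantees that $\mathscr M$ is not contained in~$\mathscr N_0$, so that the statement is not already subsumed by the equality $G^\sim_{\mathscr N_0}=G^\sim_{\mathscr N}$; it is precisely for subclasses straddling the partition $\mathscr N=\mathscr N_0\sqcup\bar{\mathscr N}_0$ that one might otherwise expect the bonus transformations of~$\bar{\mathscr N}_0$ to survive. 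The only subtlety is the identification of~$G^\sim_{\mathscr M}$, whose elements act on the (possibly reparameterized) arbitrary elements of~$\mathscr M$, with an honest subgroup of~$G^\sim_{\mathscr N}$; this is resolved by the observation that an equivalence transformation of~$\mathscr N$ is fixed by its action on $(t,x,\psi)$, the action on $(G,F)$ being then determined by~\eqref{eq:ClassNEquivGroupoidB}--\eqref{eq:ClassNEquivGroupoidC}.
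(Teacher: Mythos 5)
Your proof is correct and is essentially the argument the paper leaves implicit: the presence in the subclass of an equation with $G^*\ne-G$ forces, via Theorem~\ref{thm:ClassNEquivGroupoid}, the constraint $\Psi_\psi\Psi_{\psi^*}=0$ on the transformational part of any equivalence transformation, which together with~\eqref{eq:ClassNEquivGroupoidA} and the arbitrary-element components~\eqref{eq:ClassNEquivGroupoidB}--\eqref{eq:ClassNEquivGroupoidC} identifies that transformation with an element of~$G^\sim_{\mathscr N}$ as described in Corollary~\ref{CorolEquivalence group}. Your remark that only the intersection with~$\mathscr N_0$ does logical work, the intersection with~$\bar{\mathscr N}_0$ serving merely to make the statement non-redundant given $G^\sim_{\mathscr N_0}=G^\sim_{\mathscr N}$, is also accurate.
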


\section{Transformational properties of intermediate subclasses}\label{sectionequivalence grooup}

Selecting the equations with $G=1$ from the class~$\mathscr N$,
we obtain the subclass $\mathscr F$ of~$\mathscr N$
whose equations are of the form~\eqref{general form of generalizedSchEqs withconstantmass}.
It is easy to single out its equivalence groupoid~$\mathcal G^\sim_{\mathscr F}$ as a subgroupoid
from the equivalence groupoid~$\mathcal G^\sim_{\mathscr N}$ computed in Theorem~\ref{thm:ClassNEquivGroupoid} for $n\geqslant2$.
In the case $n=1$, the equivalence groupoid~$\mathcal G^\sim_{\mathscr F}$ was computed
in~\cite[Theorem~1]{Popovych&Kunzinger&Eshragi2010}.
Merging the above cases of~$n$, we formulate the following assertion.

\begin{proposition}\label{pro:ClassFEquivGroupoid}
The equivalence groupoid~$\mathcal G^\sim_{\mathscr F}$ of the class~$\mathscr F$
consists of triples of the form $(F,\varphi,\tilde F)$,
where $\varphi$ is a point transformation in the space of variables,
given by
\begin{subequations}\label{eq:ClassFEquivGroupoid}
\begin{gather}\label{eq:ClassFEquivGroupoidA}
\tilde t=T, \quad
\tilde x_a=|T_t|^{1/2}O^{ab}x_b+\mathcal X^a,\quad
\tilde \psi=\Psi(t,x,\hat \psi),
\end{gather}
and the target value $\tilde F$ of the arbitrary element is expressed via its source value $F$ as
\begin{gather}\label{eq:ClassFEquivGroupoidB}
\begin{split}
\tilde F={}&\frac{\Psi_{\hat\psi}}{|T_t|}\hat F-i\frac{\Psi_t}{T_t}+i\biggl(\frac{T_{tt}}{2|T_t|^2}x_a
+\frac{\varepsilon'}{|T_t|^{3/2}}\mathcal X^b_tO^{ba}\biggr)\big(\Psi_a+\Psi_{\hat\psi}\hat \psi_a\big)
\\[1ex]
&{}-\dfrac1{|T_t|}\big(\Psi_{aa}+2\Psi_{a\hat\psi}\hat\psi_a+\Psi_{\hat\psi\hat\psi}\hat\psi_a\hat\psi_a\big),
\end{split}
\end{gather}
\end{subequations}
where $T$ and $\mathcal X^a$ are arbitrary smooth real-valued functions of $t$ with $T_t\ne 0$,
$\Psi$ is an arbitrary smooth complex-valued function of $t$, $x$ and $\hat\psi$ with $\Psi_{\hat\psi}\not=0$,
$O=(O^{ab})$~is an arbitrary constant orthogonal $n\times n$ matrix, and $\varepsilon'=\sgn T_t$.
\end{proposition}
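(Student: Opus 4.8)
The plan is to obtain Proposition~\ref{pro:ClassFEquivGroupoid} by specializing the already-established description of $\mathcal G^\sim_{\mathscr N}$ in Theorem~\ref{thm:ClassNEquivGroupoid} to the constraint $G=1$, and then merging with the known $n=1$ case. First I would observe that $\mathscr F$ lies entirely in $\mathscr N_0$, since $G=1$ gives $G^*=1\neq-1=-G$; hence by Corollary~\ref{cor:ClassN0} every admissible transformation satisfies the full system~\eqref{eq:ClassNEquivGroupoidA} together with the branching condition $\Psi_\psi\Psi_{\psi^*}=0$. The latter means exactly that $\Psi$ depends on $(t,x,\psi)$ or on $(t,x,\psi^*)$ but not both; this is what the notation $\hat\psi$ in~\eqref{eq:ClassFEquivGroupoidA} encodes via the sign of $T_t$. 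The source and target arbitrary elements are both $1$, so I would impose $\tilde G=1$ in~\eqref{eq:ClassNEquivGroupoidB}, which forces $H=|T_t|$ regardless of which branch of $\Psi$ occurs (in the conjugating branch $\tilde G=H(-G^*)/T_t=-H/T_t$, and demanding this equal $1$ with $H>0$ gives $H=|T_t|$ together with $T_t<0$).

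Next I would translate the geometric constraints of~\eqref{eq:ClassNEquivGroupoidA} into the explicit parametrization in~\eqref{eq:ClassFEquivGroupoidA}. From $T_\psi=T_{\psi^*}=T_a=0$ we get $T=T(t)$ with $T_t\neq0$; from $X^a_\psi=X^a_{\psi^*}=0$ each $X^a$ depends only on $(t,x)$; and $X^b_aX^c_a=H\delta_{bc}=|T_t|\delta_{bc}$ says the spatial Jacobian $(X^b_a)$ equals $|T_t|^{1/2}$ times an orthogonal matrix. I would argue that this orthogonal matrix, together with the requirement that $X^a$ be affine in $x$, must be a constant matrix $O=(O^{ab})$: differentiating $X^b_aX^c_a=|T_t|\delta_{bc}$ in $x$ and using that the right-hand side is $x$-independent shows the second $x$-derivatives vanish, so $X^a=|T_t|^{1/2}O^{ab}x_b+\mathcal X^a(t)$ with $O$ constant orthogonal. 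This yields~\eqref{eq:ClassFEquivGroupoidA}; the factor $\varepsilon'=\sgn T_t$ and the $\hat\psi$ convention are bookkeeping devices recording the orientation of time.

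Finally I would specialize the transformation law~\eqref{eq:ClassNEquivGroupoidC} for $F$. With $\tilde G=1$, $H=|T_t|$, $n$ arbitrary, and $\Psi$ restricted to one branch (so that $\hat\Delta$ collapses to derivatives in the single variable $\hat\psi$, reproducing the $\Psi_{aa}+2\Psi_{a\hat\psi}\hat\psi_a+\Psi_{\hat\psi\hat\psi}\hat\psi_a\hat\psi_a$ block), I would compute the terms $\hat\Delta\Psi$, $\frac{n-2}{2}\frac{H_a}{H}\mathrm D_a\Psi$, and $\Psi_t-X^b_t\frac{X^b_a}{H}\mathrm D_a\Psi$ and check they reorganize into~\eqref{eq:ClassFEquivGroupoidB}. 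Here $H_a/H=(|T_t|)_a/|T_t|=0$ because $T=T(t)$, so the middle term in~\eqref{eq:ClassNEquivGroupoidC} vanishes for every $n$; this is the key simplification making the formula dimension-independent and letting the $n=1$ result of~\cite[Theorem~1]{Popovych&Kunzinger&Eshragi2010} merge seamlessly. The main routine obstacle is verifying that the advection term $-X^b_t\frac{X^b_a}{H}\mathrm D_a\Psi$, after inserting $X^b=|T_t|^{1/2}O^{bc}x_c+\mathcal X^b$ and using orthogonality $O^{ba}O^{bc}=\delta_{ac}$, collapses precisely to the coefficient $\frac{T_{tt}}{2|T_t|^2}x_a+\frac{\varepsilon'}{|T_t|^{3/2}}\mathcal X^b_tO^{ba}$ multiplying $(\Psi_a+\Psi_{\hat\psi}\hat\psi_a)$ in~\eqref{eq:ClassFEquivGroupoidB}; this is a direct but slightly delicate computation where the $T_{tt}$ term arises from differentiating $|T_t|^{1/2}$ in $t$. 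I expect no conceptual difficulty here, only careful tracking of the $|T_t|$ powers and the sign $\varepsilon'$.
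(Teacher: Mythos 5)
Your proposal is correct and follows essentially the same route as the paper's own proof: for $n\geqslant2$ one sets $G=\tilde G=1$ in Theorem~\ref{thm:ClassNEquivGroupoid}, deduces $H=|T_t|$ and $\Psi_{\hat\psi}\ne0$ from~\eqref{eq:ClassNEquivGroupoidB} (with the conjugating branch tied to $T_t<0$), reads off the representation~\eqref{eq:ClassFEquivGroupoidA} from~\eqref{eq:ClassNEquivGroupoidA}, and specializes~\eqref{eq:ClassNEquivGroupoidC} --- where indeed $H_a=0$ kills the dimension-dependent term --- to obtain~\eqref{eq:ClassFEquivGroupoidB}, merging with the known $n=1$ result of~\cite{Popovych&Kunzinger&Eshragi2010}. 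The one place where your added detail is looser than it appears is the constancy of~$O$: differentiating $X^b_aX^c_a=|T_t|\delta_{bc}$ with respect to~$x$ only shows that the orthogonal factor of the spatial Jacobian is independent of~$x$ (so that $X^a$ is affine in~$x$), not that it is independent of~$t$, and since $F$ is permitted to depend on $\nabla\psi$ this second point needs a separate justification; the paper's own proof passes over it just as tersely, so this does not distinguish your argument from theirs.
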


\begin{proof}
Since the case $n=1$ was studied in~\cite[Theorem~1]{Popovych&Kunzinger&Eshragi2010},
it suffices to consider the case $n\geqslant2$.
Setting $G=1$ and $\tilde G=1$ in the description of admissible transformations of~$\mathscr N$
that is given in Theorem~\ref{thm:ClassNEquivGroupoid},
we obtain that transformational parts~$\varphi$ of admissible transformations of the class~$\mathscr F$
are of the form~\eqref{eq:GenPointTrans}, 
where the components satisfy the equations~\eqref{eq:ClassNEquivGroupoidA} with $H=|T_t|$ and the equation $\Psi_\psi\Psi_{\psi^*}= 0$.
This expression for~$H$ and the condition \smash{$\Psi_{\hat\psi}\ne0$} follow from the equation~\eqref{eq:ClassNEquivGroupoidB}.
As a result, we obtain the representation~\eqref{eq:ClassFEquivGroupoidA} for~$\varphi$.
The expression for the target value $\tilde F$ of the arbitrary element is derived from~\eqref{eq:ClassNEquivGroupoidC}
in view of~\eqref{eq:ClassFEquivGroupoidA}.
\end{proof}

The analysis of structure of~$\mathcal G^\sim_{\mathscr F}$ shows
that it suffices to choose the space with the coordinates $(t,x,\psi,\psi^*,\nabla\psi,\nabla\psi^*,F,F^*)$
as the space underlying the equivalence group $G^\sim_{\mathscr F}$ of the class~$\mathscr F$.

\begin{corollary}\label{cor:ClassFEquivGroup}
The class~$\mathscr F$ is normalized.
The equivalence group $G^\sim_{\mathscr F}$ of~$\mathscr F$ is constituted by the point transformations
in the space with the coordinates $(t,x,\psi,\psi^*,\nabla\psi,\nabla\psi^*,F,F^*)$
whose components for $(t,x,\psi,F)$ are of the form~\eqref{eq:ClassFEquivGroupoidA}--\eqref{eq:ClassFEquivGroupoidB},
and the other components are obtained via the complex conjugation or the prolongation by the chain rule.
\end{corollary}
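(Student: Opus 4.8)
The plan is to read off normalization directly from the description of the equivalence groupoid~$\mathcal G^\sim_{\mathscr F}$ obtained in Proposition~\ref{pro:ClassFEquivGroupoid}. Recall that the class~$\mathscr F$ is normalized precisely when its equivalence groupoid~$\mathcal G^\sim_{\mathscr F}$ is generated by its equivalence group~$G^\sim_{\mathscr F}$; equivalently, every admissible transformation $(F,\varphi,\tilde F)\in\mathcal G^\sim_{\mathscr F}$ must be the restriction, to the source arbitrary element~$F$, of a single point transformation that acts on the entire space of variables and arbitrary elements and belongs to~$G^\sim_{\mathscr F}$.

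The key observation is that in the representation~\eqref{eq:ClassFEquivGroupoidA} the transformation components for $(t,x,\psi)$ are parameterized only by the quantities $T$, $\mathcal X^a$, $O$ and~$\Psi$, together with $\varepsilon'=\sgn T_t$, none of which involves the source value~$F$ of the arbitrary element. Thus the transformational part~$\varphi$ of an admissible transformation is entirely independent of~$F$. Moreover, the formula~\eqref{eq:ClassFEquivGroupoidB} expressing the target value~$\tilde F$ in terms of~$F$ (and, through~$\hat F$, of~$F^*$) uses exactly these same parameters and constitutes one uniform prescription valid for every~$F$, rather than a rule tied to a particular choice of~$F$.

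I would then assemble, from any admissible transformation, a point transformation in the space with coordinates $(t,x,\psi,\psi^*,\nabla\psi,\nabla\psi^*,F,F^*)$: take the $(t,x,\psi)$-components from~\eqref{eq:ClassFEquivGroupoidA}, let it act on the arbitrary element by~\eqref{eq:ClassFEquivGroupoidB}, and complete the remaining components by complex conjugation (for~$\psi^*$ and~$F^*$) and by prolongation via the chain rule (for the derivatives of~$\psi$), exactly as in the footnote to Corollary~\ref{CorolEquivalence group}. Because these formulas are uniform in~$F$, the resulting transformation maps the whole class~$\mathscr F$ onto itself and hence lies in~$G^\sim_{\mathscr F}$, while by construction its restriction to the fixed source~$F$ reproduces the given admissible transformation. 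Conversely, every such group element restricts, on each~$F$, to a member of~$\mathcal G^\sim_{\mathscr F}$. This identifies~$\mathcal G^\sim_{\mathscr F}$ with the groupoid generated by~$G^\sim_{\mathscr F}$, which is the assertion of normalization, and simultaneously yields the stated description of~$G^\sim_{\mathscr F}$. Since admissible transformations preserve the jet subspace singled out by the listed coordinates, one may restrict the underlying space as indicated before the statement.

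There is no genuine computational obstacle here: the substantive work was already carried out in Theorem~\ref{thm:ClassNEquivGroupoid} and Proposition~\ref{pro:ClassFEquivGroupoid}. The one point requiring care is the verification that the $(t,x,\psi)$-components in~\eqref{eq:ClassFEquivGroupoidA} are genuinely independent of the source arbitrary element and that the transformation rule~\eqref{eq:ClassFEquivGroupoidB} is genuinely uniform in it; both are immediate from their explicit form, and together they are exactly what the definition of a normalized class demands.
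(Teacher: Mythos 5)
Your argument is correct and is exactly the route the paper takes: the paper states this corollary without a separate proof, regarding it as immediate from Proposition~\ref{pro:ClassFEquivGroupoid} because the transformational part~\eqref{eq:ClassFEquivGroupoidA} is independent of the source value of~$F$ and the rule~\eqref{eq:ClassFEquivGroupoidB} is uniform in~$F$, which is precisely the observation you make explicit. Your additional remarks on restricting the underlying jet space and completing the components by conjugation and prolongation match the paper's footnote to Corollary~\ref{CorolEquivalence group}.
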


The continuous equivalence transformations of the class~$\mathscr F$ are singled out
from the group $G^\sim_{\mathscr F}$ by the constraints $T_t>0$ and $\det O=1$.
Therefore, the class~$\mathscr F$ possesses only two discrete equivalence transformations that
are independent up to combining with each other and with continuous equivalence transformations.
These are
the space reflection $\tilde t=t,$ $\tilde x_a=-x_a,$ $\tilde x_b=x_b,$ $b\ne a$, $\tilde \psi=\psi,$ $\tilde F=F$
for a fixed~$a$
and the Wigner time reflection $\tilde t=-t,$ $\tilde x=x,$ $\tilde \psi=\psi^*$,~$\tilde F=F^*$.
The above property of the class~$\mathscr F$ is inherited
by its subclasses $\mathscr F_1$, $\mathscr S$, $\mathscr V$, $\mathscr V'$, $\mathscr P_\lambda$, $\lambda\in\mathbb R$,
which are considered below.

\begin{remark}\label{cor:ClassTildeF}
From the point of view of physical applications,
an important class of generalized Schr\"odinger equations is the wider subclass~$\tilde{\mathscr F}$ of~$\mathscr N_0$
that is singled out by the constraint that $G$ is a positive real constant, $G\in\mathbb R_{>0}$.
The class~$\tilde{\mathscr F}$ is normalized as well.
Similarly to the class~$\mathscr F$, the equivalence group of~$\tilde{\mathscr F}$ consists of the point transformations
in the space with the coordinates $(t,x,\psi,\psi^*,\nabla\psi,\nabla\psi^*,G,F,F^*)$
whose $(t,\psi,F)$-components are given by~\eqref{eq:ClassFEquivGroupoidA}--\eqref{eq:ClassFEquivGroupoidB},
the $x$-components is modified, in comparison to~\eqref{eq:ClassFEquivGroupoidA}, as $\tilde x_a=c|T_t|^{1/2}O^{ab}x_b+\mathcal X^a$,
the $G$-component is $\tilde G=c^2 G$
and the other components are again obtained via the complex conjugation or the prolongation by the chain rule.
Here $c$ runs through $\mathbb R_{>0}$.
The extension of~$\mathscr F$ to~$\tilde{\mathscr F}$ is not essential since
the class~$\tilde{\mathscr F}$ can be mapped to its subclass~$\mathscr F$ via gauging of~$G$ by scalings of~$x$.
Relaxing the constraint $G\in\mathbb R_{>0}$ to $G\in\mathbb R_{\ne0}$
merely introduce the simultaneous alternating of the signs of $t$, $F$ and~$G$ into the corresponding equivalence group.
In the same way, the constraint $G=1$ can be relaxed to  $G\in\mathbb R_{>0}$ or $G\in\mathbb R_{\ne0}$
for the subclasses of~$\mathscr F$ considered below,
and these relaxations are also not essential in the course of the study of admissible transformations and Lie symmetries.
\end{remark}

Constraining the arbitrary element $F$ with the equations $F_{\psi_a}=F_{\psi^*_a}=0$,
we single out the subclass~$\mathscr F_1$, whose equations are of the form~\eqref{general form of NLSchEqs withconstantmass}.
Then the expression~\eqref{eq:ClassFEquivGroupoidB} for~$\tilde F$ implies
the equations $\Psi_{a\hat\psi}=\frac12X^a_bX^b_t\Psi_{\hat\psi}$ and $\Psi_{\hat\psi\hat\psi}=0$.
Integrating them, we obtain the expression for~$\Psi$ and then substitute it into~\eqref{eq:ClassFEquivGroupoid},
which gives the description of the admissible transformations of the class~$\mathscr F_1$.

\begin{proposition}\label{pro:ClassF1EquivGroupoid}
The class~$\mathscr F_1$ is normalized.
Its equivalence group~$G^\sim_{\mathscr F_1}$ is a subgroup of~$G^\sim_{\mathscr F}$
and consists of point transformations of the form~\eqref{eq:ClassFEquivGroupoidA}--~\eqref{eq:ClassFEquivGroupoidB},
where
\begin{gather*}
\Psi=\exp\left(\frac i8\frac{T_{tt}}{|T_t|}\,x_ax_a+
\frac i2\frac{\varepsilon'\mathcal X^b_t}{|T_t|^{1/2}}\,O^{ba}x_a+i\Sigma+Z
 \right)\hat\psi+\Psi^0,
\end{gather*}
$T$, $\mathcal X^a$, $\Sigma$ and $Z$ are arbitrary smooth real-valued functions of $t$ with $T_t\ne 0$,
$\Psi^0$ is an arbitrary smooth complex-valued function of $t$ and $x$,
and $O=(O^{ab})$~is an arbitrary constant orthogonal $n\times n$  matrix.
\end{proposition}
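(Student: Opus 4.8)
The plan is to realize $\mathcal G^\sim_{\mathscr F_1}$ as the subgroupoid of the already-known equivalence groupoid $\mathcal G^\sim_{\mathscr F}$ from Proposition~\ref{pro:ClassFEquivGroupoid}, keeping exactly those triples $(F,\varphi,\tilde F)$ whose source and target both lie in $\mathscr F_1$, i.e.\ both satisfy $F_{\psi_a}=F_{\psi^*_a}=0$. Every admissible transformation of $\mathscr F_1$ is in particular an admissible transformation of the superclass $\mathscr F$, so its transformational component $\varphi$ is automatically of the form~\eqref{eq:ClassFEquivGroupoidA} and $\tilde F$ is given by~\eqref{eq:ClassFEquivGroupoidB}. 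The source constraint holds by hypothesis, so the entire content of the restriction is to force the target value $\tilde F$ to again be independent of the first-order derivatives $\nabla\psi,\nabla\psi^*$.

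First I would note that, because the source equation lies in $\mathscr F_1$, the factor $\hat F$ in~\eqref{eq:ClassFEquivGroupoidB} depends only on $(t,x,\hat\psi)$ and brings in no derivative terms; likewise the coefficients $\Psi,\Psi_t,\Psi_a,\Psi_{\hat\psi},\Psi_{a\hat\psi},\Psi_{\hat\psi\hat\psi}$ are functions of $(t,x,\hat\psi)$ only. Hence the jet variables $\hat\psi_a$ enter $\tilde F$ solely through the displayed terms that are linear and quadratic in $\hat\psi_a$. Since $\hat\psi_a$ is related invertibly to $\tilde\psi_{\tilde x_b}$ and $\tilde\psi^*_{\tilde x_b}$, the requirement $\tilde F\in\mathscr F_1$ is equivalent to the vanishing of these coefficients: the coefficient of $\hat\psi_a\hat\psi_a$ yields $\Psi_{\hat\psi\hat\psi}=0$, and the coefficient of $\hat\psi_a$ yields the first-order equation $\Psi_{a\hat\psi}=\tfrac12 X^a_bX^b_t\Psi_{\hat\psi}$ recorded just before the statement.

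Next I would integrate this pair. The equation $\Psi_{\hat\psi\hat\psi}=0$ forces $\Psi$ to be affine in $\hat\psi$, $\Psi=A(t,x)\hat\psi+\Psi^0(t,x)$, with $A\ne0$ in view of $\Psi_{\hat\psi}\ne0$; the term $\Psi^0$ contributes only $(t,x)$-dependence to $\tilde F$ and so remains an arbitrary complex function. Substituting into the remaining equation turns it into $(\ln A)_a=\tfrac12 X^a_bX^b_t$, whose right-hand side becomes, after inserting $X^a=|T_t|^{1/2}O^{ab}x_b+\mathcal X^a$ from~\eqref{eq:ClassFEquivGroupoidA}, explicit and affine in $x$. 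Its $x$-antiderivative is precisely the quadratic-plus-linear exponent displayed in the statement, while the $x$-independent integration function of $t$ splits into its real and imaginary parts $Z$ and $\Sigma$. This reproduces the claimed exponential factor $A$ and hence the stated form of $\Psi$.

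Finally, for the normalization assertion I would observe that the two derived conditions constrain only the transformational component $\varphi$ and are independent of the source arbitrary element $F$; conversely, whenever $\varphi$ satisfies them and $F\in\mathscr F_1$, the formula~\eqref{eq:ClassFEquivGroupoidB} delivers $\tilde F\in\mathscr F_1$. Thus the admissible transformations of $\mathscr F_1$ are exactly the elements of $G^\sim_{\mathscr F}$ whose $\Psi$-component has the special form above; these form a subgroup $G^\sim_{\mathscr F_1}\leqslant G^\sim_{\mathscr F}$ that generates the whole groupoid $\mathcal G^\sim_{\mathscr F_1}$, so $\mathscr F_1$ is normalized. The only step demanding care is the clean separation in~\eqref{eq:ClassFEquivGroupoidB} of the genuinely $\hat\psi_a$-dependent terms from the $(t,x,\hat\psi)$-dependent coefficients; the integrability of $(\ln A)_a=\tfrac12 X^a_bX^b_t$ needs no separate check, since a solution $\Psi$ is guaranteed to exist for every admissible $\varphi$.
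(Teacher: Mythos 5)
Your proposal is correct and follows essentially the same route as the paper: restrict the groupoid $\mathcal G^\sim_{\mathscr F}$ of Proposition~\ref{pro:ClassFEquivGroupoid} to $\mathscr F_1$, extract from~\eqref{eq:ClassFEquivGroupoidB} the vanishing of the coefficients of $\hat\psi_a\hat\psi_a$ and $\hat\psi_a$ (yielding $\Psi_{\hat\psi\hat\psi}=0$ and the linear first-order equation for $\Psi_{a\hat\psi}$), and integrate to obtain the stated exponential factor together with the additive term $\Psi^0(t,x)$. The only cosmetic difference is that you spell out the normalization argument (the derived constraints involve $\varphi$ alone, hence the admissible transformations are generated by a subgroup of $G^\sim_{\mathscr F}$), which the paper leaves implicit.
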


We now turn our attention to the subclass of $\mathscr F_1$
singled out by the constraints
\[
\psi\left(\frac F\psi\right)_{\!\psi}\!-\psi^*\left(\frac F\psi\right)_{\!\psi^*}\!=0, \quad
\psi\left(\frac F\psi\right)_{\!\psi}\!+\psi^*\left(\frac F\psi\right)_{\!\psi^*}\!\ne0,
\]
which are jointly equivalent to the representation $F=S(t,x,\rho)\psi$ with $S_\rho\ne0$,
where $\rho:=|\psi|$.
The reparameterization of this subclass by assuming $S:=F/\psi$ to be the arbitrary element instead of~$F$
leads to the class~$\mathscr S$ of equations of the form~\eqref{eq:ClassS}.
In terms of~$S$, the above auxiliary system for the arbitrary element takes the form
\begin{equation}\label{eq:ClassSAuxiliarySystem}
\psi S_\psi-\psi^*S_{\psi^*}=0, \quad \psi S_\psi+\psi^*S_{\psi^*}\not=0.
\end{equation}
Proposition~\ref{pro:ClassF1EquivGroupoid} implies
that the equivalence group~$G^\sim_{\mathscr S}$ of the class~$\mathscr S$
is imbedded as a subgroup into the group~$G^\sim_{\mathscr F_1}$,
which is associated with the constraint $\Psi^0=0$.
A similar claim holds for the equivalence groupoid of the class~$\mathscr S$.

\begin{theorem}\label{thm:ClassSGequiv}
The class $\mathscr S$ is normalized.
Its equivalence group~$G^\sim_{\mathscr S}$ consists of the point transformations
in the space with the coordinates $(t,x,\psi,\psi^*,S,S^*)$ that are of the form
\begin{subequations}\label{eq:ClassSEquivGroup}
\begin{gather}\label{eq:ClassSEquivGroupA}
\tilde t=T, \quad
\tilde x_a=|T_t|^{1/2}O^{ab}x_b+\mathcal X^a,
\\ \label{eq:ClassSEquivGroupB}
\tilde \psi=\exp\left(\frac i8\frac{T_{tt}}{|T_t|}\,x_ax_a+
\frac i2\frac{\varepsilon'\mathcal X^b_t}{|T_t|^{1/2}}\,O^{ba}x_a+i\Sigma+Z\right)\hat\psi,
\\ \label{eq:ClassSEquivGroupC}
\tilde S=\frac{\hat S}{|T_t|}+
\frac{2T_{ttt}T_t-3T_{tt}{}^2}{16\varepsilon'T_t{}^3}x_ax_a
+\frac{\varepsilon'}2\left(\frac{\mathcal X^b_t}{T_t}\right)_{\!t} \frac{O^{ba}x_a}{|T_t|^{1/2}}
+\frac{\Sigma_t-iZ_t}{T_t}-\frac{\mathcal X^a_t\mathcal X^a_t+inT_{tt}}{4T_t{}^2}.
\end{gather}
\end{subequations}
Here $T$, $\mathcal X^a$, $Z$ and $\Sigma$ are arbitrary smooth real-valued functions of~$t$
with $T_t\ne 0$, $\varepsilon'=\sgn T_t$ and $O=(O^{ab})$~is an arbitrary constant orthogonal $n\times n$ matrix.
\end{theorem}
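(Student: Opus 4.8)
The plan is to obtain $G^\sim_{\mathscr S}$ from the equivalence group $G^\sim_{\mathscr F_1}$ of the superclass $\mathscr F_1\supset\mathscr S$, already described in Proposition~\ref{pro:ClassF1EquivGroupoid}, by restriction. Every equation of $\mathscr S$ lies in $\mathscr F_1$, and any admissible transformation of $\mathscr S$ maps an $\mathscr S$-equation to an $\mathscr S$-equation, hence to an $\mathscr F_1$-equation; so each element of the equivalence groupoid $\mathcal G^\sim_{\mathscr S}$ is in particular an admissible transformation of $\mathscr F_1$. Since $\mathscr F_1$ is normalized, such a transformation is the restriction of an element of $G^\sim_{\mathscr F_1}$. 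I therefore start from the general form given in Proposition~\ref{pro:ClassF1EquivGroupoid} --- the components \eqref{eq:ClassFEquivGroupoidA} for $(t,x)$, the expression $\Psi=E\hat\psi+\Psi^0$ with $E=\exp(\frac i8\frac{T_{tt}}{|T_t|}x_ax_a+\frac i2\frac{\varepsilon'\mathcal X^b_t}{|T_t|^{1/2}}O^{ba}x_a+i\Sigma+Z)$, and the target $\tilde F$ prescribed by \eqref{eq:ClassFEquivGroupoidB} --- and impose that both the source and the target lie in $\mathscr S$, i.e.\ $F=S\psi$ and $\tilde F=\tilde S\tilde\psi$ with $S$, $\tilde S$ obeying the auxiliary system \eqref{eq:ClassSAuxiliarySystem}.

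The crucial step, and the only place I expect real resistance, is to show that this forces $\Psi^0=0$; it is exactly here that the defining inequality $S_\rho\ne0$ is used. Substituting $\hat F=\hat S\hat\psi$ into \eqref{eq:ClassFEquivGroupoidB}, the leading term $\Psi_{\hat\psi}|T_t|^{-1}\hat F$ becomes $|T_t|^{-1}\hat S\,(\Psi-\Psi^0)$, where $\hat S$ is evaluated at $\rho=|\hat\psi|=|\Psi-\Psi^0|e^{-Z}$, while the remaining terms of \eqref{eq:ClassFEquivGroupoidB} are affine in $\Psi$ (recall $\Psi_{\hat\psi\hat\psi}=0$). For the target to belong to $\mathscr S$ the quotient $\tilde F/\tilde\psi=\tilde F/\Psi$ must depend only on $(\tilde t,\tilde x,\tilde\rho)$, i.e.\ be invariant under the phase rotation $\Psi\mapsto e^{i\beta}\Psi$ at fixed $|\Psi|$. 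If $\Psi^0\ne0$ this fails: because $\hat S_\rho\ne0$, the factor $\hat S(t,x,|\Psi-\Psi^0|e^{-Z})$ varies nonlinearly with the phase of $\Psi$, and no affine-in-$\Psi$ remainder can cancel this nonlinear phase dependence. Hence $\Psi^0=0$, which is precisely the embedding of $G^\sim_{\mathscr S}$ into $G^\sim_{\mathscr F_1}$ anticipated before the statement.

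With $\Psi^0=0$ the computation closes routinely. Now $\Psi=E\hat\psi$, so $\tilde\rho=|E|\rho=e^Z\rho$, and dividing \eqref{eq:ClassFEquivGroupoidB} through by $\Psi$ while using $\hat F/\hat\psi=\hat S$ gives $\tilde S=\tilde F/\tilde\psi$ explicitly: the term $\Psi_{\hat\psi}|T_t|^{-1}\hat F$ contributes $\hat S/|T_t|$; the $\hat\psi_a$-dependent pieces entering through the $(\Psi_a+\Psi_{\hat\psi}\hat\psi_a)$ and $(\Psi_{aa}+2\Psi_{a\hat\psi}\hat\psi_a+\Psi_{\hat\psi\hat\psi}\hat\psi_a\hat\psi_a)$ terms cancel, as they must since $\tilde S$ is free of $\tilde\nabla\tilde\psi$; and the logarithmic $t$- and $x_a$-derivatives of the exponential factor $E$ reproduce the $x_ax_a$-coefficient $\tfrac{2T_{ttt}T_t-3T_{tt}{}^2}{16\varepsilon'T_t{}^3}$, the $O^{ba}x_a$-term, the $(\Sigma_t-iZ_t)/T_t$-term and the constant $-\tfrac{\mathcal X^a_t\mathcal X^a_t+inT_{tt}}{4T_t{}^2}$, which is exactly \eqref{eq:ClassSEquivGroupC}; the components \eqref{eq:ClassSEquivGroupA}--\eqref{eq:ClassSEquivGroupB} are just \eqref{eq:ClassFEquivGroupoidA} with this $\Psi$, and the $(\psi^*,S^*)$-components follow by complex conjugation. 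Finally $\tilde S$ depends on $\psi$ only through $\hat S(t,x,e^{-Z}\tilde\rho)$, hence only through $\tilde\rho$, and $\tilde S_{\tilde\rho}=e^{-Z}\hat S_\rho/|T_t|\ne0$, so $\tilde S$ satisfies \eqref{eq:ClassSAuxiliarySystem}. Since the surviving parameters $T$, $\mathcal X^a$, $\Sigma$, $Z$ (real functions of $t$) and the constant orthogonal matrix $O$ are free and independent of the particular $S$, these transformations form a group, every element of $\mathcal G^\sim_{\mathscr S}$ arises from one of them, and therefore $\mathscr S$ is normalized with equivalence group $G^\sim_{\mathscr S}$ as stated.
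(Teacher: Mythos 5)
Your proposal is correct and follows essentially the same route as the paper: the paper likewise obtains $G^\sim_{\mathscr S}$ by restricting the equivalence group of the normalized superclass $\mathscr F_1$ from Proposition~\ref{pro:ClassF1EquivGroupoid}, noting that the condition $F=S(t,x,\rho)\psi$ with $S_\rho\ne0$ on source and target forces $\Psi^0=0$, and then reading off~\eqref{eq:ClassSEquivGroupC} from~\eqref{eq:ClassFEquivGroupoidB}. You merely spell out the phase-invariance argument for $\Psi^0=0$ that the paper leaves implicit.
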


Theorem~\ref{thm:ClassSGequiv} is important for the group classification of any subclass of the class~$\mathscr S$.
Admissible and equivalence transformations for any such subclass are deduced from transformations of the form~\eqref{eq:ClassSEquivGroup}.

\begin{corollary}\label{cor:DiffInvForS}
The ratio $\rho S_{\rho\rho}/S_\rho$ is a differential invariant of the subgroup of~$G^\sim_{\mathscr S}$
associated with the condition $T_t>0$.
Moreover, if this ratio is real-valued, then it is a differential invariant of the entire group~$G^\sim_{\mathscr S}$.
\end{corollary}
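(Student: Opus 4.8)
The plan is to compute directly how the combination $\rho S_{\rho\rho}/S_\rho$ transforms under an arbitrary element of $G^\sim_{\mathscr S}$ and to verify its invariance. The key observation is that the transformation rule~\eqref{eq:ClassSEquivGroupC} for $S$ adds to $\hat S/|T_t|$ only terms that are independent of $\psi$, and hence of $\rho$: the inhomogeneous summands involve $x_a x_a$, $x_a$, and functions of $t$ alone. Therefore, upon differentiating $\tilde S$ with respect to $\tilde\rho$, every term except the leading $\hat S/|T_t|$ drops out. The first step is thus to make this precise by establishing the relation between $\tilde\rho$ and $\rho$ induced by~\eqref{eq:ClassSEquivGroupB}.

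From~\eqref{eq:ClassSEquivGroupB} we read off that $|\tilde\psi|=e^{Z}|\hat\psi|$, so $\tilde\rho=e^{Z}\rho$, where $Z=Z(t)$ is independent of the spatial and fibre variables. Since $\hat S$ equals $S$ or $S^*$ according to the sign of $T_t$ (and $\hat\psi$ correspondingly equals $\psi$ or $\psi^*$), I would treat the two cases $T_t>0$ and $T_t<0$ in parallel, noting that in either case $\hat S$ is a function of $(t,x,\rho)$ through the \emph{same} $\rho$ that appears in $\tilde\rho=e^Z\rho$. Applying $\p_{\tilde\rho}=e^{-Z}\p_\rho$ to~\eqref{eq:ClassSEquivGroupC} and using that the extra terms in $\tilde S$ have vanishing $\rho$-derivative, I obtain
\begin{gather*}
\tilde S_{\tilde\rho}=\frac{e^{-Z}}{|T_t|}\hat S_\rho,\qquad
\tilde S_{\tilde\rho\tilde\rho}=\frac{e^{-2Z}}{|T_t|}\hat S_{\rho\rho}.
\end{gather*}
Forming the ratio and inserting $\tilde\rho=e^Z\rho$ gives
\begin{gather*}
\frac{\tilde\rho\,\tilde S_{\tilde\rho\tilde\rho}}{\tilde S_{\tilde\rho}}
=\frac{e^Z\rho\cdot e^{-2Z}|T_t|^{-1}\hat S_{\rho\rho}}{e^{-Z}|T_t|^{-1}\hat S_\rho}
=\frac{\rho\,\hat S_{\rho\rho}}{\hat S_\rho},
\end{gather*}
so all the scaling factors $e^Z$ and $|T_t|$ cancel identically.

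This establishes that $\rho S_{\rho\rho}/S_\rho$ is mapped to $\rho\hat S_{\rho\rho}/\hat S_\rho$. When $T_t>0$ we have $\hat S=S$ and the quantity is literally preserved, which proves the first assertion. For the second assertion, when $T_t<0$ we have $\hat S=S^*$, so the expression transforms into $(\rho S_{\rho\rho}/S_\rho)^*$; if the ratio is real-valued this complex conjugate coincides with the original value, giving invariance under the full group $G^\sim_{\mathscr S}$. I expect no serious obstacle here: the only point requiring care is the bookkeeping of the hat operation and the verification that the additive terms in~\eqref{eq:ClassSEquivGroupC} genuinely contribute nothing after $\p_{\tilde\rho}$, which is immediate from their manifest independence of $\psi$ and $\psi^*$.
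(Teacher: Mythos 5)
Your computation is correct and is exactly the argument the paper intends: the corollary is stated without proof as an immediate consequence of Theorem~\ref{thm:ClassSGequiv}, and your direct verification — $\tilde\rho=e^{Z}\rho$, the additive terms in~\eqref{eq:ClassSEquivGroupC} being $\rho$-independent, and the cancellation of the factors $e^{Z}$ and $|T_t|$ in the ratio — is precisely how it follows. The handling of the hat operation (conjugation when $T_t<0$) correctly accounts for the two assertions of the corollary.
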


To find the equivalence algebra $\mathfrak g^\sim_\mathscr S$ of the class~$\mathscr S$,
we use the knowledge of the equivalence group~$G^\sim_\mathscr S$ of the class~$\mathscr S$
as described in Theorem~\ref{thm:ClassSGequiv}.
We evaluate the set of all infinitesimal generators of one-parameter subgroups of the group~$G^\sim_\mathscr S$
by representing the parameter function $\Sigma$ as $\Sigma=\frac14\mathcal X^a\mathcal X^a_t+\bar\Sigma$,
where $\bar\Sigma$ is a smooth function of~$t$, to ensure the existence of such subgroups.
Then, we successively assume one of the parameters~$T$, $O$, $\mathcal X^a$, $\bar\Sigma$ and~$Z$
to depend on a continuous parameter~$\delta$ and setting the other parameters to their trivial values,
which are $t$ for $T$, $E$ for $O$ and zeroes for $\mathcal X^a$, $\bar\Sigma$ and~$Z$.
This procedure leads to the components of the associated infinitesimal generator
$\tau\p_t+\xi^a\p_a+\eta\p_\psi+\eta^*\p_{\psi^*}+\theta\p_S+\theta^*\p_{S^*}$, computed~as
\[
\tau=\frac{\mathrm d\tilde t}{\mathrm d\delta}\Big|_{\delta=0},\quad
\xi^a=\frac{\mathrm d\tilde x}{\mathrm d\delta}\Big|_{\delta=0},\quad
\eta=\frac{\mathrm d\tilde\psi}{\mathrm d\delta}\Big|_{\delta=0},\quad
\theta=\frac{\mathrm d\tilde S}{\mathrm d\delta}\Big|_{\delta=0}.
\]

\begin{corollary}\label{cor:ClassSgequiv}
The equivalence algebra of the class~$\mathscr S$ is the algebra
\[
\mathfrak g^\sim_\mathscr S=\big\langle\hat D(\tau),\,\hat J_{ab},\,a<b,\,\hat P(\chi),\,\hat M(\sigma),\,\hat I(\zeta)\big\rangle,
\]
where $\chi:=(\chi^1,\dots,\chi^n)$;
$\tau$, $\chi^a$, $\sigma$ and~$\zeta$ run through the set of smooth real-valued functions of~$t$,
\begin{gather*}
\hat D(\tau)=\tau\p_t+\frac12\tau_tx_a\p_a+\frac i{8}\tau_{tt}x_ax_a \left(\psi\p_\psi-\psi^*\p_{\psi^*}\right)
\\ \phantom{\hat D(\tau)=}
{}-\left(\tau_tS-\frac18\tau_{ttt}x_ax_a+i\frac{\tau_{tt}}{4}\right)\p_S- \left(\tau_tS^{*}-\frac18\tau_{ttt}x_ax_a-i\frac{\tau_{tt}}{4}\right)\p_{S^*},
\\
\hat J_{ab}=x_a\p_b-x_b\p_a,\quad a\ne b,\\
\hat P(\chi)=\chi^a\p_a+\frac i2\chi^a_tx_a\left(\psi\p_\psi-\psi^*\p_{\psi^*}\right)+\frac12\chi^a_{tt}x_a\left(\p_S+ \p_{S^*}\right),
\\
\hat M(\sigma)=i\sigma(\psi\p\psi-\psi\p\psi^*)+\sigma_t(\p_S+\p_S^*),\\
\hat I(\zeta)=\zeta(\psi\p\psi+\psi^*\p\psi^*)-i\zeta_t(\p_S+\p_S^*).
\end{gather*}
\end{corollary}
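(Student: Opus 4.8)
The plan is to read off $\mathfrak g^\sim_{\mathscr S}$ as the tangent algebra of the equivalence group $G^\sim_{\mathscr S}$ at the identity, using the explicit description of $G^\sim_{\mathscr S}$ in Theorem~\ref{thm:ClassSGequiv}. Concretely, the equivalence algebra is constituted by the infinitesimal generators $\tau\p_t+\xi^a\p_a+\eta\p_\psi+\eta^*\p_{\psi^*}+\theta\p_S+\theta^*\p_{S^*}$ of one-parameter subgroups of $G^\sim_{\mathscr S}$, so it suffices to run a smooth parameter $\delta$ through the group data $(T,O,\mathcal X^a,\Sigma,Z)$ and differentiate the transformation components~\eqref{eq:ClassSEquivGroupA}--\eqref{eq:ClassSEquivGroupC} at $\delta=0$. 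A first observation that simplifies everything is that a one-parameter subgroup through the identity stays in the connected component $T_t>0$; hence throughout the computation $\varepsilon'=1$, $\hat\psi=\psi$ and $\hat S=S$, so the discrete reflections listed after Corollary~\ref{cor:ClassFEquivGroup} contribute nothing to the algebra and may be ignored.

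First I would fix a parametrization in which the flows are genuine one-parameter subgroups. The data $T$ and $O$ compose as time reparametrizations and rotations, whose one-parameter subgroups are the flows of $\tau\p_t$ and of a fixed antisymmetric matrix, while $\mathcal X^a$ and $Z$ enter additively. The $\Sigma$-component, however, acquires under composition a term quadratic in $\mathcal X^a$, so a naive curve $\Sigma=\delta\sigma$ with a frozen nonzero $\mathcal X^a$ fails to be a subgroup. Writing $\Sigma=\tfrac14\mathcal X^a\mathcal X^a_t+\bar\Sigma$ absorbs this cocycle, after which $\bar\Sigma$ behaves additively like $Z$. With this gauge in hand the computation splits into five independent one-parameter families: I let one of $T,O,\mathcal X^a,\bar\Sigma,Z$ depend on $\delta$, with nontrivial derivatives at $\delta=0$ being $\tau$ for $T$, an infinitesimal rotation in the $(x_a,x_b)$-plane for $O$, and $\chi^a,\sigma,\zeta$ for the rest, while the remaining data take their identity values $t,E,0,0,0$, and differentiate.

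The five resulting generators fall out directly. The $T$-flow gives $\hat D(\tau)$, where $\tfrac12\tau_tx_a\p_a$ comes from differentiating $|T_t|^{1/2}$ in~\eqref{eq:ClassSEquivGroupA}, the quadratic phase $\tfrac i8\tau_{tt}x_ax_a$ from differentiating $T_{tt}/|T_t|$ in~\eqref{eq:ClassSEquivGroupB}, and the $\p_S$-component from~\eqref{eq:ClassSEquivGroupC}; the $O$-flow gives the pure rotation $\hat J_{ab}$, which acts only on the $x$-variables; the $\mathcal X^a$-flow gives $\hat P(\chi)$, where the chosen gauge makes the $\Sigma$-terms vanish to first order so that $\eta=\tfrac i2\chi^a_tx_a\psi$ and $\theta=\tfrac12\chi^a_{tt}x_a$ survive cleanly; and the $\bar\Sigma$- and $Z$-flows give $\hat M(\sigma)$ and $\hat I(\zeta)$, respectively, as the pure phase and pure scaling of $\psi$ together with their images in $S$. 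I would present these five calculations in sequence, quoting the coefficients rather than reproducing every derivative.

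The main obstacle is not conceptual but bookkeeping: the $\p_S$-components require differentiating the unwieldy expression~\eqref{eq:ClassSEquivGroupC}, and it is here that the factors of $x_ax_a$, the $T$-derivatives up to third order, and the terms linear in $\mathcal X^a_t$ are easy to mismanage. The $\Sigma$-reparametrization is the one genuinely delicate step, needed both to guarantee that the corresponding subgroups exist and to keep the $\hat P$- and $\hat M$-directions decoupled at first order. Finally, since $(T,O,\mathcal X^a,\bar\Sigma,Z)$ parametrize $G^\sim_{\mathscr S}$ independently and exhaust its data, these five families span the entire tangent space at the identity; hence $\mathfrak g^\sim_{\mathscr S}$ is precisely their linear span, and its closure under the Lie bracket is automatic, being the tangent algebra of a group.
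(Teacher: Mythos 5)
Your proposal is correct and follows essentially the same route as the paper: the authors likewise compute $\mathfrak g^\sim_{\mathscr S}$ as the set of infinitesimal generators of one-parameter subgroups of $G^\sim_{\mathscr S}$ from Theorem~\ref{thm:ClassSGequiv}, using precisely the reparametrization $\Sigma=\frac14\mathcal X^a\mathcal X^a_t+\bar\Sigma$ to guarantee that such subgroups exist, and then letting each of $T$, $O$, $\mathcal X^a$, $\bar\Sigma$, $Z$ depend on a parameter $\delta$ (the others frozen at $t$, $E$, $0$, $0$, $0$) and differentiating at $\delta=0$. Your identification of the $\Sigma$-cocycle as the one genuinely delicate step matches the paper's treatment exactly.
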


\section[Preliminary analysis of Lie symmetries of equations from an intermediate class]
{Preliminary analysis of Lie symmetries of equations\\ from an intermediate class}
\label{sec:PreliminaryAnalysisOfLieSymsOfClassS}

For a fixed value of the arbitrary element~$S$,
let $\mathfrak g_S$ denote the maximal Lie invariance algebra
of the corresponding equation~$\mathcal L_S$ from the class $\mathscr S$.
Elements of~$\mathfrak g_S$ are vector fields in the space of variables $(t,x,\psi,\psi^*)$
of the form $Q=\tau\p_t+\xi^a\p_a+\eta\p_\psi+\eta^*\p_{\psi^*}$,
where the components $\tau$ and $\xi^a$ (resp.\ $\eta$)
are real-valued (resp.\ complex-valued) smooth functions of $(t,x,\psi,\psi^*)$
that satisfy the infinitesimal invariance criterion for the equation~$\mathcal L_S$,
\begin{equation}\label{eq:ClassSInvCriterion}
Q_{(2)}\left(i\psi_t+\psi_{aa}+S(t,x,\rho)\psi\right)\big|_{\mathcal L_S}
=\left(i\eta^t+\eta^{aa}+(\tau S_t+\xi^a S_a)\psi+\rho S_\rho\eta\right)\big|_{\mathcal L_S}=0.
\end{equation}
Here $\eta^*$ is the complex conjugate of~$\eta$.
$Q_{(2)}$ is the second prolongation of the vector field $Q$
whose components~$\eta^t$ and~$\eta^{ab}$ correspond to the jet coordinates~$\psi_t$ and~$\psi_{ab}$, respectively,
\begin{gather*}
\eta^t=\mathrm D_t\left(\eta-\tau\psi_t-\xi^a\psi_a\right)+\tau\psi_{tt}+\xi^a\psi_{ta},\quad
\eta^{ab}=\mathrm D_a\mathrm D_b\left(\eta-\tau\psi_t-\xi^c\psi_c\right)+\tau\psi_{tab}+\xi^c\psi_{abc}.
\end{gather*}
\begin{subequations}\label{eq:ClassSDetEqs}
Recall that $\mathrm D_t$ and $\mathrm D_a$ are the operators of total derivatives
with respect to~$t$ and $x_a$, respectively.
We expand the condition~\eqref{eq:ClassSInvCriterion}, confine it to the manifold defined by the equation~$\mathcal L_S$
in the underlying jet space by substituting  $\psi_t=i\psi_{aa}+iS\psi$ and~$\psi^*_t=-i\psi^*_{aa}-iS^*\psi^*$
and split the obtained equation
with respect to the parametric derivatives~$\psi_{ta}$, $\psi^*_{ta}$, $\psi_{ab}$, $\psi^*_{ab}$, $\psi_a$ and $\psi^*_a$.
After a re-arrangement, we derive the system of determining equations for the components of a vector field~$Q\in\mathfrak g_S$,
\begin{gather}\label{eq:ClassSDetEqsA}
\tau_\psi=\tau_{\psi^*}=\tau_a=0,\quad
\xi^a_\psi=\xi^a_{\psi^*}=0,\quad
\tau_t=2\xi^1_1=\dots=2\xi^n_n,\quad
\xi^a_b+\xi^b_a=0,\ a\ne b,
\\\label{eq:ClassSDetEqsB}
\eta_{\psi^*}=\eta_{\psi\psi}=0,\quad
2\eta_{\psi a}=i\xi^a_t,\quad \psi\eta_\psi=\eta,
\\[1ex]\label{eq:ClassSDetEqsC}
i\eta_t+\eta_{aa}+(\tau S_t+\xi^a S_a)\psi+\rho S_\rho\mathop{\rm Re}\eta_\psi+\tau_t S=0.
\end{gather}
\end{subequations}
The general solution of the subsystem~\eqref{eq:ClassSDetEqsA}--\eqref{eq:ClassSDetEqsB} is
\begin{gather*}
\tau=\tau(t),\quad
\xi^a=\frac12\tau_tx_a+\kappa_{ab}x_b+\chi^a,\quad
\eta=\left(\frac i8\tau_{tt}x_ax_a+\frac i2\chi^a x_a+i\sigma+\zeta\right)\psi,
\end{gather*}
where~$\tau$, $\chi^a$, $\sigma$ and~$\zeta$ are smooth real-valued functions of~$t$,
and $(\kappa_{ab})$ is a constant skew-symmetric matrix.
Substituting these expressions into the equation~\eqref{eq:ClassSDetEqsC},
we get the classifying condition for the Lie symmetry vector fields of equations from the class~$\mathscr S$.

\begin{theorem}\label{thm:ClassSMIA}
The maximal Lie invariance algebra~$\mathfrak g_S$ of an equation~$\mathcal L_S$ from the class~$\mathscr S$ consists
of the vector fields of the form  $Q=D(\tau)-\sum_{a<b}\kappa_{ab}J_{ab}+P(\chi)+\sigma M+\zeta I$,
where
\begin{gather*}
D(\tau)=\tau\p_t+\frac12\tau_tx_a\p_a+\frac18\tau_{tt}x_ax_aM,\quad
J_{ab}=x_a\p_b-x_b\p_a,\quad a\ne b,\\
P(\chi)=\chi^a\p_a+\frac12\chi^a_tx_aM,\quad
M=i\psi\p_\psi-i\psi^*\p_{\psi^*},\quad
I=\psi\p_\psi+\psi^*\p_{\psi^*},
\end{gather*}
the parameters $\tau$, $\chi^a$, $\sigma$ and $\zeta$ are arbitrary real-valued smooth functions of $t$
and $(\kappa_{ab})$ is an arbitrary constant skew-symmetric $n\times n$ matrix
that together satisfy the classifying condition
\begin{equation}\label{NSchEGMNClassifyingCondition}
\tau S_t+\left(\frac12\tau_tx_a+\kappa_{ab}x_b+\chi^a\right)S_a+\zeta\rho S_\rho+\tau_tS
=\frac18\,\tau_{ttt}x_ax_a+\frac12\chi^a_{tt}x_a+\sigma_t-i\zeta_t-i\frac n4\,\tau_{tt}.
\end{equation}
\end{theorem}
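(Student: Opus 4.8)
The plan is to build directly on the two facts established just before the statement: the full system of determining equations~\eqref{eq:ClassSDetEqs} has already been extracted from the infinitesimal invariance criterion~\eqref{eq:ClassSInvCriterion} by expanding the prolongation, confining to~$\mathcal L_S$, and splitting with respect to the parametric derivatives; and the general solution of its lower-order subsystem~\eqref{eq:ClassSDetEqsA}--\eqref{eq:ClassSDetEqsB} has been found, pinning down $\tau=\tau(t)$, $\xi^a=\frac12\tau_tx_a+\kappa_{ab}x_b+\chi^a$ and $\eta=g\psi$ with $g=\frac i8\tau_{tt}x_ax_a+\frac i2\chi^a_tx_a+i\sigma+\zeta$ in terms of the real-valued functions $\tau,\chi^a,\sigma,\zeta$ of~$t$ and the constant skew-symmetric matrix $(\kappa_{ab})$; the $\frac i2\chi^a_tx_a$ term is what integration of~\eqref{eq:ClassSDetEqsB} yields. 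It then remains to do two things: recast the vector field $Q=\tau\p_t+\xi^a\p_a+\eta\p_\psi+\eta^*\p_{\psi^*}$ in the claimed generator form, and reduce the single surviving determining equation~\eqref{eq:ClassSDetEqsC} to the classifying condition~\eqref{NSchEGMNClassifyingCondition}.

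For the first task I would simply match coefficients. The $\p_t$- and spatial $\p_a$-parts of~$Q$ are distributed among $D(\tau)$ (the $\tau\p_t$ summand together with the scaling $\frac12\tau_tx_a\p_a$), the rotations, and the shifts $\chi^a\p_a$ of $P(\chi)$. The rotational part $\kappa_{ab}x_b\p_a$ is rewritten as $-\sum_{a<b}\kappa_{ab}J_{ab}$, where the skew-symmetry $\kappa_{ab}=-\kappa_{ba}$ is precisely what collapses the double sum over $a\ne b$ into the sum over $a<b$. For the $\p_\psi$-component one checks that the purely imaginary contributions $\frac i8\tau_{tt}x_ax_a$, $\frac i2\chi^a_tx_a$ and $i\sigma$ are produced respectively by the $M$-tails of $D(\tau)$ and $P(\chi)$ and by $\sigma M$, while the real contribution $\zeta$ comes from $\zeta I$; the $\p_{\psi^*}$-component is then automatically the complex conjugate. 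This gives $Q=D(\tau)-\sum_{a<b}\kappa_{ab}J_{ab}+P(\chi)+\sigma M+\zeta I$.

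For the second task I would substitute $\eta=g\psi$ into~\eqref{eq:ClassSDetEqsC} and read off each term. The key observations are that $\mathop{\rm Re}\eta_\psi=\mathop{\rm Re}g=\zeta$, since every other summand of~$g$ is purely imaginary; that $i\eta_t$ contributes $-\frac18\tau_{ttt}x_ax_a-\frac12\chi^a_{tt}x_a-\sigma_t+i\zeta_t$; and that $\eta_{aa}$ contributes $\frac{in}4\tau_{tt}$, the factor~$n$ arising because each of the $n$ second $x_a$-derivatives of $\frac i8\tau_{tt}x_bx_b$ equals $\frac i4\tau_{tt}$. Collecting these with the source terms $\tau S_t+\xi^aS_a$, $\rho S_\rho\,\mathop{\rm Re}\eta_\psi=\zeta\rho S_\rho$ and $\tau_tS$, and moving the $Q$-independent part to the right-hand side, yields exactly~\eqref{NSchEGMNClassifyingCondition} with $\xi^a=\frac12\tau_tx_a+\kappa_{ab}x_b+\chi^a$.

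The proof carries no genuine obstacle: the conceptual work — prolonging, restricting to the equation manifold, and splitting off the parametric derivatives — lies upstream in the derivation of~\eqref{eq:ClassSDetEqs}, and what remains is bookkeeping. The only points demanding care are the real/imaginary separation that isolates $\zeta$ as $\mathop{\rm Re}\eta_\psi$ and feeds a factor~$i$ into the $\zeta_t$ term, and the correct accounting of the space dimension, which enters the final condition solely through the term $-i\frac n4\tau_{tt}$ coming from $\eta_{aa}$.
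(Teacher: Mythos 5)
Your proposal is correct and follows essentially the same route as the paper, which likewise obtains the theorem by substituting the general solution of the subsystem~\eqref{eq:ClassSDetEqsA}--\eqref{eq:ClassSDetEqsB} into the remaining determining equation~\eqref{eq:ClassSDetEqsC} and rearranging. Your explicit bookkeeping is accurate, including the $\frac i2\chi^a_tx_a$ term (where the paper's displayed general solution omits the subscript~$t$) and the contributions $-i\frac n4\tau_{tt}$ from $\eta_{aa}$ and $\zeta=\mathop{\rm Re}\eta_\psi$.
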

Varying the arbitrary element $S$ and splitting the classifying condition~\eqref{NSchEGMNClassifyingCondition}
with respect to derivatives of $S$,
we find the system of determining equations for elements of the kernel Lie invariance algebra $\mathfrak g^\cap$
of equations from the class~$\mathscr S$, $\tau=\chi=\sigma_t=\zeta=0$ and $\kappa_{ab}=0$,
which implies $\mathfrak g^\cap=\langle M\rangle$.
Choosing various appropriate values of~$S$ in Theorem~\ref{thm:ClassSGequiv},
we show that the common point transformations of equations from the class~$\mathscr S$
are of the form~\eqref{eq:ClassSEquivGroupA}--\eqref{eq:ClassSEquivGroupB}
with $T=t$, $O=E$ and $\mathcal X^a=0=\Sigma_t=Z=0$.

\begin{proposition}\label{kernalinvariancepaper}
The kernel point-symmetry group~$G^\cap$ of equations from the class~$\mathscr S$
is constituted by the point transformations
$\tilde t=t$, $\tilde x=x$, $\tilde\psi=e^{ic}\psi$,
where $c$ is an arbitrary real constant.
The Lie algebra of~$G^\cap$ coincides with $\mathfrak g^\cap=\langle M\rangle$.
\end{proposition}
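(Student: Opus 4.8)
The plan is to leverage the normalization of the class~$\mathscr S$ established in Theorem~\ref{thm:ClassSGequiv}. A point transformation lies in the kernel group~$G^\cap$ precisely when it is a symmetry of \emph{every} equation~$\mathcal L_S$; in particular, for each admissible~$S$ it is an admissible transformation of~$\mathcal L_S$ onto itself. Since~$\mathscr S$ is normalized, such a transformation is the $(t,x,\psi,\psi^*)$-projection~$\pi$ of an element of the equivalence group~$G^\sim_{\mathscr S}$, whose components are parameterized by $T$, $O$, $\mathcal X^a$, $\Sigma$ and~$Z$ as in~\eqref{eq:ClassSEquivGroupA}--\eqref{eq:ClassSEquivGroupB} and whose induced action on the arbitrary element is given by~\eqref{eq:ClassSEquivGroupC}. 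Thus $G^\cap$ is exactly the set of those transformations~\eqref{eq:ClassSEquivGroup} for which the target value satisfies $\tilde S=S$ identically for every admissible~$S$. First I would reduce the whole problem to analysing this single functional constraint.

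The key device is that the only $S$-dependent summand in~\eqref{eq:ClassSEquivGroupC} is $\hat S/|T_t|$, all the remaining terms forming an $S$-independent function $R=R(t,x)$. Hence, writing the symmetry condition for two admissible elements~$S_1$ and $S_2$ and subtracting, the term~$R$ cancels and one obtains $g(\tilde t,\tilde x,\tilde\rho)=\hat g(t,x,\rho)/|T_t|$ for $g:=S_1-S_2$, which we may take to run over a convenient family of smooth functions (realized for the tests below by $S_1=\rho+g$, $S_2=\rho$). I would then substitute suitable test functions. Taking $g=\rho$ and $g=\rho^2$ and recalling $\tilde\rho=e^Z\rho$ forces $e^Z=1/|T_t|$ and $e^{2Z}=1/|T_t|$, whence $Z=0$ and $|T_t|=1$. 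Taking $g=t$ then gives $T=t$, so in particular $T_t>0$ and the hat operation reduces to the identity; taking $g=x_a$ gives $O^{ab}x_b+\mathcal X^a=x_a$, i.e. $O=E$ and $\mathcal X^a=0$. In this way the discrete candidates---time reversal ($T_t<0$) and spatial reflections ($\det O=-1$)---are excluded automatically.

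It remains to back-substitute $T=t$, $O=E$, $\mathcal X^a=0$, $Z=0$ (so that $T_{tt}=0$ and $\varepsilon'=1$) into the full condition $S=\hat S/|T_t|+R$, which collapses to $R=0$ and hence, with these values, to $\Sigma_t=0$, i.e. $\Sigma$ is a real constant~$c$. Feeding the surviving parameters into~\eqref{eq:ClassSEquivGroupB} leaves only the factor $e^{i\Sigma}=e^{ic}$, so every element of~$G^\cap$ has the asserted form $\tilde t=t$, $\tilde x=x$, $\tilde\psi=e^{ic}\psi$; conversely these are manifestly symmetries of all~$\mathcal L_S$. The family is a connected one-parameter group, so its Lie algebra is spanned by $\tfrac{\mathrm d}{\mathrm dc}\big|_{c=0}$ of its flow, namely $M=i\psi\p_\psi-i\psi^*\p_{\psi^*}$; this coincides with the kernel invariance algebra $\mathfrak g^\cap=\langle M\rangle$ already obtained by splitting the classifying condition~\eqref{NSchEGMNClassifyingCondition}.

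The computations themselves are routine; the single point deserving care is the selection of \emph{enough} test functions~$S$ to annihilate all continuous \emph{and} discrete parameters simultaneously, which is why one tests against $\rho$-dependent, $t$-dependent and $x$-dependent values of~$S$ separately. The real crux, rather than any calculation, is the reduction via normalization that confines~$G^\cap$ to~$\pi(G^\sim_{\mathscr S})$ from the outset, after which the determination of the parameters is forced.
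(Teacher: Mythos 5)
Your argument is correct and follows essentially the same route as the paper: the paper likewise invokes the normalization of~$\mathscr S$ from Theorem~\ref{thm:ClassSGequiv} and then ``chooses various appropriate values of~$S$'' to force $T=t$, $O=E$, $\mathcal X^a=0$, $Z=0$ and $\Sigma_t=0$. Your explicit test functions $\rho$, $\rho^2$, $t$, $x_a$ simply make concrete the choices the paper leaves implicit.
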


Consider the linear span $\mathfrak g_\spanindex$ of all the maximal Lie invariance algebras
of equations from the class~$\mathscr S$, $\mathfrak g_\spanindex:=\sum_S\mathfrak g_S$.
For any vector field~$Q$ of the general form from Theorem~\ref{thm:ClassSMIA},
where at least one of the parameters $\tau$, $\kappa_{ab}$, $\chi^a$ and $\zeta$ takes a nonzero value,
there are values of the arbitrary element~$S$ that each satisfies, together with the components of~$Q$, the classifying condition~\eqref{NSchEGMNClassifyingCondition},
and thus the corresponding algebra~$\mathfrak g_S$ contains~$Q$.
Therefore,
\[
\mathfrak g_\spanindex:=\big\langle\,D(\tau),\,J_{ab},\,P(\chi),\,\sigma M,\,\zeta I\big\rangle,
\]
where the parameter functions $\tau$, $\chi^a$, $\sigma$ and~$\zeta$ run through the set of real-valued smooth functions of~$t$.
The nonzero commutation relations between vector fields spanning $\mathfrak g_\spanindex$ are
\begin{gather*}\label{MultiLinSchEqs(1+1)comrel}
[D(\tau^1),D(\tau^2)]=D(\tau^1\tau^2_t-\tau^2\tau^1_t),\quad
[D(\tau),P(\chi)]=P\left(\tau\chi_t-\frac{\tau_t}2\chi\right),\\[.2ex]
[ D(\tau),\sigma M]=\tau\sigma_t M,\quad
[D(\tau),\zeta I]=\tau\zeta_t I,\quad
[J_{ab},J_{bc}]=J_{ac},\quad a\ne b\ne c\ne a,\\[1ex]
[J_{ab},P(\chi)]=P(\hat\chi)
\quad\mbox{with}\quad \hat\chi^a=\chi^b,\quad \hat\chi^b=-\chi^a,\quad \chi^c=0,\quad a\ne b\ne c\ne a,\\[.2ex]
[P(\chi),P(\tilde \chi)]=\frac12\left(\chi^a\tilde\chi^a_t-\tilde\chi^a\chi^a_t\right)M.
\end{gather*}

From these commutation relations, it is clearly seen that
the space~$\mathfrak g_\spanindex$ is closed with respect to the Lie bracket of vector fields and thus is a Lie algebra.
The subspaces
$\langle \sigma M \rangle$,
$\langle\zeta I \rangle$,
$\langle P(\chi),\sigma M\rangle$,
$\langle P(\chi),\sigma M,\zeta I\rangle$,
$\langle J_{ab},P(\chi),\sigma M\rangle$,
$\langle J_{ab},P(\chi),\sigma M,\zeta I\rangle$ and
$\langle D(\tau), P(\chi),\sigma M,\zeta I\rangle$
are ideals of~$\mathfrak g_\spanindex$.
Furthermore, the subspaces $\langle D(\tau)\rangle$, $\langle J_{ab}\rangle$ and $\langle D(\tau),J_{ab}\rangle$ are subalgebras of $\mathfrak g_\spanindex$.

The algebra~$\mathfrak g_\spanindex$ coincides, in view of Corollary~\ref{cor:ClassSgequiv},
with the projection $\pi_*\mathfrak g^\sim_\mathscr S$ of $\mathfrak g^\sim_\mathscr S$ to the space with the coordinates $(t,x,\psi,\psi^*)$.
Therefore, within the framework of the algebraic method, the group classification of the class~$\mathscr S$
reduces to the classification of the appropriate subalgebras of~$\mathfrak g_\spanindex$ up to $\pi_*G^\sim_{\mathscr S}$-equivalence.

\begin{definition}\label{def:AppropriateSubalg}
A subalgebra $\mathfrak s$ of $\mathfrak g_\spanindex$ is said to be {\it appropriate} if there exists an arbitrary smooth function $S$ such that $\mathfrak s= \mathfrak g_S$.
\end{definition}

By $\mathcal D(T)$, $\mathcal J(O)$, $\mathcal P(\mathcal X)$ with $\mathcal X=(\mathcal X^1,\dots,\mathcal X^n)$, $\mathcal M(\Sigma)$ and $\mathcal I(Z)$
we respectively denote the point transformations in the space with the coordinates $(t,x,\psi,\psi^*)$
that are of the form~\eqref{eq:ClassSEquivGroupA}--\eqref{eq:ClassSEquivGroupB},
where the parameter-functions~$T$, $O$, $\mathcal X^a$, $\Sigma$ and~$Z$, successively excluding one of them,
are set to the values corresponding to the identity transformation,
which are $t$ for~$T$, $E$ for~$O$ and zeroes for $\mathcal X^a$, $\Sigma$ and~$Z$.
These elementary transformations generate the entire group~$\pi_*G^\sim$.
The nonidentity pushforward actions of elementary transformations from~$\pi_*G^\sim$
on the vector fields spanning $\mathfrak g_\spanindex$~are
\begin{gather*}
\mathcal D_*(T)D(\tau)=D(\tilde \tau),\quad
\mathcal D_*(T)P(\chi)=P(\tilde \chi),\quad
\mathcal D_*(T)(\sigma M)=\tilde \sigma\tilde M,\quad
\mathcal D_*(T)(\zeta I)=\tilde\zeta\tilde I,\\[1.3ex]
\mathcal J_*(O)P(\chi)=\tilde P(O\chi),\\[1ex]
\mathcal P_*(\mathcal X)D(\tau)=\tilde D(\tau)+\tilde P\left(\tau \mathcal X_t-\frac{\tau_t}2\mathcal X\right)
+\left(\frac{\tau_{tt}}8\mathcal X^a\mathcal X^a-\frac{\tau_t}4\mathcal X^a\mathcal X^a_t-\frac\tau2 \mathcal X^a\mathcal X^a_{tt}\right)\tilde M,\\
\mathcal P_*(\mathcal X)J_{ab}=\tilde J_{ab}+P(\hat{\mathcal X})-\frac12(\mathcal X^a\mathcal X^b_t-\mathcal X^b\mathcal X^a_t)\tilde M,\\
\mathcal P_*(\mathcal X)P(\chi)=\tilde P(\chi)+\frac12(\chi^a\mathcal X^a_t-\chi^a_t\mathcal X^a)\tilde M,\\[1ex]
\mathcal M_*(\Sigma)D(\tau)=\tilde D(\tau)+\tau\Sigma_t\tilde M,\quad
\mathcal I_*(Z)D(\tau)=\tilde D(\tau)+\tau Z_t\tilde I.
\end{gather*}
Here tildes over the vector fields mean that these vector fields are expressed in the new variables,
where $\tilde \tau(\tilde t)=(T_t\tau)(T^{-1}(\tilde t))$,
$\tilde \chi(\tilde t)=(|T_t|^{1/2}\chi)(T^{-1}(\tilde t))$,
$\hat{\mathcal X}^a=\mathcal X^b$,
$\hat{\mathcal X}^b=-\mathcal X^a$,
$\hat{\mathcal X}^c=0,\,c\ne a,b$,
$\tilde \sigma=\sigma(T^{-1}(\tilde t))$,
$\tilde\rho=\rho(T^{-1}(\tilde t))$,
and in each pushforward by~$\mathcal D_*(T)$ we should substitute
the expression for~$t$ given by inverting the relation $\tilde t=T(t)$,
whereas $t=\tilde t$ for the other pushforwards.

\begin{lemma}\label{lem:ClassSIntersectionOfgSAndSigmaMZetaI}
$\mathfrak g_S\cap\langle \sigma M,\zeta I\rangle=\langle M\rangle=\mathfrak g^\cap$ for any $S$ with $(\rho S_\rho)_\rho\ne 0$ or with $S_{a\rho}\ne0$ for some~$a$.
\end{lemma}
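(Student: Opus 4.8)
The plan is to test a generic element of $\langle\sigma M,\zeta I\rangle$ against the classifying condition~\eqref{NSchEGMNClassifyingCondition} from Theorem~\ref{thm:ClassSMIA}. Such an element is $Q=\sigma M+\zeta I$ with $\sigma,\zeta$ real-valued smooth functions of~$t$, which in the parameterization of Theorem~\ref{thm:ClassSMIA} corresponds to $\tau=0$, $\chi=0$ and $\kappa_{ab}=0$. I would first dispose of the trivial inclusion: $M$ belongs to $\langle\sigma M,\zeta I\rangle$ (take $\sigma\equiv1$, $\zeta\equiv0$) and to every $\mathfrak g_S$ since $\mathfrak g^\cap=\langle M\rangle=\bigcap_S\mathfrak g_S$; hence $\langle M\rangle\subseteq\mathfrak g_S\cap\langle\sigma M,\zeta I\rangle$ for all~$S$, with no hypothesis needed.

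For the reverse inclusion I would set $\tau=0$, $\chi=0$, $\kappa_{ab}=0$ in~\eqref{NSchEGMNClassifyingCondition}. Every term carrying $\tau$, $\chi$ or $\kappa$ then vanishes, and the condition collapses to the single equation
\begin{gather*}
\zeta\,\rho S_\rho=\sigma_t-i\zeta_t .
\end{gather*}
Thus $Q=\sigma M+\zeta I\in\mathfrak g_S$ precisely when this holds.

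The decisive point is that the right-hand side is a function of~$t$ alone, so $\zeta(t)\,\rho S_\rho$ cannot depend on $\rho$ or on any $x_a$. Differentiating in~$\rho$ and in~$x_a$ (and using $\rho>0$ together with equality of mixed partials) yields
\begin{gather*}
\zeta\,(\rho S_\rho)_\rho=0,\qquad \zeta\,S_{a\rho}=0,\quad a=1,\dots,n .
\end{gather*}
The two factors $(\rho S_\rho)_\rho$ and $S_{a\rho}$ are exactly those named in the hypothesis, which is what makes the conclusion available. If $\zeta\not\equiv0$, then $\zeta\ne0$ on some open $t$-interval, where both $(\rho S_\rho)_\rho\equiv0$ and $S_{a\rho}\equiv0$; integrating, $\rho S_\rho$ is there a function of~$t$ only, i.e.\ $S_\rho=h(t)/\rho$, a purely logarithmic modular nonlinearity with $x$-independent coefficient. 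This is incompatible with the assumption $(\rho S_\rho)_\rho\ne0$ or $S_{a\rho}\ne0$ for some~$a$. Hence $\zeta\equiv0$, the collapsed condition reduces to $\sigma_t=0$, so $\sigma$ is constant and $Q=\sigma M\in\langle M\rangle$. Combining the inclusions gives $\mathfrak g_S\cap\langle\sigma M,\zeta I\rangle=\langle M\rangle=\mathfrak g^\cap$.

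The one place demanding care — the main obstacle — is the passage from the vanishing products to $\zeta\equiv0$: because $\zeta$ is a function of~$t$ while $(\rho S_\rho)_\rho$ and $S_{a\rho}$ depend on $(t,x,\rho)$, one must localize on the $t$-set where $\zeta\ne0$ and there read off the forbidden logarithmic form of~$S$, invoking the defining inequalities of the subclass. The remainder is a routine specialization of the classifying condition and poses no real difficulty.
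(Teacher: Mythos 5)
Your proof is correct and follows essentially the same route as the paper: specialize the classifying condition~\eqref{NSchEGMNClassifyingCondition} to $\tau=\chi^a=\kappa_{ab}=0$, obtain $\zeta\rho S_\rho=\sigma_t-i\zeta_t$, differentiate in $\rho$ and $x_a$ to get $\zeta(\rho S_\rho)_\rho=\zeta\rho S_{a\rho}=0$, conclude $\zeta=0$ from the hypothesis and then $\sigma_t=0$. Your extra care about localizing on the set where $\zeta\ne0$ and the explicit remark on the trivial inclusion $\langle M\rangle\subseteq\mathfrak g_S$ only make explicit what the paper leaves implicit.
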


\begin{proof}
If $\tau=0$, $\chi^a=0$ and $\kappa_{ab}=0$,
then the classifying condition~\eqref{NSchEGMNClassifyingCondition}
reduces to $\zeta\rho S_\rho=\sigma_t-i\zeta_t$.
Differentiating this equation with respect to~$\rho$ and~$x_a$,
we derive $\zeta(\rho S_\rho)_\rho=\zeta\rho S_{a\rho}=0$, which implies $\zeta=0$.
Then $\sigma_t=0$.
\end{proof}

\begin{lemma}\label{lem:ClassSIntersectionOfgSAndSigmaM}
$\mathfrak g_S\cap\langle \sigma M\rangle=\langle M\rangle$ for any $S$ with $S_\rho\ne 0$.
\end{lemma}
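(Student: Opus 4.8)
The plan is to read off the intersection directly from the classifying condition~\eqref{NSchEGMNClassifyingCondition} of Theorem~\ref{thm:ClassSMIA}. Every element of~$\mathfrak g_S$ is parameterized by a tuple $(\tau,\kappa_{ab},\chi^a,\sigma,\zeta)$ subject to that condition, and such an element lies in the subspace $\langle\sigma M\rangle$ precisely when all of its parameters except~$\sigma$ vanish, i.e.\ $\tau=0$, $\kappa_{ab}=0$, $\chi^a=0$ and $\zeta=0$. So the first step is to substitute these vanishing values into~\eqref{NSchEGMNClassifyingCondition}.

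Under this substitution the entire left-hand side of~\eqref{NSchEGMNClassifyingCondition} vanishes, since each of its terms carries one of the factors $\tau$, $\tfrac12\tau_tx_a+\kappa_{ab}x_b+\chi^a$, $\zeta$ or $\tau_t$; likewise every term on the right-hand side except $\sigma_t$ contains $\tau$, $\chi^a$ or $\zeta$. The condition therefore collapses to $\sigma_t=0$, forcing $\sigma$ to be a real constant, so that $\sigma M\in\langle M\rangle$. Since $M$ itself belongs to the kernel algebra $\mathfrak g^\cap=\langle M\rangle\subseteq\mathfrak g_S$, the reverse inclusion is immediate, and I would conclude that $\mathfrak g_S\cap\langle\sigma M\rangle=\langle M\rangle$.

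I expect no real obstacle here. Unlike Lemma~\ref{lem:ClassSIntersectionOfgSAndSigmaMZetaI}, where the presence of the generator $\zeta I$ leaves the reduced condition $\zeta\rho S_\rho=\sigma_t-i\zeta_t$, so that one must differentiate in $\rho$ and $x_a$ and invoke the hypotheses on~$S$ to eliminate~$\zeta$, here the absence of $\zeta I$ sets $\zeta=0$ from the outset and the classifying condition yields $\sigma_t=0$ in a single step. In fact the statement can be viewed as the $\zeta=0$ specialization of the computation in the proof of Lemma~\ref{lem:ClassSIntersectionOfgSAndSigmaMZetaI}; the hypothesis $S_\rho\ne0$ serves only to guarantee that~$S$ determines an equation of the class~$\mathscr S$, so that the general form of~$\mathfrak g_S$ provided by Theorem~\ref{thm:ClassSMIA} applies.
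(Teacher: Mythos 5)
Your proposal is correct and follows exactly the paper's own argument: setting $\tau=\chi^a=\kappa_{ab}=\zeta=0$ in the classifying condition~\eqref{NSchEGMNClassifyingCondition} collapses it to $\sigma_t=0$, which forces $\sigma$ to be constant. Your additional remarks on the reverse inclusion and on the role of the hypothesis $S_\rho\ne0$ are accurate but merely make explicit what the paper leaves implicit.
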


\begin{proof}
For  $\tau=0$, $\chi^a=0$, $\kappa_{ab}=0$ and $\zeta=0$,
the classifying condition~\eqref{NSchEGMNClassifyingCondition}
reduces to $\sigma_t=0$.
\end{proof}

\begin{lemma}\label{lem:ClassSUpperBoundOfDimgS}
$\dim\mathfrak g_S\leqslant\dfrac{n(n+3)}{2}+4$ for any $S$ with $S_\rho\ne 0$,
and this upper bound is the least.
\end{lemma}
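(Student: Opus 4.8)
The plan is to reduce the statement to a finite-dimensional count and then to split the classifying condition~\eqref{NSchEGMNClassifyingCondition} using the hypothesis $S_\rho\ne0$. First I would note that, by the direct-sum decomposition of $\mathfrak g_\spanindex$ (the five families $D(\tau)$, $J_{ab}$, $P(\chi)$, $\sigma M$, $\zeta I$ are linearly independent, as one sees by successively inspecting the coefficients of $\p_t$, of $\p_a$ and of $\psi\p_\psi$), the assignment $Q\mapsto(\tau,(\kappa_{ab}),\chi,\sigma,\zeta)$ is a linear isomorphism of $\mathfrak g_S$ onto the space $\Omega_S$ of parameter tuples solving~\eqref{NSchEGMNClassifyingCondition}, so that $\dim\mathfrak g_S=\dim\Omega_S$. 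Since $\sigma$ enters~\eqref{NSchEGMNClassifyingCondition} only through $\sigma_t$ on its right-hand side, the constant values of $\sigma$ form a one-dimensional subspace corresponding to $M\in\mathfrak g^\cap$, and once the remaining parameters are fixed $\sigma$ is recovered up to this constant; thus $\sigma$ contributes exactly one dimension.

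The core of the argument is to bound the freedom in $(\tau,\chi,\kappa,\zeta)$. Differentiating~\eqref{NSchEGMNClassifyingCondition} with respect to $\rho$ removes the ($\rho$-independent) right-hand side and yields
\[
\tau S_{t\rho}+\Big(\tfrac12\tau_tx_a+\kappa_{ab}x_b+\chi^a\Big)S_{a\rho}+\tau_tS_\rho+\zeta(\rho S_\rho)_\rho=0,
\]
an identity in $(t,x,\rho)$ that couples $\tau_t$ with $\zeta$. Here I would distinguish two cases, guided by Lemma~\ref{lem:ClassSIntersectionOfgSAndSigmaMZetaI}. In the principal case $(\rho S_\rho)_\rho\not\equiv0$ this relation expresses $\zeta$ as a fixed linear combination of $\tau,\tau_t,\chi^a$ and $\kappa_{ab}$, so $\zeta$ is slaved. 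I would then return to~\eqref{NSchEGMNClassifyingCondition}, isolate its $\rho$-independent part and split it with respect to powers of $x$: the coefficient of the isotropic quadratic $x_ax_a$ equates $\tfrac18\tau_{ttt}$ to an expression linear in $\tau,\tau_t,\chi,\kappa$, giving a third-order linear ODE for $\tau$; the coefficients of $x_a$ equate $\tfrac12\chi^a_{tt}$ to analogous expressions, giving second-order linear ODEs for the $\chi^a$; and the remaining $x$-monomials, together with the constant term, only impose further relations that can lower the dimension. Consequently $(\tau,\chi)$ are determined by the finite jet $(\tau,\tau_t,\tau_{tt},\chi^a,\chi^a_t)$ at one point, while $\kappa$ ranges over a $\tfrac{n(n-1)}2$-dimensional space and $\sigma$ over one dimension, giving $\dim\Omega_S\le 3+2n+\tfrac{n(n-1)}2+1=\tfrac{n(n+3)}2+4$.

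The delicate case is the logarithmic-type one, $(\rho S_\rho)_\rho\equiv0$, i.e.\ $S=b(t,x)\ln\rho+c(t,x)$ with $b\not\equiv0$. There the $\zeta$-term disappears from the $\rho$-differentiated identity, so $\zeta$ is no longer slaved and may acquire one extra dimension (it then satisfies a first-order ODE read off from the imaginary part of the constant-in-$x$ component of~\eqref{NSchEGMNClassifyingCondition}). The naive count would give $\tfrac{n(n+3)}2+5$, and the point is that the surviving relation $\tau b_t+(\tfrac12\tau_tx_a+\kappa_{ab}x_b+\chi^a)b_a+\tau_tb=0$ compensates exactly this extra dimension; for instance when $b=b(t)$ it forces $(\tau b)_t=0$, cutting the $\tau$-freedom from three to one. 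Showing that this compensation occurs uniformly in $b$ is the main obstacle, and I expect to handle it by splitting the last relation with respect to $x$ and checking that it lowers the combined $(\tau,\chi,\kappa)$-dimension by at least the dimension gained by $\zeta$.

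Finally, to see that the bound is least I would exhibit an equation attaining it. Taking the pseudoconformally invariant nonlinearity $S=\rho^{4/n}$ (for $n=2$ the cubic term $S=\rho^2$, i.e.\ $i\psi_t+\psi_{aa}+|\psi|^2\psi=0$), the displayed relation gives $\zeta=-\tfrac n4\tau_t$ and the $\rho$-independent remainder of~\eqref{NSchEGMNClassifyingCondition} reduces to $\tfrac18\tau_{ttt}x_ax_a+\tfrac12\chi^a_{tt}x_a+\sigma_t-i\zeta_t-i\tfrac n4\tau_{tt}=0$; hence $\tau$ is an arbitrary quadratic, each $\chi^a$ an arbitrary linear function, $\sigma$ an arbitrary constant, $\kappa$ arbitrary, and the imaginary constant term is satisfied identically precisely because $\tfrac n4-\tfrac n4=0$. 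This realizes the full Schr\"odinger algebra of dimension $\tfrac{n(n+3)}2+4$, so the bound is sharp.
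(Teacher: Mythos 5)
Your generic case, the count $3+2n+\tfrac{n(n-1)}2+1$, and the sharpness computation for $S=\rho^{4/n}$ are all correct and match the paper's own argument, which likewise converts \eqref{NSchEGMNClassifyingCondition} into a linear ODE system and counts constants of integration plus the $\tfrac{n(n-1)}2$ rotation parameters. The genuine gap is the one you name yourself and then do not close: the case $(\rho S_\rho)_\rho\equiv0$ with $\rho S_\rho=b(t,x)$ genuinely depending on $x$ (e.g.\ $S=x_1\ln\rho$). You verify the ``compensation'' only for $b=b(t)$ and leave the general case as an expectation. A proof is genuinely required there, because a priori a single scalar identity $\tau b_t+(\tfrac12\tau_tx_a+\kappa_{ab}x_b+\chi^a)b_a+\tau_tb=0$ need not visibly cut the $(\tau,\chi,\kappa)$-freedom. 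What saves the argument is that at any point with $b(t_0,x_0)\ne0$ the coefficient $\tfrac12x_{0a}b_a+b$ of $\tau_t$ and the coefficients $b_a$ of the $\chi^a$ cannot all vanish simultaneously, so the relation evaluated at $(t_0,x_0)$ is a nonzero linear functional on the jet coordinates $(\tau,\tau_t,\chi,\kappa)$ at $t_0$ and therefore cuts the $(n(n+3)/2+5)$-dimensional ODE solution space by exactly one, offsetting the dimension regained by $\zeta$. Without some such observation the case is open, and your proposed ``split with respect to $x$'' is only the first step toward it.

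The paper closes the same hole by a different case split, which you may prefer. Lemma~\ref{lem:ClassSIntersectionOfgSAndSigmaMZetaI} is stated for all $S$ with $(\rho S_\rho)_\rho\ne0$ \emph{or} $S_{a\rho}\ne0$ for some $a$, so it already covers your problematic sub-case (there $S_{a\rho}=b_a/\rho\ne0$): it yields $\mathfrak g_S\cap\langle\sigma M,\zeta I\rangle=\langle M\rangle$, i.e.\ the two-dimensional $(\zeta,\sigma)$ block of integration constants contributes only one dimension, which shaves the naive bound $n(n+3)/2+5$ down to $n(n+3)/2+4$ uniformly across all these cases. Only the residual sub-case $S=b(t)\ln\rho+c(t,x)$ then needs the $\rho$-derivative of \eqref{NSchEGMNClassifyingCondition}, which collapses to $S_\rho\tau_t+S_{t\rho}\tau=0$, a first-order ODE for $\tau$ cutting the count by two. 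A minor further caution on your principal case: the hypothesis ``$(\rho S_\rho)_\rho\not\equiv0$'' lets you slave $\zeta(t)$ only at those $t$ for which the derivative is nonzero at some $(x,\rho)$; the intersection-lemma route sidesteps this issue as well.
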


\begin{proof}
The proof is similar to the one of Lemma~1 in~\cite{Kurujyibwami&Basarab-Horwath&Popovych2018}.
For each fixed value of the arbitrary element $S$,
the classifying condition~\eqref{NSchEGMNClassifyingCondition} yields
a system of linear ordinary differential equations of the following form:
\begin{gather*}
\tau_{ttt}=\gamma^{00}\tau_t+\gamma^{01}\tau+\gamma^{0,a+1}\chi^a+\gamma^{0,n+2}\zeta+\theta^{0ab}\kappa_{ab},\\
\chi^c_{tt}=\gamma^{c0}\tau_t+\gamma^{c1}\tau+\gamma^{c,a+1}\chi^a+\gamma^{c,n+2}\zeta+\theta^{cab}\kappa_{ab},\\
\zeta_t=-\frac n4\tau_{tt}+\gamma^{n+2,0}\tau_t+\gamma^{n+2,1}\tau+\gamma^{n+2,a+1}\chi^a+\gamma^{n+2,n+2}\zeta+\theta^{n+2,ab}\kappa_{ab},\\
\sigma_t=\gamma^{n+1,0}\tau_t+\gamma^{n+1,1}\tau+\gamma^{n+1,a+1}\chi^a+\gamma^{n+1,n+2}\zeta+\theta^{n+1,ab}\kappa_{ab},
\end{gather*}
where the coefficients~$\gamma^{pq}$ and $\theta^{pab}$,
$p=0,\dots,n+2$, $q=0,\dots,n+2$, $a<b$, are functions of $t$.
From this it is clear that the upper bound of $\dim \mathfrak g_S$ can not exceed
the sum of the number of pairs $(a,b)$ of rotations with $a<b$
and the number of arbitrary constants involved in the general solution of the above system, i.e., $n(n+3)/2+5$.
At the same time, Lemma~\ref{lem:ClassSIntersectionOfgSAndSigmaMZetaI} shows that this number is reduced by $1$
for any $S$ with $(\rho S_\rho)_\rho\ne 0$ or with $S_{a\rho}\ne0$ for some~$a$.
For other values of~$S$, we differentiate the classifying condition~\eqref{NSchEGMNClassifyingCondition}
with respect to~$\rho$.
In view of the constraint $S_\rho\ne 0$, the obtained equation $S_\rho\tau_t+S_{t\rho}\tau=0$ implies
a linear first-order ordinary differential equation in~$\tau$,
and thus the above number is reduced by $2$.

The dimension of $\mathfrak g_S$ for $S=\rho^{4/n}$ coincides with the found upper bound.
\end{proof}

\begin{lemma}\label{lem:ClassSUpperBoundOfDimgSIntersectedByGChiSigmaM}
$\dim\mathfrak g_S\cap\langle P(\chi),\sigma M\rangle \leqslant 2n+1$ for any $S$ with $S_\rho\ne 0$.
\end{lemma}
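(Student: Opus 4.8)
The plan is to restrict the classifying condition~\eqref{NSchEGMNClassifyingCondition} to the subspace in question and convert it into a linear ordinary differential system whose solution space can be counted, in the spirit of Lemma~\ref{lem:ClassSUpperBoundOfDimgS}. An element of $\mathfrak g_S\cap\langle P(\chi),\sigma M\rangle$ is a vector field $P(\chi)+\sigma M$, that is, one obtained from the general form in Theorem~\ref{thm:ClassSMIA} by setting $\tau=0$, $\kappa_{ab}=0$ and $\zeta=0$. For such values the classifying condition~\eqref{NSchEGMNClassifyingCondition} reduces to
\[
\chi^aS_a=\tfrac12\chi^a_{tt}x_a+\sigma_t,
\]
an identity in the independent jet variables $(t,x,\rho)$ in which $\chi^a$ and $\sigma$ are smooth real-valued functions of~$t$ alone. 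Since the $\p_a$-components of $P(\chi)+\sigma M$ recover~$\chi$ and the remaining $M$-component then recovers~$\sigma$, the map sending a vector field to the tuple $(\chi,\sigma)$ is injective; hence it suffices to bound by $2n+1$ the dimension of the space of solutions $(\chi,\sigma)$ of this reduced equation for the fixed arbitrary element~$S$.

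First I would differentiate the reduced equation with respect to~$x_b$, which gives $\chi^aS_{ab}=\tfrac12\chi^b_{tt}$. Since the right-hand side depends on~$t$ only, I may fix once and for all a point $(x_\ast,\rho_\ast)$ in the domain of~$S$ and evaluate there, obtaining the linear homogeneous second-order system $\chi^b_{tt}=2S_{ab}(t,x_\ast,\rho_\ast)\chi^a$ whose coefficients are smooth in~$t$ by the smoothness of~$S$. Evaluating the reduced equation itself at the same point then expresses $\sigma_t$ as an explicit linear combination, with $t$-dependent coefficients, of $\chi$ and $\chi_{tt}$, hence determines~$\sigma$ up to a single additive constant once~$\chi$ is known. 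These are necessary conditions on any solution $(\chi,\sigma)$, and together they constitute a first-order linear system for the state $(\chi,\chi_t,\sigma)\in\mathbb R^{2n+1}$.

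By the existence--uniqueness theorem for linear systems, a solution of this derived system is determined by its initial value $(\chi(t_0),\chi_t(t_0),\sigma(t_0))$ at any fixed~$t_0$; therefore the evaluation map $(\chi,\sigma)\mapsto(\chi(t_0),\chi_t(t_0),\sigma(t_0))$ is injective on the solution space of the reduced equation, which accordingly has dimension at most $2n+1$. The main point to get right is the logical direction: the quantity $\chi^aS_{ab}$ is a priori a function of $(t,x,\rho)$, and it is only because genuine solutions satisfy the reduced equation that it is forced to depend on~$t$ alone. I exploit this by passing to the necessary conditions obtained at the reference point, so that the dimension of the true solution space is bounded by that of an honest linear ODE system. (The class constraint $S_\rho\ne0$, together with the further relation $\chi^aS_{a\rho}=0$ coming from differentiating the reduced equation in~$\rho$, is not needed for the upper bound and can only decrease the dimension.)
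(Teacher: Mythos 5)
Your proof is correct and follows essentially the same route as the paper: the paper's proof specializes the linear ODE system from the proof of Lemma~\ref{lem:ClassSUpperBoundOfDimgS} by setting $\tau=\kappa_{ab}=\zeta=0$, leaving exactly the second-order system for~$\chi$ and the first-order equation for~$\sigma$ that you derive, whose solution space has dimension at most $2n+1$. You merely make explicit (via differentiation in~$x_b$ and evaluation at a reference point) how the $t$-dependent coefficients of that system arise, a step the paper leaves implicit.
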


\begin{proof}
Modifying the proof of Lemma~\ref{lem:ClassSUpperBoundOfDimgS},
we omit the first and penultimate equations of the system from that proof
and set $\tau=0$, $\kappa_{ab}=0$ and $\zeta=0$.
\end{proof}

\section[Schr\"odinger equations with potentials and modular nonlinearity]
{Schr\"odinger equations with potentials\\ and modular nonlinearity}
\label{sec:NSchEPMN}

The class $\mathscr V$ of multidimensional nonlinear Schr\"odinger equations
with potentials and modular nonlinearities of the form~\eqref{MNLinSchEqs_2}
can be embedded into the class~$\mathcal S$ as the subclass~$\tilde{\mathscr V}$ of equations with $S=f(\rho)+V(t,x)$,
where $f$ is an arbitrary complex-valued nonlinearity depending only on $\rho:=|\psi|$ with $f_\rho\ne 0$
and $V$ is an arbitrary smooth complex-valued potential depending on~$t$ and~$x$.
The subclass~$\tilde{\mathscr V}$ is singled out from the class~$\mathscr S$
by the constraints $S_{\rho t}=S_{\rho a}=0$ and $S_\rho\ne 0$ or, equivalently,
\begin{gather}\label{nliconds}
\psi S_{\psi t}+\psi^*S_{\psi^*t}=\psi S_{\psi a}+\psi^*S_{\psi^*a}=0,\quad \psi S_\psi+\psi^*S_{\psi^*}\ne 0.
\end{gather}
Equivalence transformations of the subclass~$\tilde{\mathscr V}$ are exhausted by the elements of the group~$G^\sim_{\mathscr S}$
that preserve, in addition to~\eqref{eq:ClassS} and~\eqref{eq:ClassSAuxiliarySystem},
the constraints~\eqref{nliconds}.
It is obvious from Theorem~\ref{thm:ClassSGequiv}
that not all admissible transformations of the subclass~$\tilde{\mathscr V}$ are generated by its equivalence transformations.

\begin{proposition}\label{pro:GequivNSchEPMNEmbedded}
The class~$\tilde{\mathscr V}$ is not normalized.
The equivalence group $G^\sim_{\tilde{\mathscr V}}$ of this class consists of the point transformations~\eqref{eq:ClassSEquivGroup}
with $T_{tt}=0$ and $Z_t=0$.
\end{proposition}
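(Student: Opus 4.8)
The plan is to exploit that $\tilde{\mathscr V}$ is a subclass of the \emph{normalized} class~$\mathscr S$, so by Theorem~\ref{thm:ClassSGequiv} every admissible transformation of~$\tilde{\mathscr V}$ is the restriction of an element of~$G^\sim_{\mathscr S}$ of the form~\eqref{eq:ClassSEquivGroup}. Consequently, $G^\sim_{\tilde{\mathscr V}}$ is exactly the subgroup of those elements of~$G^\sim_{\mathscr S}$ that preserve the constraints~\eqref{nliconds}, i.e.\ that map every $S=f(\rho)+V(t,x)$ with $f_\rho\ne0$ to some $\tilde S=\tilde f(\tilde\rho)+\tilde V(\tilde t,\tilde x)$ with $\tilde f_{\tilde\rho}\ne0$, for \emph{all} admissible~$f$ and~$V$. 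First I would record how the modulus transforms: the exponent in~\eqref{eq:ClassSEquivGroupB} has real part exactly~$Z$, so $\tilde\rho=e^Z\rho$, that is $\rho=e^{-Z}\tilde\rho$ with $Z=Z(t)=Z(T^{-1}(\tilde t))$.

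Second, I would substitute $\rho=e^{-Z}\tilde\rho$ into the rule~\eqref{eq:ClassSEquivGroupC}. After inverting the $(t,x)$-part of~\eqref{eq:ClassSEquivGroupA}, every geometric summand of~\eqref{eq:ClassSEquivGroupC} together with the potential contribution $\hat V/|T_t|$ becomes a function of $(\tilde t,\tilde x)$ alone; hence the only $\tilde\rho$-dependence of~$\tilde S$ resides in the single term $g(\tilde\rho,\tilde t):=\hat f(e^{-Z}\tilde\rho)/|T_t|$, which does not involve~$\tilde x$. Therefore $\tilde S_{\tilde\rho\tilde x_a}=0$ holds automatically, and membership in~$\tilde{\mathscr V}$ reduces to the single requirement $\tilde S_{\tilde\rho\tilde t}=g_{\tilde\rho\tilde t}=0$ for every admissible~$f$, together with $\tilde S_{\tilde\rho}\ne0$, which is clear from $\hat f_\rho\ne0$.

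Third, I would extract the stated constraints from $g_{\tilde\rho\tilde t}=0$ by using the arbitrariness of~$f$. Writing $g_{\tilde\rho}=(e^{-Z}/|T_t|)\,\hat f'(e^{-Z}\tilde\rho)$ and differentiating in~$\tilde t$, the choice $f=\rho$ (so $\hat f'\equiv1$, $\hat f''\equiv0$) forces $(e^{-Z}/|T_t|)_t=0$, and then $f=\rho^2$ (so $\hat f''\equiv2$) forces $(e^{-Z})_t=0$; since $\tilde t=T(t)$ is a diffeomorphism, these are equivalent to $Z_t=0$ and $T_{tt}=0$. Conversely, when $T_{tt}=0$ and $Z_t=0$ the coefficient of $x_ax_a$ in~\eqref{eq:ClassSEquivGroupC} (proportional to $T_{ttt}$ and $T_{tt}$) and the term $inT_{tt}/(4T_t{}^2)$ vanish, $|T_t|$ and~$Z$ are constants, and $g=\hat f(e^{-Z}\tilde\rho)/|T_t|$ is a function of~$\tilde\rho$ alone with nonzero derivative, so $\tilde S$ indeed has the form $\tilde f(\tilde\rho)+\tilde V(\tilde t,\tilde x)$. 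This proves both inclusions and identifies~$G^\sim_{\tilde{\mathscr V}}$.

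Finally, for non-normalization I would display an admissible transformation of~$\tilde{\mathscr V}$ lying outside~$G^\sim_{\tilde{\mathscr V}}$. Taking a power nonlinearity $f=\delta\rho^\lambda$ gives $g=(\hat\delta\,e^{-\lambda Z}/|T_t|)\,\tilde\rho^\lambda$, which depends on~$\tilde\rho$ only, provided $e^{-\lambda Z}$ is proportional to~$|T_t|$; this ties~$Z$ to~$T$ but can be solved with $T_{tt}\ne0$ and $Z_t\ne0$. Such a transformation maps one equation of~$\tilde{\mathscr V}$ to another equation of~$\tilde{\mathscr V}$ yet violates $T_{tt}=0$, $Z_t=0$, so it is not generated by the equivalence group, whence $\tilde{\mathscr V}$ is not normalized. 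I expect the main point to be the splitting in the third step—correctly invoking the arbitrariness of~$f$ to separate the $\tilde\rho$-dependence from the $\tilde t$-dependence—supported by the bookkeeping in the second step that isolates the unique $\tilde\rho$-dependent term of~\eqref{eq:ClassSEquivGroupC}.
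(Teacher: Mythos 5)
Your proposal is correct and follows essentially the same route as the paper, which derives this proposition by restricting the normalized superclass result of Theorem~\ref{thm:ClassSGequiv} to the elements of $G^\sim_{\mathscr S}$ preserving the constraints~\eqref{nliconds}, and observes non-normalization from the existence of admissible transformations (e.g.\ those of Theorem~\ref{thm:NSchEPMNPlambdaGequiv} with $e^{\lambda Z}|T_t|$ constant but $T_{tt}\ne0$) outside this subgroup. Your write-up merely makes explicit the splitting with respect to $f$ and the isolation of the $\tilde\rho$-dependent term of~\eqref{eq:ClassSEquivGroupC}, which the paper treats as immediate.
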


The class $\mathscr V$ can be interpreted as a reparameterization of the class~$\tilde{\mathscr V}$,
where the parameter functions~$f$ and~$V$ are assumed to be arbitrary elements instead of~$S$.

\begin{proposition}\label{pro:NSchEPMNGequiv}
The class~$\mathscr V$ is not normalized.
The equivalence group $G^\sim_{\mathscr V}$ of this class consists of the point transformations
in the space with the coordinates $(t,x,\psi,\psi^*,f,f^*,V,V^*)$
whose $(t,x,\psi)$-components are of the form~\eqref{eq:ClassSEquivGroupA}--\eqref{eq:ClassSEquivGroupB}
and whose components for the arbitrary elements~$f$ and~$V$ are
\begin{gather*}
\tilde f=\dfrac{\hat f}{|T_t|}+c,\quad
\tilde V=\dfrac{\hat V}{|T_t|}
+\dfrac{\mathcal X^b_{tt}}{2|T_t|^{3/2}}O^{ba}x_a
+\dfrac{\Sigma_t}{T_t}-\dfrac{\mathcal X^a_t\mathcal X^a_t}{4T_t{}^2}-c,
\end{gather*}
where $T$, $\mathcal X^a$ and $\Sigma$ are arbitrary smooth real-valued functions of~$t$
with $T_t\ne 0$ and $T_{tt}=0$, $Z$~is an arbitrary real constant,  $\varepsilon'=\sgn T_t$,
$O=(O^{ab})$~is an arbitrary constant orthogonal $n\times n$  matrix,
and $c$ is an arbitrary complex constant.
\end{proposition}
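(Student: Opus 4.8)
The plan is to exploit the observation, made in the paragraph preceding the statement, that $\mathscr V$ is the reparameterization of the embedded subclass $\tilde{\mathscr V}\subset\mathscr S$ obtained by replacing the single arbitrary element $S$ with the pair $(f,V)$ through $S=f(\rho)+V(t,x)$. Because the equations of $\mathscr V$ and those of $\tilde{\mathscr V}$ coincide, a transformational part $\varphi$ occurs in the equivalence groupoid of one class exactly when it occurs in that of the other, and the same matching holds for the equivalence groups; consequently $\mathscr V$ is normalized if and only if $\tilde{\mathscr V}$ is, so the non-normalization of $\mathscr V$ is inherited from Proposition~\ref{pro:GequivNSchEPMNEmbedded}, and the $(t,x,\psi)$-components of $G^\sim_{\mathscr V}$ are forced to coincide with those of $G^\sim_{\tilde{\mathscr V}}$, namely the transformations~\eqref{eq:ClassSEquivGroup} subject to $T_{tt}=0$ and $Z_t=0$. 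It thus remains only to re-express the transformation rule for $S$ in terms of the new arbitrary elements $f$ and $V$.

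To carry this out, I would substitute $S=f(\rho)+V(t,x)$ into the formula~\eqref{eq:ClassSEquivGroupC} for $\tilde S$ and impose $T_{tt}=0$ (so that $T_{ttt}=0$ and $T_t$ is constant) together with $Z_t=0$. Under these constraints the coefficient $(2T_{ttt}T_t-3T_{tt}{}^2)/(16\varepsilon'T_t{}^3)$ of $x_ax_a$ and the term $-inT_{tt}/(4T_t{}^2)$ both vanish, the factor $(\mathcal X^b_t/T_t)_t$ collapses to $\mathcal X^b_{tt}/T_t$, the identity $\varepsilon'/T_t=1/|T_t|$ applies, and $-iZ_t/T_t$ disappears. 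What remains is
\begin{gather*}
\tilde S=\frac{\hat f(\rho)}{|T_t|}+\frac{\hat V(t,x)}{|T_t|}
+\frac{\mathcal X^b_{tt}}{2|T_t|^{3/2}}O^{ba}x_a+\frac{\Sigma_t}{T_t}-\frac{\mathcal X^a_t\mathcal X^a_t}{4T_t{}^2}.
\end{gather*}

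The decisive step is to separate this expression into its modular and potential parts. Since $T_t$ is now constant, the summand $\hat f(\rho)/|T_t|$ depends on $\rho$ alone and hence, as $\tilde\rho=e^Z\rho$ with $Z$ constant, on $\tilde\rho$ alone, whereas every other summand depends only on $(t,x)$ and therefore only on $(\tilde t,\tilde x)$. Matching against the target form $\tilde S=\tilde f(\tilde\rho)+\tilde V(\tilde t,\tilde x)$ identifies $\tilde f$ with the first summand and $\tilde V$ with the remainder; because the splitting of a function into a $\tilde\rho$-part and a $(\tilde t,\tilde x)$-part is determined only up to an additive complex constant, I would record this freedom as the parameter $c$, which yields exactly the stated expressions for $\tilde f$ and $\tilde V$. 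This separation of variables is the only place where the constraints are genuinely needed and is the main point to handle with care: if $T_t$ depended on $t$ the term $\hat f(\rho)/|T_t(t)|$ would couple $\rho$ to $t$, and if $Z$ varied the relation $\tilde\rho=e^{Z(t)}\rho$ would make the modular argument $t$-dependent, either of which would eject the image from $\mathscr V$; this simultaneously re-confirms the restrictions $T_{tt}=0$, $Z_t=0$ and the non-normalization. The preservation of $f_\rho\ne0$ and the tracking of the conjugation (the hat operation) for $T_t<0$ are then routine.
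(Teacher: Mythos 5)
Your proposal is correct and follows essentially the same route as the paper, which states Proposition~\ref{pro:NSchEPMNGequiv} without a separate proof precisely because it is obtained by restricting Theorem~\ref{thm:ClassSGequiv} to the constraints $T_{tt}=0$, $Z_t=0$ of Proposition~\ref{pro:GequivNSchEPMNEmbedded}, substituting $S=f(\rho)+V(t,x)$ into~\eqref{eq:ClassSEquivGroupC}, and attributing the parameter $c$ to the ambiguity of splitting $\tilde S$ into a $\tilde\rho$-part and a $(\tilde t,\tilde x)$-part. Your separation-of-variables argument for why constant $T_t$ and $Z$ are exactly what make the image stay in $\mathscr V$ for arbitrary $f$, and hence why the class is not normalized, matches the paper's reasoning.
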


The appearance of the gauge equivalence transformations, which are associated with the group parameter~$c$,
is related to the ambiguity in the representation of~$S$ as the sum of~$f$ and~$V$.

Generalizing Theorem~6 from~\cite{Popovych&Kunzinger&Eshragi2010} to an arbitrary $n\in\mathbb N$,
we obtain the following assertion, which is motivated by Corollary~\ref{cor:DiffInvForS}.

\begin{theorem}\label{theoremGequivNSchEPMN}
The class~$\mathscr V$ is partitioned into the normalized subclasses~$\mathscr V'$ and $\mathscr P_\lambda$, $\lambda\in\mathbb R$,
which are singled out by the conditions
that the ratio $\rho f_{\rho\rho}/f_\rho$ $({}\equiv\rho S_{\rho\rho}/S_\rho)$ is not a real constant
and that $\rho f_{\rho\rho}/f_\rho=\lambda-1$, respectively,
$\mathscr V=\mathscr V'\bigsqcup\big(\bigsqcup_{\lambda\in\mathbb R}\mathscr P_\lambda\big)$.
There are no point transformations relating equations from different subclasses among the above ones.
\end{theorem}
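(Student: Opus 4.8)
My strategy is to use the differential invariant $\rho S_{\rho\rho}/S_\rho = \rho f_{\rho\rho}/f_\rho$ provided by Corollary~\ref{cor:DiffInvForS} as the backbone of the partition. First I would observe that since the equations of the class~$\mathscr V$ are a reparameterization of those of~$\tilde{\mathscr V}\subset\mathscr S$, and the quantity $\rho f_{\rho\rho}/f_\rho$ agrees with $\rho S_{\rho\rho}/S_\rho$ on this subclass (because $S_\rho=f_\rho$ and $S_{\rho\rho}=f_{\rho\rho}$, the potential~$V$ depending only on~$(t,x)$), the ratio is a well-defined function of~$\rho$ alone. The condition ``the ratio is not a real constant'' singles out~$\mathscr V'$, and for each real constant value $\lambda-1$ the condition $\rho f_{\rho\rho}/f_\rho = \lambda-1$ singles out~$\mathscr P_\lambda$. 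These conditions are mutually exclusive and jointly exhaustive, so set-theoretically they give the disjoint union $\mathscr V=\mathscr V'\bigsqcup(\bigsqcup_{\lambda\in\mathbb R}\mathscr P_\lambda)$.

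Next I would verify that no point transformation relates equations from different subclasses. Every admissible transformation within~$\mathscr V$ is, by the embedding of~$\mathscr V$ into~$\mathscr S$, the transformational part of an element of~$\mathcal G^\sim_{\mathscr S}$, so it suffices to check how such transformations act on the ratio. By Corollary~\ref{cor:DiffInvForS}, the ratio $\rho S_{\rho\rho}/S_\rho$ is a differential invariant of the subgroup of~$G^\sim_{\mathscr S}$ with $T_t>0$, and whenever the ratio is real-valued it is invariant under the full group~$G^\sim_{\mathscr S}$ (the Wigner time reflection acting by complex conjugation leaves a real quantity fixed). For equations in any~$\mathscr P_\lambda$ the ratio is the real constant $\lambda-1$, hence invariant; this shows the~$\mathscr P_\lambda$ are pairwise unrelated by point transformations and are separated from~$\mathscr V'$. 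To separate two equations inside~$\mathscr V'$ one does not need full invariance, only that ``being (not) a real constant'' is a preserved property, which again follows from Corollary~\ref{cor:DiffInvForS}; since the conformal/time-scaling part of~$G^\sim_{\mathscr S}$ acts on the argument~$\rho$ of the ratio by a $\rho$-independent rescaling, constancy in~$\rho$ is preserved. Thus no admissible transformation can carry an equation with a non-constant ratio to one with a constant ratio, and membership in~$\mathscr V'$ versus $\bigsqcup_\lambda\mathscr P_\lambda$ is transformation-invariant.

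Finally I would establish normalization of each subclass. The equivalence group $G^\sim_{\mathscr V}$ is already described in Proposition~\ref{pro:NSchEPMNGequiv}, so for each subclass I would intersect $G^\sim_{\mathscr V}$ with the defining constraint and then show that every admissible transformation of the subclass is generated by this intersection. For~$\mathscr P_\lambda$, the constraint $\rho f_{\rho\rho}/f_\rho=\lambda-1$ forces $f=\delta\rho^\lambda$ (or $f=\delta\ln\rho$ when $\lambda=0$, after the appropriate integration), and one checks directly from the action of $G^\sim_{\mathscr V}$ on~$f$—namely $\tilde f=\hat f/|T_t|+c$—together with the constraint $T_{tt}=0$ that admissible transformations preserve this power/logarithmic form and are exhausted by equivalence transformations. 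For~$\mathscr V'$, normalization follows because the transformation of~$f$ is rigid enough (a scaling by $1/|T_t|$ plus a shift by~$c$, with $T_{tt}=0$) that any point transformation between two equations of~$\mathscr V'$ must already lie in $G^\sim_{\mathscr V}$.

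\textbf{Main obstacle.} The genuinely delicate step is the normalization of~$\mathscr V'$, i.e.\ ruling out ``accidental'' admissible transformations that are not equivalence transformations: one must show that the freedom in $G^\sim_{\mathscr S}$ not present in $G^\sim_{\mathscr V}$ (in particular, transformations with $T_{tt}\ne 0$ or $Z_t\ne 0$) cannot map one equation of~$\mathscr V'$ to another while still respecting the constraint~\eqref{nliconds} that isolates $f+V$ form. This requires carefully tracking, via the $\tilde S$-formula~\eqref{eq:ClassSEquivGroupC}, how a nonzero~$T_{tt}$ contaminates the $\rho$-independent potential part with genuinely $(t,x)$-dependent terms, and verifying that such contamination is incompatible with the target again being of the form $f(\rho)+V(t,x)$ unless the forbidden parameters vanish; the case analysis separating the generic~$\mathscr V'$ behavior from the rigid power/logarithmic cases of~$\mathscr P_\lambda$ is where the real work lies.
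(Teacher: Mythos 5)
Your partition of $\mathscr V$ by the value of $\rho f_{\rho\rho}/f_\rho$ and your separation argument via Corollary~\ref{cor:DiffInvForS} are exactly the route the paper intends (the paper states the theorem as ``motivated by Corollary~\ref{cor:DiffInvForS}'' and leaves the details to the reader), and that part of your proposal is sound: since $\mathscr S$ is normalized, every admissible transformation in $\mathscr V$ is the restriction of an element of $G^\sim_{\mathscr S}$, and the (conjugation-stable) value of the invariant separates $\mathscr V'$ from each $\mathscr P_\lambda$ and the $\mathscr P_\lambda$ from one another.

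The gap is in your normalization step. You propose to take as candidate equivalence group of each subclass the intersection of $G^\sim_{\mathscr V}$ with the defining constraint, i.e.\ transformations with $T_{tt}=0$ and $Z=\mathrm{const}$, and then to show that all admissible transformations of the subclass are generated by it. For $\mathscr P_\lambda$ this is false as stated: the paper records $G^\sim_{\mathscr P_\lambda}\supsetneq G^\sim_{\mathscr V}$ for every $\lambda\in\mathbb R$, and Theorem~\ref{thm:NSchEPMNPlambdaGequiv} shows that for $\lambda\ne0$ the equivalence group of $\mathscr P_\lambda$ admits \emph{arbitrary} $T$ with $T_t\ne0$ (the $x$-quadratic terms produced by $T_{tt}\ne0$ in~\eqref{eq:ClassSEquivGroupC} are absorbed into $\tilde V$, and the rescaling of $\rho$ by $e^Z$ with $e^Z=\mu|T_t|^{-1/\lambda}$ is absorbed into $\delta$); similarly $G^\sim_{\mathscr P_0}$ admits nonconstant $Z(t)$, see~\eqref{eq:NSchEPLogMNGequiv}. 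So there are admissible transformations of $\mathscr P_\lambda$ that do not lie in your candidate group, and following your plan you would either wrongly conclude that $\mathscr P_\lambda$ is not normalized or be unable to close the verification. The underlying error is starting from $G^\sim_{\mathscr V}$: since $\mathscr V$ itself is not normalized, its equivalence group does not generate its equivalence groupoid, and this is precisely why the partition is needed. The correct procedure (the one the paper carries out in Theorems~\ref{thm:NSchEPLogMNGequiv} and~\ref{thm:NSchEPMNPlambdaGequiv}) is to descend from the normalized superclass $\mathscr S$: substitute $S=f(\rho)+V$, $S=\delta\ln\rho+V$, $S=\delta\rho^\lambda+V$ into~\eqref{eq:ClassSEquivGroupC}, determine exactly which parameters $(T,O,\mathcal X,\Sigma,Z)$ preserve the respective form, and check that the resulting transformations constitute a group acting on the arbitrary elements — that is where $T_{tt}=0$, $Z_t=0$ are forced for $\mathscr V'$ but \emph{not} for $\mathscr P_\lambda$. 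Your ``main obstacle'' paragraph identifies the right computation for $\mathscr V'$, but the analogous computation for $\mathscr P_\lambda$ must be allowed to yield the larger group rather than being forced into $G^\sim_{\mathscr V}$.
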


In other words, the above partition of the class~$\mathscr V$ induces the partition of its equivalence groupoid
$\mathcal G^\sim_{\mathscr V}=\mathcal G^\sim_{\mathscr V'}\bigsqcup\big(\bigsqcup_{\lambda\in\mathbb R}\mathcal G^\sim_{\mathscr P_\lambda}\big)$,
where the equivalence groupoids~$\mathcal G^\sim_{\mathscr V'}$ and~$\mathcal G^\sim_{\mathscr P_\lambda}$
of the subclasses~$\mathscr V'$ and $\mathscr P_\lambda$, $\lambda\in\mathbb R$,
are generated by the corresponding equivalence groups~$G^\sim_{\mathscr V'}$ and~$G^\sim_{\mathscr P_\lambda}$.
Moreover, $G^\sim_{\mathscr V'}=G^\sim_{\mathscr V}$ and $G^\sim_{\mathscr P_\lambda}\supsetneq G^\sim_{\mathscr V}$
for any $\lambda\in\mathbb R$,
cf.\ the results of Sections~\ref{sec:NSchEPMNVf(1+2)D}--\ref{sec:NSchEPPowerMN(1+2)D} below.
%Propositions~\ref{pro:NSchEPMNVfGequiv}, \ref{thm:NSchEPLogMNGequiv} and~\ref{thm:NSchEPMNPlambdaGequiv}
Therefore, a complete list of $\mathcal G^\sim_{\mathscr V}$-inequivalent Lie-symmetry extensions in the class~$\mathscr V$
is exhausted by the union of group-classification lists for the classes~$\mathscr V'$ and $\mathscr P_\lambda$, $\lambda\in\mathbb R$,
with respect to the corresponding equivalence groups.

\begin{remark}
Instead of the partition of the class~$\mathscr V$,
one can consider the associated partition of the class~$\tilde{\mathscr V}$,
where the partition subclasses are singled out by the conditions
that the ratio $\rho S_{\rho\rho}/S_\rho$ is not a real constant
and that $\rho S_{\rho\rho}/S_\rho=\lambda-1$, $\lambda\in\mathbb R$, respectively.
At the same time, one needs to handle these conditions jointly
with conditions $S_{\rho t}=S_{\rho a}=0$ and $S_\rho\ne 0$ for the class~$\mathscr S$.
This is why it is more convenient to work with the reparameterized class~$\mathscr V$
in spite of the fact that this class possesses gauge equivalence transformations.
\end{remark}

Up to gauge equivalence transformations of the class~$\mathscr V$,
for equations from the subclasses~$\mathscr P_0$ and $\mathscr P_\lambda$ with $\lambda\in\mathbb R\setminus\{0\}$
the arbitrary element~$f$ takes the form $f=\delta\ln\rho$ and $f=\delta\rho^\lambda$, respectively,
where $\delta$ is an arbitrary nonzero complex constant.
We reparameterize these classes, assuming $\delta$ as an arbitrary element instead of~$f$
and preserving the notation for these classes.
Then gauge equivalence transformations are neglected.

\subsection{General case of modular nonlinearity}\label{sec:NSchEPMNVf}

As discussed above, the class $\mathscr V'$ consists of the equations of the form
\begin{gather}\label{groupclassificationofsubclassnonlineartity}
i\psi_t+\psi_{aa}+f(\rho)\psi+V(t,x)\psi=0\quad\mbox{with}\quad f_\rho\ne 0,\ \ \rho f_{\rho\rho}/f_\rho\ne\const\in\mathbb R.
\end{gather}

Proposition~\ref{pro:NSchEPMNGequiv} implies the following assertion.

\begin{corollary}
An equation from the class $\mathscr V'$ with a potential~$V$ is reduced by a point transformation
to an equation from the same class with the zero potential if and only if
the potential~$V$ is real-valued and affine in~$x$.
\end{corollary}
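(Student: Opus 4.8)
The plan is to exploit the normalization of the class~$\mathscr V'$ so that the statement becomes a finite computation with its equivalence group. By Theorem~\ref{theoremGequivNSchEPMN} the class~$\mathscr V'$ is normalized and $G^\sim_{\mathscr V'}=G^\sim_{\mathscr V}$; hence any point transformation that maps one equation of~$\mathscr V'$ to another is the restriction to the variable space of an element of~$G^\sim_{\mathscr V}$, whose components are given explicitly in Proposition~\ref{pro:NSchEPMNGequiv}. Thus it suffices to decide for which potentials~$V$ there exist values of the group parameters $T,\mathcal X^a,\Sigma,O,c$ (subject to $T_{tt}=0$) for which the transformed potential~$\tilde V$ vanishes.

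First I would impose $\tilde V=0$ in the formula for~$\tilde V$ from Proposition~\ref{pro:NSchEPMNGequiv} and clear the denominator. Since $T_{tt}=0$ forces~$T_t$ to be a nonzero real constant, this yields the identity
\begin{gather*}
\hat V=-\frac{\mathcal X^b_{tt}}{2|T_t|^{1/2}}O^{ba}x_a-\varepsilon'\Sigma_t+\frac{\varepsilon'}{4T_t}\mathcal X^a_t\mathcal X^a_t+c|T_t|,
\quad \varepsilon'=\sgn T_t.
\end{gather*}
The right-hand side is an affine function of~$x$ all of whose coefficients are real, except for the single additive complex constant~$c|T_t|$ coming from the gauge parameter~$c$.

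For the necessity direction I would split this identity into real and imaginary parts, using that $\hat V$ equals~$V$ or~$V^*$ according to the sign of~$T_t$, so that $\mathop{\rm Re}\hat V=\mathop{\rm Re}V$ and $\mathop{\rm Im}\hat V=\pm\mathop{\rm Im}V$. This forces $\mathop{\rm Re}V$ to be affine in~$x$ with $t$-dependent coefficients and $\mathop{\rm Im}V$ to be constant; absorbing that constant into~$f$ by the gauge transformation with parameter~$c$ shows that, as an equation of~$\mathscr V'$, the potential must be real-valued and affine in~$x$. For sufficiency I would argue constructively: given a real affine potential $V=b_a(t)x_a+b_0(t)$, set $T=t$, $O=E$, $c=0$, and then solve $\mathcal X^a_{tt}=-2b_a$ and $\Sigma_t=\tfrac14\mathcal X^a_t\mathcal X^a_t-b_0$ by successive integration. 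The resulting parameters define an element of~$G^\sim_{\mathscr V}$ with $\tilde V=0$, which completes the equivalence.

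The integrations producing the transformation in the sufficiency direction are routine. The point requiring care is the bookkeeping of the two features that distinguish elements of~$G^\sim_{\mathscr V}$ from those of the larger group~$G^\sim_{\mathscr S}$: the constraint $T_{tt}=0$, which is precisely what restricts removable potentials to be affine rather than quadratic in~$x$, and the combination of the conjugation~$\hat{}$ with the additive gauge constant~$c$, which together explain why the real part of~$V$ must be affine while its imaginary part is only constrained to be constant and is therefore removable.
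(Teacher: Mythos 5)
Your proof is correct and is precisely the argument the paper intends: the corollary is stated as an immediate consequence of Proposition~\ref{pro:NSchEPMNGequiv} together with the normalization of~$\mathscr V'$ from Theorem~\ref{theoremGequivNSchEPMN}, and you carry out exactly that computation (set $\tilde V=0$, use $T_{tt}=0$, split into real and imaginary parts, and integrate $\mathcal X^a_{tt}=-2b_a$, $\Sigma_t=\tfrac14\mathcal X^a_t\mathcal X^a_t-b_0$ for sufficiency). Your explicit bookkeeping of the gauge constant~$c$ — showing that the raw necessity condition is ``$\mathop{\rm Re}V$ affine in~$x$ and $\mathop{\rm Im}V$ a removable constant'' and that the stated form is reached only after absorbing that constant into~$f$ — correctly reflects the paper's convention of regarding $f$ and $V$ modulo additive complex constants (Remark~\ref{rem:NSchEPMNVf}).
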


\begin{remark}\label{rem:NSchEPMNVf}
The action of each element of the group $G^\sim_{\mathscr V}$ on $f$
is the composition of complex conjugation, multiplication by a nonzero real constant, shift in a complex constant
and rescaling its argument, which does not change the form of~$f$ essentially.
This is why in the course of group classification of the class $\mathscr V'$,
we can assume that the nonlinearity $f$ is fixed and the only arbitrary element is $V$.
By $\mathscr V^f$ we denote the subclass of equations in $\mathscr V'$ with a fixed value of the arbitrary element~$f$.
$\mathscr V^f=\mathscr V^{\tilde f}$ if and only if $\tilde f-f=\const\in\mathbb C$.
Hence the class~$\mathscr V'$ can be interpreted as the disjoint union of its subclasses~$\mathscr V^f$,
where $f$ runs through the set of complex-valued smooth functions of $\rho:=|\psi|$
with $f_\rho\ne 0$ and $\rho f_{\rho\rho}/f_\rho\ne\const\in\mathbb R$
modulo adding complex constants.
\end{remark}

\begin{proposition}\label{pro:NSchEPMNVfGequiv}
The class~$\mathscr V^f$ is normalized.
Its equivalence group~$G^\sim_{\mathscr V^f}$ is constituted by the projections of elements of~~$G^\sim_{\mathscr V}$
on the space with the coordinates $(t,x,\psi,\psi^*,V,V^*)$,
where $T_t=1$ (resp.\ $T_t=\pm 1$ if $f$ is a real-valued function), $Z=0$ and $c=0$.
\end{proposition}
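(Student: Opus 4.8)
The plan is to deduce both assertions from the normalization of the larger class~$\mathscr V'$, established in Theorem~\ref{theoremGequivNSchEPMN}, together with the explicit description of~$G^\sim_{\mathscr V}$ given in Proposition~\ref{pro:NSchEPMNGequiv}. Since $\mathscr V^f$ is nothing but the subclass of~$\mathscr V'$ obtained by fixing the value of the arbitrary element~$f$, every admissible transformation within~$\mathscr V^f$ is, in particular, an admissible transformation within~$\mathscr V'$, and hence, by the normalization of~$\mathscr V'$, is generated by some~$\mathcal T\in G^\sim_{\mathscr V}$. As the source and target equations carry the same fixed nonlinearity, this~$\mathcal T$ sends the pair of arbitrary elements $(f,V)$ to $(f,\tilde V)$; that is, $\mathcal T$ stabilizes~$f$, so that $\tilde f=f$. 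Conversely, any element of~$G^\sim_{\mathscr V}$ stabilizing~$f$ projects to an equivalence transformation of~$\mathscr V^f$. Thus $G^\sim_{\mathscr V^f}$ coincides with the stabilizer of~$f$ inside the projection of~$G^\sim_{\mathscr V}$ to the space with the coordinates $(t,x,\psi,\psi^*,V,V^*)$, and verifying this description simultaneously proves that~$\mathscr V^f$ is normalized.

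The substantive step is therefore to compute this stabilizer by imposing $\tilde f=f$ in the $f$-component of~$G^\sim_{\mathscr V}$. Because the transformation rescales the modulus, $\tilde\rho=|\tilde\psi|=e^{Z}\rho$, the condition $\tilde f=f$ reads, as an identity in~$\rho$,
\[
f(\rho)=\frac{\hat f(e^{-Z}\rho)}{|T_t|}+c,
\]
with $\hat f=f$ for $T_t>0$ and $\hat f=f^*$ for $T_t<0$. Differentiating this identity twice and forming the ratio $\rho f_{\rho\rho}/f_\rho$---which, by Corollary~\ref{cor:DiffInvForS}, is the pertinent differential invariant---shows that this ratio must be invariant under the scaling $\rho\mapsto e^{-Z}\rho$. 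Invoking the defining property of~$\mathscr V'$ that $\rho f_{\rho\rho}/f_\rho$ is not a real constant, so that the constant-ratio power and logarithmic nonlinearities collected in the subclasses~$\mathscr P_\lambda$ are excluded, I would conclude $Z=0$. The identity then collapses to $f=|T_t|^{-1}\hat f+c$ with $f$ nonconstant, whence $|T_t|=1$ and $c=0$.

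It remains to analyse the sign of~$T_t$. For $T_t>0$ we have $\hat f=f$, and $|T_t|=1$ yields $T_t=1$. For $T_t<0$ we have $\hat f=f^*$, so the identity becomes $f=f^*+c$, i.e.\ $\mathrm{Im}\,f$ is constant; modulo the additive-constant gauge of Remark~\ref{rem:NSchEPMNVf} this holds precisely when~$f$ is real-valued, in which case the Wigner time reflection $T_t=-1$ is admitted and $T_t=\pm1$, while otherwise only $T_t=1$ survives. Feeding $T_t=\pm1$, $Z=0$, $c=0$ (and the constraint $T_{tt}=0$, already present in~$G^\sim_{\mathscr V}$) into the $V$-component of Proposition~\ref{pro:NSchEPMNGequiv} produces the induced action on the remaining arbitrary element~$V$, which completes the description of~$G^\sim_{\mathscr V^f}$ and, through the reduction of the first paragraph, the normalization of~$\mathscr V^f$.

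The main obstacle is the functional-equation step---extracting $Z=0$ from the scaling invariance of $\rho f_{\rho\rho}/f_\rho$. Scaling invariance by itself only forces this ratio to be periodic in $\log\rho$, which is weaker than constancy, so the argument must genuinely exploit that a nonlinearity in~$\mathscr V'$ carries no residual scaling symmetry: the constant-ratio nonlinearities, which do admit such a symmetry, are exactly the powers and logarithms removed by the condition that $\rho f_{\rho\rho}/f_\rho$ be a nonconstant function and treated separately in the classes~$\mathscr P_\lambda$. Handling this nontriviality carefully, rather than the routine substitutions into the $V$-component, is where the real work lies.
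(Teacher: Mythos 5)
Your reduction in the first paragraph is sound and is surely the intended route: since $\mathscr V^f\subset\mathscr V'$ and $\mathscr V'$ is normalized with $G^\sim_{\mathscr V'}=G^\sim_{\mathscr V}$ (Theorem~\ref{theoremGequivNSchEPMN} and the discussion following it), every admissible transformation of~$\mathscr V^f$ comes from an element of~$G^\sim_{\mathscr V}$ stabilizing~$f$, so the whole statement reduces to computing that stabilizer from the $f$-component in Proposition~\ref{pro:NSchEPMNGequiv}. The functional equation $f(\rho)=\hat f(e^{-Z}\rho)/|T_t|+c$ is the right starting point, and the endgame (once $Z=0$ is known, $|T_t|=1$ and $c=0$ follow from $f$ being nonconstant, and the sign analysis of~$T_t$ via $\hat f$) is fine. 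The paper itself offers no proof of this proposition, so there is nothing to compare against beyond Remark~\ref{rem:NSchEPMNVf}.

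The genuine gap is exactly the step you flag and then leave open: the deduction of $Z=0$. The functional equation with a \emph{fixed} $Z$ expresses a \emph{discrete} scaling self-similarity of~$f$, whereas the defining condition of~$\mathscr V'$, that $\rho f_{\rho\rho}/f_\rho$ is not a real constant, only excludes nonlinearities with a \emph{continuous} scaling symmetry (the powers and logarithms). Your proposed resolution --- ``the constant-ratio nonlinearities are exactly those with residual scaling symmetry'' --- is therefore insufficient: log-periodic nonlinearities such as $f(\rho)=\ln\rho+\varepsilon\sin(2\pi\ln\rho/Z_0)$ with $0<\varepsilon<Z_0/(2\pi)$ have $f_\rho\ne0$ and nonconstant ratio $\rho f_{\rho\rho}/f_\rho$, hence lie in~$\mathscr V'$, yet satisfy $f(\rho)=f(e^{-Z_0}\rho)+Z_0$, so the stabilizer contains elements with $Z=Z_0\ne0$ and $c=Z_0\ne0$. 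So the argument as written does not establish $Z=0$ (and hence not $c=0$ either) for an arbitrary $f$ admitted by~$\mathscr V'$; closing the proof requires either excluding or separately treating nonlinearities with such discrete scaling self-similarities (equivalently, restricting to $f$ whose only solution of $f(\rho)=k\hat f(e^{-Z}\rho)+c$ is the trivial one), rather than appealing to the non-constancy of $\rho f_{\rho\rho}/f_\rho$ alone. A similar, more minor, looseness occurs in your treatment of $T_t<0$: the identity $f=f^*+c$ only forces $\mathop{\rm Im}f$ to be constant, and identifying this with ``$f$ real-valued'' uses the additive-constant gauge of Remark~\ref{rem:NSchEPMNVf}, which should be invoked explicitly since with $\mathop{\rm Im}f$ a nonzero constant one gets $c=2i\mathop{\rm Im}f\ne0$.
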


\begin{lemma} \label{lem:NSchEPMNVfMIA}
The maximal Lie invariance algebra~$\mathfrak g_V$ of an equation~$\mathcal L_V$ from~$\mathscr V^f$
with $\rho f_{\rho\rho}/f_\rho$ not being a real constant
consists of the vector fields of the form $D(c)-\sum_{a<b}\kappa_{ab}J_{ab}+P(\chi)+\sigma M$,
where $c$ is an arbitrary real constant,
$(\kappa_{ab})$ is an arbitrary constant skew-symmetric $n\times n$ matrix
and the parameter functions $\chi^a$ and $\sigma$ are arbitrary real-valued smooth functions of $t$
that satisfy the condition
\begin{gather}\label{eq:NSchEPMNVfClassifyingCondition}
cV_t+(\kappa_{ab}x_b+\chi^a)V_a=\frac12 \chi^a_{tt}x_a+\sigma_t.
\end{gather}
The kernel Lie invariance algebra of equations from the class $\mathscr V^f$ is $\mathfrak g^\cap_{\mathscr V^f}=\langle M\rangle$.
\end{lemma}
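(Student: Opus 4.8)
The plan is to specialize the general description of Lie symmetries already obtained for the superclass~$\mathscr S$. Since an equation~$\mathcal L_V$ from~$\mathscr V^f$ is precisely an equation from~$\mathscr S$ with the arbitrary element $S=f(\rho)+V(t,x)$, its maximal Lie invariance algebra~$\mathfrak g_V$ is contained in the family of vector fields listed in Theorem~\ref{thm:ClassSMIA}, and the whole task reduces to re-examining the classifying condition~\eqref{NSchEGMNClassifyingCondition} for this particular form of~$S$. First I would substitute $S_t=V_t$, $S_a=V_a$, $S_\rho=f_\rho$ and $S=f+V$ into~\eqref{NSchEGMNClassifyingCondition}, turning it into
\begin{gather*}
\tau V_t+\left(\tfrac12\tau_tx_a+\kappa_{ab}x_b+\chi^a\right)V_a+\tau_t V+\zeta\rho f_\rho+\tau_t f=\tfrac18\tau_{ttt}x_ax_a+\tfrac12\chi^a_{tt}x_a+\sigma_t-i\zeta_t-i\tfrac n4\tau_{tt}.
\end{gather*}
Here the only place the variable~$\rho$ enters is through the two terms $\zeta\rho f_\rho+\tau_t f$, because $\tau$, $\kappa_{ab}$, $\chi^a$, $\sigma$ and~$\zeta$ depend only on~$t$, whereas $V$ depends only on~$(t,x)$.

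Next I would split this identity with respect to~$\rho$. Since the remaining terms are $\rho$-independent, the sum $\zeta\rho f_\rho+\tau_t f$ must be a function of~$t$ alone; differentiating it with respect to~$\rho$ and dividing by $f_\rho\ne0$ gives
\begin{gather*}
\zeta\left(\frac{\rho f_{\rho\rho}}{f_\rho}+1\right)=-\tau_t.
\end{gather*}
This relation is the crux of the argument, and it is exactly here that the defining constraint of~$\mathscr V'$ is used. Its right-hand side is a real-valued function of~$t$ only, and $\zeta$ is, by Theorem~\ref{thm:ClassSMIA}, likewise real-valued and $t$-dependent only. If $\rho f_{\rho\rho}/f_\rho$ is non-constant in~$\rho$, evaluating the relation at two values of~$\rho$ at which this ratio differs forces $\zeta\equiv0$; if instead it is a constant but a non-real one, comparing imaginary parts forces $\zeta\equiv0$ as well. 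In either case $\zeta=0$, whence $\tau_t=0$ and $\tau=c$ is a real constant. It is precisely the exclusion of a \emph{real} constant value of the ratio that eliminates the dilatation-type extension carried by the operators $\zeta I$ and $D(\tau)$ with $\tau_t\ne0$.

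Substituting $\zeta=0$ and $\tau_t=0$ (so that $\tau_{tt}=\tau_{ttt}=0$) back into the specialized classifying condition collapses it to
\begin{gather*}
cV_t+(\kappa_{ab}x_b+\chi^a)V_a=\tfrac12\chi^a_{tt}x_a+\sigma_t,
\end{gather*}
which is exactly~\eqref{eq:NSchEPMNVfClassifyingCondition}; combined with $D(c)=c\p_t$ this yields the stated form of the elements of~$\mathfrak g_V$. Finally, for the kernel algebra~$\mathfrak g^\cap_{\mathscr V^f}$ I would require~\eqref{eq:NSchEPMNVfClassifyingCondition} to hold for every admissible~$V$ and split it treating $V_t$ and the~$V_a$ as independent: the coefficient of~$V_t$ gives $c=0$, the coefficients of the~$V_a$ give $\kappa_{ab}x_b+\chi^a\equiv0$ and hence $\kappa_{ab}=0$ and $\chi^a=0$, and what remains, $\sigma_t=0$, leaves only $\sigma M$ with constant~$\sigma$, i.e.\ $\mathfrak g^\cap_{\mathscr V^f}=\langle M\rangle$. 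I do not anticipate any serious obstacle beyond the careful case distinction on the nature of $\rho f_{\rho\rho}/f_\rho$ in the $\rho$-splitting step, which is where the hypothesis of the lemma is indispensable.
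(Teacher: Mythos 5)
Your proposal is correct and follows essentially the same route as the paper: substitute $S=f(\rho)+V(t,x)$ into the classifying condition~\eqref{NSchEGMNClassifyingCondition}, split with respect to~$\rho$ to conclude $\zeta=0$ and $\tau_t=0$ from the hypothesis that $\rho f_{\rho\rho}/f_\rho$ is not a real constant, and then obtain the kernel by varying~$V$ and splitting with respect to its derivatives. The paper states this in one sentence; your write-up merely supplies the (correct) details, including the two-case analysis of the ratio being non-constant versus a non-real constant.
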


\begin{proof}
Substituting $S=f(\rho)+V(t,x)$ into the classifying condition~\eqref{NSchEGMNClassifyingCondition}
and recalling that $\rho f_{\rho\rho}/f_\rho\ne\const\in\mathbb R$, we derive $\tau_t=0$ and $\zeta=0$.
The algebra~$\mathfrak g^\cap_{\mathscr V^f}$ is obtained by varying the arbitrary element $V$ and splitting with respect to its derivatives.
\end{proof}

Any vector field of the general form presented in Lemma~\ref{lem:NSchEPMNVfMIA},
where at least one of the parameters $c$, $\kappa_{ab}$ and $\chi^a$ takes a nonzero value,
belongs to~$\mathfrak g_V$
for a potential~$V$ satisfying the classifying condition~\eqref{eq:NSchEPMNVfClassifyingCondition}
for this vector field.
This is why we have
\[\textstyle
\mathfrak g_\spanindex:=\sum_V\mathfrak g_V=\langle\,D(1),\,J_{ab},\,a<b,\,P(\chi),\,\sigma M\,\rangle,
\]
where the parameter functions $\chi^a$ and $\sigma$ run through the set of real-valued smooth functions of~$t$.
This linear span is closed with respect to the Lie bracket of vector fields and thus is a Lie algebra,
which coincides, in view of Proposition~\ref{pro:NSchEPMNVfGequiv},
with the projection of the equivalence algebra of the class~$\mathscr V^f$ to the space with the coordinates $(t,x,\psi,\psi^*)$.
Obtaining the above coincidence is the main incentive for fixing~$f$.
As a result, the action of the group~$\pi_*G^\sim_{\mathscr V^f}$ on the algebra~$\mathfrak g_\spanindex$
is naturally consistent with this algebra.
Therefore, within the framework of the algebraic method, the group classification of the class~$\mathscr V^f$
reduces to the classification of appropriate subalgebras of~$\mathfrak g_\spanindex$ up to \smash{$\pi_*G^\sim_{\mathscr V^f}$}-equivalence.
The following conditions hold for any equation~$\mathcal L_V$ from the class~$\mathscr V^f$,
and thus for any appropriate subalgebra of~$\mathfrak g_\spanindex$,
which by Definition~\ref{def:AppropriateSubalg} coincides with $\mathfrak g_V$ for some~$V$:
\begin{gather*}
\dim\mathfrak g_V\leqslant\dfrac{n(n+3)}2+2,\quad
\dim\mathfrak g_V\cap\langle P(\chi),\sigma M\rangle \leqslant 2n+1,\quad
\mathfrak g_V\cap\langle \sigma M\rangle =\langle M\rangle.
\end{gather*}

\subsection{Logarithmic modular nonlinearity}\label{sec:NSchEPLogMN}

The class~$\mathscr P_0$ consists of the equations of the form
\begin{equation}\label{eq:NSchEPLogMN}
i\psi_t+\psi_{aa}+\delta\psi\ln\rho +V(t,x)\psi=0,
\end{equation}
where $\delta$ is an arbitrary nonzero complex number, $\delta=\delta_1+i\delta_2$, $\delta_1,\delta_2\in\mathbb R$,
and $V$ is an arbitrary complex-valued function of~$t$ and~$x$.
The corresponding subclass of the class~$\tilde{\mathscr V}$ is singled out from the class~$\mathscr S$ by
the constraints $S_{\rho t}=S_{\rho a}=0$ and $(\rho S_\rho)_\rho=0$,
i.e., $\psi S_{\psi t}+\psi^*S_{\psi^* t}=\psi S_{\psi a}+\psi^*S_{\psi^* a}=0$ and
$(\psi\p_\psi+\psi^*\p_{\psi^*})^2S=0$.

We can find the point transformations connecting two equations from the class $\mathscr P_0$ by the direct method.
However, we have already described the equivalence groupoid~$\mathcal G^\sim_{\mathscr S}$ of the class~$\mathscr S$ in Theorem~\ref{thm:ClassSGequiv},
so we can use this description and single out the equivalence groupoid of the class~$\mathscr P_0$
as a subgroupoid of~$\mathcal G^\sim_{\mathscr S}$,
substituting $S=\delta\ln\rho+V(t,x)$ and $\tilde S=\tilde\delta\ln\tilde\rho+\tilde V(\tilde t,\tilde x)$ into~\eqref{eq:ClassSEquivGroupC}.

\begin{theorem}\label{thm:NSchEPLogMNGequiv}
The class $\mathscr P_0$ is normalized.
Its equivalence group $G^\sim_{\mathscr P_0}$ is constituted
by the point transformations in the space with the coordinates $(t,x,\psi,\psi^*,\delta,\delta^*,V,V^*)$
whose components for the variables are of the form~\eqref{eq:ClassSEquivGroupA}--\eqref{eq:ClassSEquivGroupB}
and whose components for the arbitrary elements~$\delta$ and~$V$ are
\begin{gather}\label{eq:NSchEPLogMNGequiv}
\tilde\delta=\dfrac{\hat\delta}{|T_t|},\quad
\tilde V=\dfrac{\hat V}{|T_t|}
+\dfrac{\mathcal X^b_{tt}}{2|T_t|^{3/2}}O^{ba}x_a-\hat\delta\dfrac{Z}{|T_t|}
-\frac 14\dfrac{\mathcal X^a_t\mathcal X^a_t}{T_t^2}+\dfrac{\Sigma_t-iZ_t}{T_t},
\end{gather}
where the parameter functions $T$, $\mathcal X^a$, $Z$ and $\Sigma$ are arbitrary smooth real-valued functions of~$t$
with $T_t\ne 0$ and $T_{tt}=0$,
$\varepsilon'=\sgn T_t$ and $O=(O^{ab})$~is an arbitrary constant orthogonal $n\times n$ matrix.
\end{theorem}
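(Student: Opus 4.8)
The plan is to derive the equivalence group $G^\sim_{\mathscr P_0}$ by specializing the already-established description of the equivalence groupoid $\mathcal G^\sim_{\mathscr S}$ from Theorem~\ref{thm:ClassSGequiv}, exactly as the paragraph preceding the statement indicates. Since $\mathscr P_0$ is the subclass of $\mathscr S$ corresponding to arbitrary elements of the special form $S=\delta\ln\rho+V(t,x)$, every admissible transformation of $\mathscr P_0$ is in particular an admissible transformation of $\mathscr S$, so its $(t,x,\psi)$-components must be of the form~\eqref{eq:ClassSEquivGroupA}--\eqref{eq:ClassSEquivGroupB}. The task reduces to imposing on the general $\mathscr S$-transformation the requirement that it send an element of the form $\delta\ln\rho+V$ to another element of the same form, and then reading off the constraints on the group parameters together with the transformation rule for $(\delta,V)$.

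Concretely, I would substitute $S=\delta\ln\rho+V(t,x)$ and the target $\tilde S=\tilde\delta\ln\tilde\rho+\tilde V(\tilde t,\tilde x)$ into the transformation law~\eqref{eq:ClassSEquivGroupC} for $\tilde S$. On the right-hand side, $\hat S/|T_t|$ becomes $\hat\delta\ln\rho/|T_t|+\hat V/|T_t|$. The term $\ln\tilde\rho$ on the left must be re-expressed in the source variables using~\eqref{eq:ClassSEquivGroupB}: since $\tilde\rho=|\tilde\psi|=e^{Z}\rho$, one has $\ln\tilde\rho=\ln\rho+Z$, so that $\tilde\delta\ln\tilde\rho=\tilde\delta\ln\rho+\tilde\delta Z$. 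Matching the coefficients of $\ln\rho$ — the only terms in the identity that carry an explicit logarithmic dependence on $\rho$ — immediately gives $\tilde\delta=\hat\delta/|T_t|$, which is the first relation in~\eqref{eq:NSchEPLogMNGequiv}. The constancy of $\delta$ (no $t,x$ dependence) combined with $\tilde\delta=\hat\delta/|T_t|$ then forces $T_t$ to be constant, i.e.\ $T_{tt}=0$.

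Collecting the remaining, $\rho$-independent terms then produces the expression for $\tilde V$: the genuinely potential-like part $\hat V/|T_t|$, the quadratic-in-$x$ contribution from $\tfrac{2T_{ttt}T_t-3T_{tt}{}^2}{16\varepsilon'T_t{}^3}x_ax_a$ in~\eqref{eq:ClassSEquivGroupC} (which vanishes once $T_{tt}=0$, consistent with the displayed $\tilde V$ carrying no $x_ax_a$ term), the linear-in-$x$ term $\tfrac{\mathcal X^b_{tt}}{2|T_t|^{3/2}}O^{ba}x_a$ coming from $\tfrac{\varepsilon'}2\big(\mathcal X^b_t/T_t\big)_t O^{ba}x_a/|T_t|^{1/2}$ under $T_{tt}=0$, the drift term $-\mathcal X^a_t\mathcal X^a_t/(4T_t^2)$, the phase/amplitude term $(\Sigma_t-iZ_t)/T_t$, and finally the new term $-\hat\delta\,Z/|T_t|$, which is precisely the $\tilde\delta Z$ contribution transferred to the other side of the identity. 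The absence of the $inT_{tt}/(4T_t^2)$ term of~\eqref{eq:ClassSEquivGroupC} is again accounted for by $T_{tt}=0$. This bookkeeping yields exactly~\eqref{eq:NSchEPLogMNGequiv} and confirms the parameter constraints $T_{tt}=0$ with $T,\mathcal X^a,Z,\Sigma$ real-valued functions of $t$ and $O$ orthogonal.

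The \emph{main obstacle}, such as it is, lies not in any single computation but in the splitting step: one must be careful to recognize $\ln\rho$ as the unique functionally independent nonconstant building block on which to split, and to verify that after extracting its coefficient the residual identity is genuinely free of $\rho$ and hence defines a legitimate potential $\tilde V=\tilde V(\tilde t,\tilde x)$. In particular, it is essential to check that $Z$ is permitted to depend on $t$ (unlike in the class $\mathscr V$, where the analogous constant enters only as a gauge parameter); here the term $-\hat\delta Z/|T_t|$ together with $-iZ_t/T_t$ shows that a genuinely $t$-dependent $Z$ survives, which distinguishes $G^\sim_{\mathscr P_0}$ from $G^\sim_{\mathscr V}$ and matches the remark $G^\sim_{\mathscr P_\lambda}\supsetneq G^\sim_{\mathscr V}$. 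Normalization of $\mathscr P_0$ is then immediate, since the derived transformation law shows that \emph{every} admissible transformation of $\mathscr P_0$ is induced by an element of $G^\sim_{\mathscr P_0}$, so the equivalence groupoid is generated by the equivalence group.
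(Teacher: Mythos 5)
Your proposal is correct and follows exactly the route the paper indicates: since $\mathscr P_0$ sits inside the normalized class $\mathscr S$, you substitute $S=\delta\ln\rho+V$ and $\tilde S=\tilde\delta\ln\tilde\rho+\tilde V$ into~\eqref{eq:ClassSEquivGroupC}, use $\ln\tilde\rho=\ln\rho+Z$ from~\eqref{eq:ClassSEquivGroupB}, split off the coefficient of $\ln\rho$ to get $\tilde\delta=\hat\delta/|T_t|$ and hence $T_{tt}=0$, and collect the residue into $\tilde V$. The bookkeeping (including the origin of the $-\hat\delta Z/|T_t|$ term and the survival of a genuinely $t$-dependent $Z$) matches the paper's formula~\eqref{eq:NSchEPLogMNGequiv}, and the normalization conclusion is drawn the same way.
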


\begin{corollary}\label{cor:NSchELogMNVanishingPotentail}
An equation from the class $\mathscr P_0$ with a potential~$V$ is reduced by a point transformation
to an equation from the same class with the zero potential if and only if
the potential~$V$ is affine in~$x$ and the coefficients of $x_a$ are real-valued,
i.e., $V_{ab}=0$ and $V_a$ are real-valued.
\end{corollary}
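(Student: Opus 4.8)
The plan is to read the reduction criterion straight off the explicit action of $G^\sim_{\mathscr P_0}$ on the potential given in Theorem~\ref{thm:NSchEPLogMNGequiv}, making essential use of the constraint $T_{tt}=0$ built into this group. Since $T_{tt}=0$, the derivative $T_t$ is a nonzero real constant; hence $|T_t|$ is constant and the quadratic-in-$x$ exponential factor in~\eqref{eq:ClassSEquivGroupB} disappears. Imposing $\tilde V=0$ in~\eqref{eq:NSchEPLogMNGequiv} and multiplying through by $|T_t|$, I would solve for the source potential to obtain
\[
\hat V=-\frac{\mathcal X^b_{tt}}{2|T_t|^{1/2}}O^{ba}x_a+\hat\delta Z+\frac14\frac{\mathcal X^a_t\mathcal X^a_t}{|T_t|}-\varepsilon'(\Sigma_t-iZ_t).
\]
Here the right-hand side is visibly affine in~$x$, and its only $x$-dependent term, stemming from $\mathcal X^b_{tt}$, carries real coefficients because $\mathcal X^b$, $T_t$ and $O^{ba}$ are real.

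For necessity I would recall that $\hat V$ coincides with $V$ when $T_t>0$ and with $V^*$ when $T_t<0$. As being affine in~$x$ with real coefficients of $x_a$ is invariant under complex conjugation, the display above forces $V_{ab}=0$ and $V_a$ to be real-valued in either case.

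For sufficiency, given $V=V_a(t)x_a+V^0(t)$ with real $V_a$ and complex $V^0$, I would construct a suitable element of $G^\sim_{\mathscr P_0}$ by choosing $T=t$ and $O=E$ (so $\varepsilon'=1$, $|T_t|=1$, $\hat\delta=\delta$) and then splitting the reduction equation by powers of~$x$. The linear part yields $\mathcal X^a_{tt}=-2V_a$, which integrates twice to a real $\mathcal X^a$. Writing $V^0=V^0_1+iV^0_2$ and $\delta=\delta_1+i\delta_2$, the $x$-free part separates into real and imaginary components: the imaginary component is the linear first-order ODE $Z_t+\delta_2 Z=V^0_2$, solvable for a real~$Z$, after which the real component fixes $\Sigma_t=\delta_1 Z+\frac14\mathcal X^a_t\mathcal X^a_t-V^0_1$ and hence $\Sigma$ by a further integration.

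The step demanding the most care---and the one that distinguishes this corollary from its counterpart for the class~$\mathscr V'$---is precisely the ODE for~$Z$. It is the presence of the complex arbitrary element~$\delta$, entering through both $\hat\delta Z$ and $iZ_t$, that permits a genuinely complex constant term $V^0(t)$ to be absorbed, whereas in~$\mathscr V'$ only real affine potentials can be removed. Once the solvability of this ODE is confirmed, the three subsystems are satisfied in the order $\mathcal X^a$, then $Z$, then $\Sigma$, and the remaining verifications are purely mechanical.
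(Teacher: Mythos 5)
Your proposal is correct and follows exactly the route the paper intends: the corollary is stated as an immediate consequence of Theorem~\ref{thm:NSchEPLogMNGequiv}, obtained by setting $\tilde V=0$ in~\eqref{eq:NSchEPLogMNGequiv} (legitimate because the class is normalized, so every connecting point transformation comes from $G^\sim_{\mathscr P_0}$) and splitting with respect to powers of~$x$ and into real and imaginary parts. Your identification of the ODE $Z_t+\delta_2Z=V^0_2$ as the mechanism that absorbs a complex constant term, unlike in $\mathscr V'$, is exactly the right observation.
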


\begin{remark}\label{rem:NSchEPLogMNP0delta}
Theorem~\ref{thm:NSchEPLogMNGequiv} implies that any point transformation connecting
two equations in the class~$\mathscr P_0$ acts on $\delta$
by multiplication with a nonzero real constant and by complex conjugation.
Similarly to Remark~\ref{rem:NSchEPMNVf},
we can fix an arbitrary value of~$\delta$ and assume $V$ the only arbitrary element.
By $\mathscr P_0^\delta$ we denote the subclass of equations in $\mathscr P_0$ with a fixed value of~$\delta$.
The class~$\mathscr P_0$ can be interpreted as the disjoint union of its subclasses~\smash{$\mathscr P_0^\delta$},
where $\delta$ runs through the set of nonzero complex numbers.
Theorem~\ref{thm:NSchEPLogMNGequiv} implies
that equations from subclasses~\smash{$\mathscr P_0^\delta$} and~\smash{$\mathscr P_0^{\tilde\delta}$}
are related by point transformations if and only if
$\tilde\delta/\delta\in\mathbb R_{>0}$ or $\tilde\delta/\delta^*\in\mathbb R_{>0}$,
and then the point transformation~$\mathcal D(T)$
with $T=\delta\tilde\delta^{-1}t$ or with $T=-\delta^*\tilde\delta^{-1}t$, respectively,
maps the entire subclass~\smash{$\mathscr P_0^\delta$} onto the subclass~\smash{$\mathscr P_0^{\tilde\delta}$}.
In other words, up to $G^\sim_{\mathscr P_0}$-equivalence
one can set the constraints $|\delta|=1$ and $\mathop{\rm Im}\delta\geqslant 0$ on~$\delta$.
\end{remark}

\begin{proposition}\label{pro:NSchEPLogMNP0deltaGequiv}
The class~$\mathscr P_0^\delta$ is normalized.
The equivalence group~\smash{$G^\sim_{\mathscr P_0^\delta}$} of this class
is constituted by the projections of elements of~$G^\sim_{\mathscr P_0}$
on the space with the coordinates $(t,x,\psi,\psi^*,V,V^*)$,
where $T_t=1$ if $\delta_2:=\mathop{\rm Im}\delta\ne0$ and $T_t=\pm 1$ if $\delta_2=0$.
\end{proposition}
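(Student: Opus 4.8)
The plan is to derive the assertion from the already established normalization of the superclass $\mathscr P_0$ (Theorem~\ref{thm:NSchEPLogMNGequiv}) together with Remark~\ref{rem:NSchEPLogMNP0delta}, rather than by a fresh application of the direct method. First I would recall that $\mathscr P_0^\delta$ is the slice of $\mathscr P_0$ obtained by fixing the value of the constant~$\delta$ and keeping~$V$ as the single arbitrary element. Consequently, an admissible transformation of $\mathscr P_0^\delta$ is precisely an admissible transformation of $\mathscr P_0$ whose source and target values of~$\delta$ coincide. Since $\mathscr P_0$ is normalized, every such transformation is the projection of an element of~$G^\sim_{\mathscr P_0}$, whose components for the variables are of the form~\eqref{eq:ClassSEquivGroupA}--\eqref{eq:ClassSEquivGroupB} and whose action on $(\delta,V)$ is given by~\eqref{eq:NSchEPLogMNGequiv}. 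Recall also that in $G^\sim_{\mathscr P_0}$ the constraint $T_{tt}=0$ already forces $T_t$ to be a nonzero real constant.

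The crucial observation is that the first formula in~\eqref{eq:NSchEPLogMNGequiv}, $\tilde\delta=\hat\delta/|T_t|$, depends only on the parameter~$T$ (through $T_t$ and the conjugation encoded in $\hat\delta$) and is decoupled from~$V$ and from the remaining group parameters. Hence the requirement $\tilde\delta=\delta$ that singles out the $\delta$-preserving subgroup of~$G^\sim_{\mathscr P_0}$ is a condition on $T_t$ alone. I would then solve it by distinguishing the sign of $T_t$: for $T_t>0$ one has $\hat\delta=\delta$, so $\delta/T_t=\delta$ forces $T_t=1$; for $T_t<0$ one has $\hat\delta=\delta^*$, and $\delta^*/|T_t|=\delta$ yields $|T_t|=1$ upon taking moduli, whence $T_t=-1$ and then $\delta^*=\delta$, i.e.\ $\delta_2=0$. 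This gives exactly $T_t=1$ for $\delta_2\ne0$ and $T_t=\pm1$ for $\delta_2=0$.

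It remains to assemble these into the claimed conclusion. The transformations of~$G^\sim_{\mathscr P_0}$ satisfying the above $T_t$-constraint preserve~$\delta$ and map each equation of $\mathscr P_0^\delta$ to an equation of $\mathscr P_0^\delta$ (the target potential $\tilde V$ furnished by~\eqref{eq:NSchEPLogMNGequiv} being again an admissible complex-valued function of $(\tilde t,\tilde x)$), so their projections to the space with coordinates $(t,x,\psi,\psi^*,V,V^*)$ are genuine equivalence transformations of $\mathscr P_0^\delta$; these are precisely the elements described in the statement. Conversely, by the preceding paragraph any admissible transformation of $\mathscr P_0^\delta$ arises in this way. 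Therefore the equivalence groupoid of $\mathscr P_0^\delta$ is generated by~$G^\sim_{\mathscr P_0^\delta}$, which is the required normalization.

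I do not expect a serious obstacle here: the content is essentially bookkeeping built on Theorem~\ref{thm:NSchEPLogMNGequiv}, and it parallels the argument behind Proposition~\ref{pro:NSchEPMNVfGequiv}. The only point demanding care is the $T_t<0$ branch, where one must notice that reality of~$\delta$ ($\delta_2=0$) is both necessary and sufficient for the Wigner-type reflection to remain an equivalence transformation of the slice; overlooking the conjugation in~$\hat\delta$ would incorrectly admit $T_t=-1$ for all~$\delta$.
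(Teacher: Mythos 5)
Your proposal is correct and follows essentially the route the paper intends: Proposition~\ref{pro:NSchEPLogMNP0deltaGequiv} is obtained directly from the normalization of~$\mathscr P_0$ in Theorem~\ref{thm:NSchEPLogMNGequiv} by restricting to the $\delta$-preserving transformations via $\tilde\delta=\hat\delta/|T_t|$, exactly as sketched in Remark~\ref{rem:NSchEPLogMNP0delta} and in parallel with Proposition~\ref{pro:NSchEPMNVfGequiv}. Your handling of the $T_t<0$ branch, where the conjugation in $\hat\delta$ forces $\delta_2=0$, is the one point that requires care, and you treat it correctly.
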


Setting $S=\delta\ln \rho+V(t,x)$ in the classifying condition~\eqref{NSchEGMNClassifyingCondition}
and splitting with respect to $\rho$ yield $\tau_t=0$.

\begin{lemma}\label{lem:NSchEPLogMNP0deltaMIA}
Any vector field $Q$ from the  maximal Lie invariance algebra~$\mathfrak g_V$ of an equation~$\mathcal L_V$ from the subclass~$\mathscr P_0^\delta$ is of the form $D(c)-\sum_{a<b}\kappa_{ab}J_{ab}+P(\chi)+\sigma M+\zeta I$,
where $c$ is an arbitrary real constant,
$(\kappa_{ab})$ is an arbitrary constant skew-symmetric $n\times n$ matrix
and the parameter functions $\chi^a$, $\sigma$ and $\zeta$ are arbitrary real-valued smooth functions of~$t$
that satisfy the classifying condition
\begin{gather}\label{eq:NSchEPLogMNP0deltaClassifyingCondition}
cV_t+(\kappa_{ab}x_b+\chi^a)V_a=\frac12 \chi^a_{tt}x_a+\sigma_t-i\zeta_t-\delta\zeta.
\end{gather}
\end{lemma}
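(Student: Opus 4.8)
The plan is to obtain this lemma as a direct specialization of Theorem~\ref{thm:ClassSMIA}, which already describes the maximal Lie invariance algebra $\mathfrak g_S$ for an arbitrary equation from the superclass~$\mathscr S$. Since $\mathscr P_0^\delta$ is the subclass of~$\mathscr S$ singled out by $S=\delta\ln\rho+V(t,x)$, every $Q\in\mathfrak g_V$ is automatically of the form $D(\tau)-\sum_{a<b}\kappa_{ab}J_{ab}+P(\chi)+\sigma M+\zeta I$ with parameters constrained by the classifying condition~\eqref{NSchEGMNClassifyingCondition}; the only task is to simplify that condition for the present choice of~$S$.

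First I would record the relevant derivatives, $S_t=V_t$ and $S_a=V_a$, together with the crucial simplification $\rho S_\rho=\rho\cdot(\delta/\rho)=\delta$. Substituting these into~\eqref{NSchEGMNClassifyingCondition} replaces the term $\zeta\rho S_\rho$ by the $\rho$-independent quantity $\delta\zeta$ and turns $\tau_tS$ into $\tau_t(\delta\ln\rho+V)$, while all remaining terms of the equation depend only on $(t,x)$.

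The key step is then to split the resulting equation with respect to~$\rho$. The sole $\rho$-dependent term is $\tau_t\delta\ln\rho$, and since $\ln\rho$ is functionally independent of the constants and of the functions of~$(t,x)$ entering elsewhere, its coefficient must vanish. As $\delta\ne0$, this forces $\tau_t=0$, so that $\tau=c$ is an arbitrary real constant and $\tau_{tt}=\tau_{ttt}=0$; in particular $D(\tau)$ collapses to $D(c)=c\p_t$, yielding the announced form of~$Q$. Feeding $\tau_t=\tau_{tt}=\tau_{ttt}=0$ back into the substituted condition removes the terms $\frac18\tau_{ttt}x_ax_a$, $i\frac n4\tau_{tt}$ and $\tau_tS$, leaving precisely
\[
cV_t+(\kappa_{ab}x_b+\chi^a)V_a=\frac12\chi^a_{tt}x_a+\sigma_t-i\zeta_t-\delta\zeta,
\]
which is the claimed classifying condition~\eqref{eq:NSchEPLogMNP0deltaClassifyingCondition}.

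I expect no genuine obstacle here, as the statement is essentially a corollary of Theorem~\ref{thm:ClassSMIA}; the only point requiring a little care is the legitimacy of the $\rho$-splitting, namely verifying that after the substitution $S=\delta\ln\rho+V$ the single logarithmic term is the sole carrier of the dependence on~$\rho$, so that separating it off and equating its coefficient to zero is valid.
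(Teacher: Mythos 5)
Your proposal is correct and coincides with the paper's own argument: the paper derives this lemma in one line by setting $S=\delta\ln\rho+V(t,x)$ in the classifying condition~\eqref{NSchEGMNClassifyingCondition} of Theorem~\ref{thm:ClassSMIA} and splitting with respect to~$\rho$, which (since $\delta\ne0$) forces $\tau_t=0$ and reduces the condition to~\eqref{eq:NSchEPLogMNP0deltaClassifyingCondition}. Your additional remarks on $\rho S_\rho=\delta$ and on the legitimacy of the $\rho$-splitting are exactly the details the paper leaves implicit.
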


\begin{lemma}
The kernel Lie invariance algebra of equations from the class~$\mathscr P_0^\delta$
is~\smash{$\mathfrak g^\cap_{\mathscr P_0^\delta}=\langle M,I'\rangle$},
where $I'=e^{-\delta_2t}(\delta_2I-\delta_1M)$ if $\delta_2\ne0$ and $I'=I+\delta_1tM$ if $\delta_2=0$.
The kernel point-symmetry group %~\smash{$G^\cap_{\mathscr P_0^\delta}$}
of this class consists of the transformations of the form~\eqref{eq:ClassSEquivGroupA}--\eqref{eq:ClassSEquivGroupB} with
$T=t$, $\mathcal X=0$, $O=E$,
$Z=Z_0\delta_2e^{-\delta_2 t}$,
$\Sigma=\Sigma_0-Z_0 \delta_1 e^{-\delta_2 t}$ if $\delta_2\ne0$
and $Z=Z_0$, $\Sigma=\Sigma_0+Z_0 \delta_1 t$ if $\delta_2=0$,
where $\Sigma_0$ and $Z_0$ are arbitrary real constants.
\end{lemma}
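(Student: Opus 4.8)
The plan is to obtain both kernels as intersections over the arbitrary element $V$, treating the algebra first and then the group. For the kernel Lie invariance algebra $\mathfrak g^\cap_{\mathscr P_0^\delta}=\bigcap_V\mathfrak g_V$, I would start from the description of $\mathfrak g_V$ in Lemma~\ref{lem:NSchEPLogMNP0deltaMIA} and require the classifying condition~\eqref{eq:NSchEPLogMNP0deltaClassifyingCondition} to hold for \emph{every} admissible potential. Since $V$ ranges over all smooth complex-valued functions of $(t,x)$, I would split~\eqref{eq:NSchEPLogMNP0deltaClassifyingCondition} with respect to $V_t$ and $V_a$: the coefficient of $V_t$ gives $c=0$, and the coefficient of $V_a$ gives $\kappa_{ab}x_b+\chi^a=0$ identically in $x$, whence $\kappa_{ab}=0$ and $\chi^a=0$. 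The classifying condition then collapses to the single relation $\sigma_t-i\zeta_t-\delta\zeta=0$.

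The next step is to integrate this reduced relation. Writing $\delta=\delta_1+i\delta_2$ and separating real and imaginary parts gives $\sigma_t=\delta_1\zeta$ and $\zeta_t=-\delta_2\zeta$. For $\delta_2\ne0$ this yields $\zeta=\zeta_0e^{-\delta_2t}$ and $\sigma=\sigma_0-(\delta_1/\delta_2)\zeta_0e^{-\delta_2t}$, while for $\delta_2=0$ it yields $\zeta=\zeta_0$ and $\sigma=\sigma_0+\delta_1\zeta_0t$. In both cases the solution space is two-dimensional; the $\sigma_0$-term recovers $M$ and the $\zeta_0$-term recovers exactly $I'$ (up to a nonzero real constant factor when $\delta_2\ne0$), so that $\mathfrak g^\cap_{\mathscr P_0^\delta}=\langle M,I'\rangle$.

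For the kernel point-symmetry group I would invoke the normalization of $\mathscr P_0^\delta$ (Proposition~\ref{pro:NSchEPLogMNP0deltaGequiv}), so that every common point symmetry is the projection of an element of $G^\sim_{\mathscr P_0}$ fixing the arbitrary element, i.e.\ satisfying $\tilde V=V$ for all $V$. Imposing $\tilde V=V$ in the transformation rule~\eqref{eq:NSchEPLogMNGequiv}, the $V$-dependent part forces the point transformation to act trivially on $(t,x)$ together with $|T_t|=1$ and $T_t>0$ (so that $\hat V=V$), which gives $T=t$, $\mathcal X=0$, $O=E$. With these values the $V$-independent part of~\eqref{eq:NSchEPLogMNGequiv} reduces to $\Sigma_t-iZ_t-\delta Z=0$, the same system as above with $(\sigma,\zeta)$ replaced by $(\Sigma,Z)$; integrating it (normalizing the free constants so that the $Z_0$-flow has generator $I'$) reproduces the stated forms of $Z$ and $\Sigma$, and the transformation~\eqref{eq:ClassSEquivGroupA}--\eqref{eq:ClassSEquivGroupB} collapses to $\tilde t=t$, $\tilde x=x$, $\tilde\psi=e^{i\Sigma+Z}\psi$. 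This two-parameter family is precisely the connected Lie group with Lie algebra $\langle M,I'\rangle$.

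The hard part will be justifying that $\tilde V=V$ for all $V$ forces the variables to be fixed ($T=t$, $\mathcal X=0$, $O=E$) and not merely $|T_t|=1$. The key observation is that, because $V$ exhausts all complex-valued functions of $(t,x)$, the identity $V(\tilde t,\tilde x)=\hat V(t,x)/|T_t|+\cdots$ can hold simultaneously for every such $V$ only if the induced map on $(t,x)$ is the identity and $\hat V=V$; in particular this rules out the time reflections $T_t=-1$ admitted when $\delta_2=0$, since they would demand $V(-t+b,x)=V^*(t,x)$ for all complex-valued $V$ and some constant $b$. Everything else reduces to splitting off the derivatives of $V$ and integrating linear first-order ordinary differential equations.
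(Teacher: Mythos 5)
Your proof is correct and follows exactly the route the paper intends (the lemma is stated without an explicit proof there): split the classifying condition~\eqref{eq:NSchEPLogMNP0deltaClassifyingCondition} with respect to the derivatives of~$V$ to get $c=\kappa_{ab}=\chi^a=0$ and $\sigma_t-i\zeta_t-\delta\zeta=0$, integrate, and for the group part use the normalization of~$\mathscr P_0^\delta$ together with the requirement $\tilde V=V$ for all~$V$ in~\eqref{eq:NSchEPLogMNGequiv}, which is the same device the paper uses for Proposition~\ref{kernalinvariancepaper}. Your handling of the would-be time reflection $T_t=-1$ when $\delta_2=0$ and the normalization of the constant $Z_0$ both check out against the stated formulas.
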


Any vector field of the general form from Lemma~\ref{lem:NSchEPLogMNP0deltaMIA},
where at least one of the parameters $c$, $\kappa_{ab}$ and $\chi^a$ takes a nonzero value,
belongs to~$\mathfrak g_V$ for a potential~$V$
satisfying the classifying condition~\eqref{eq:NSchEPLogMNP0deltaClassifyingCondition} for this vector field.
This is why we have
\[\textstyle
\mathfrak g_\spanindex:=\sum_V\mathfrak g_V=\langle\, D(1),\,J_{ab},\,a<b,\,P(\chi),\,\sigma M,\,\zeta I\,\rangle,
\]
where the parameter functions $\chi^a$, $\sigma$ and $\zeta$ run through the set of real-valued smooth functions of~$t$.
This linear span is closed with respect to the Lie bracket of vector fields and thus is a Lie algebra,
which coincides, in view of Proposition~\ref{pro:NSchEPLogMNP0deltaGequiv},
with the projection of the equivalence algebra of the class~$\mathscr P_0^\delta$ to the space with the coordinates $(t,x,\psi,\psi^*)$.
The above coincidence obtained due to fixing~$\delta$ leads to
the consistency of the action of the group~\smash{$\pi_*G^\sim_{\mathscr P_0^\delta}$} with the algebra~$\mathfrak g_\spanindex$.
Therefore, within the framework of the algebraic method, the group classification of the class~$\mathscr P_0^\delta$
reduces to the classification of appropriate subalgebras of~$\mathfrak g_\spanindex$ up to \smash{$\pi_*G^\sim_{\mathscr P_0^\delta}$}-equivalence.
The following conditions hold for any equation~$\mathcal L_V$ from the class~$\mathscr P_0^\delta$,
and thus for any appropriate subalgebra of~$\mathfrak g_\spanindex$,
which by Definition~\ref{def:AppropriateSubalg} coincides with $\mathfrak g_V$ for some~$V$:
\begin{gather*}
\dim\mathfrak g_V\leqslant \frac{n(n+3)}2+3,\quad
\dim\mathfrak g_V\cap\langle P(\chi),\sigma M, \zeta I\rangle \leqslant 2n+2,\\
\mathfrak g_V\cap \langle \sigma M, \zeta I\rangle =\langle M, I'\rangle=\mathfrak g^\cap_{\mathscr P_0^\delta}.
\end{gather*}

\subsection{Power modular nonlinearity}\label{sec:NSchEPPowerMN}

For any fixed nonzero real constant~$\lambda$,
the class~$\mathscr P_\lambda$ of nonlinear Schr\"odinger equations with potentials
and power nonlinearity consists of the equations of the form
\begin{gather}\label{subclasspowernonlinerity}
i\psi_t+\psi_{aa}+\delta\rho^\lambda\psi+V(t,x)\psi=0.
\end{gather}
Here $\delta$ is an arbitrary nonzero complex constant
and $V$ is an arbitrary complex-valued potential depending on $t$ and $x$.
The subclass of the class~$\tilde{\mathscr V}$ corresponding to the class~$\mathscr P_\lambda$
is singled out from the class~$\mathscr S$ 
by the constraints $S_{\rho t}=S_{\rho a}=0$, $(\rho S_\rho)_\rho=\lambda S_\rho$ or, equivalently, by the constraints
\begin{gather}\label{persubclasspowernonlinerity1}
\psi S_{\psi t}+\psi^*S_{\psi^*t}=\psi S_{\psi a}+\psi^*S_{\psi^*a}=0,\quad
(\psi \p_{\psi}+\psi^*\p_{\psi^*})^2S=\lambda(\psi \p_\psi+\psi^*\p_{\psi^*})S.
 \end{gather}

Using the description of the equivalence groupoid~$\mathcal G^\sim_{\mathscr S}$ of the class~$\mathscr S$ in Theorem~\ref{thm:ClassSGequiv},
we single out the equivalence groupoid of the class~$\mathscr P_\lambda$ as a subgroupoid of~$\mathcal G^\sim_{\mathscr S}$,
substituting $S=\delta\rho^\lambda+V(t,x)$ and $\tilde S=\tilde\delta\tilde\rho^\lambda+\tilde V(\tilde t,\tilde x)$ into~\eqref{eq:ClassSEquivGroupC}.

\begin{theorem}\label{thm:NSchEPMNPlambdaGequiv}
The class $\mathscr P_\lambda$ with $\lambda\in\mathbb R_{\ne0}$ is normalized.
Its equivalence group $G^\sim_{\mathscr P_\lambda}$ consists of the point transformations
in the space with the coordinates $(t,x,\psi,\psi^*,\delta,\delta^*,V,V^*)$
whose $(t,x,\psi)$-components are of the form~\eqref{eq:ClassSEquivGroupA}--\eqref{eq:ClassSEquivGroupB}
and whose components for the arbitrary elements~$\delta$ and~$V$ are
\begin{gather*}
\tilde\delta=\dfrac{\hat\delta}{\mu^\lambda},\quad
\tilde V=\dfrac{\hat V}{|T_t|}
+\frac{2T_{ttt}T_t-3T_{tt}{}^2}{16\varepsilon'T_t{}^3}x_ax_a
+\frac{\varepsilon'}2\left(\frac{\mathcal X^b_t}{T_t}\right)_{\!t}\frac{O^{ba}x_a}{|T_t|^{1/2}}
+\frac{\Sigma_t}{T_t}-\frac{\mathcal X^a_t\mathcal X^a_t}{4T_t{}^2}+i\lambda'\frac{T_{tt}}{T_t^2},
\end{gather*}
where $\lambda':=1/\lambda-n/4$,
$T$, $\mathcal X^a$ and $\Sigma$ are real-valued functions of $t$ with $T_t\ne0$,
and $e^Z=\mu |T_t|^{-1/\lambda}$ with real constant $\mu>0$.
\end{theorem}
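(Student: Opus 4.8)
The plan is to obtain $G^\sim_{\mathscr P_\lambda}$ as a subgroup of the equivalence group $G^\sim_{\mathscr S}$, exploiting that $\mathscr S$ is normalized (Theorem~\ref{thm:ClassSGequiv}). Since $\mathscr P_\lambda$ is the subclass of $\mathscr S$ singled out by the constraints~\eqref{persubclasspowernonlinerity1}, every point transformation between two equations of $\mathscr P_\lambda$ is an admissible transformation of the normalized class $\mathscr S$; hence its variable-components have the form~\eqref{eq:ClassSEquivGroupA}--\eqref{eq:ClassSEquivGroupB}, and the induced action on the arbitrary element is given by~\eqref{eq:ClassSEquivGroupC}. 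It therefore suffices to determine which of these transformations preserve the special form $S=\delta\rho^\lambda+V(t,x)$.

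First I would substitute $S=\delta\rho^\lambda+V(t,x)$ into~\eqref{eq:ClassSEquivGroupC}. Reading off the modulus in~\eqref{eq:ClassSEquivGroupB} gives $\tilde\rho=e^Z\rho$, so $\rho^\lambda=e^{-\lambda Z}\tilde\rho^\lambda$, and since $\rho$ is real the only $\rho$-dependent contribution to $\tilde S$ is $\hat\delta\,e^{-\lambda Z}|T_t|^{-1}\tilde\rho^\lambda$. Demanding that the coefficient of $\tilde\rho^\lambda$ be a nonzero complex constant $\tilde\delta$ forces $e^{-\lambda Z}|T_t|^{-1}$ to be $t$-independent; writing this positive constant as $\mu^{-\lambda}$ with $\mu>0$ yields $e^Z=\mu|T_t|^{-1/\lambda}$ together with $\tilde\delta=\hat\delta/\mu^\lambda$, which is the claimed action on~$\delta$.

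The remaining, $\rho$-independent terms of~\eqref{eq:ClassSEquivGroupC} depend only on $(t,x)$ and, after inverting the variable transformation, only on $(\tilde t,\tilde x)$; they constitute the transformed potential $\tilde V$. To put $\tilde V$ into the stated form I would eliminate $Z$ via the relation just found: from $Z=\ln\mu-\lambda^{-1}\ln|T_t|$ one gets $Z_t=-\lambda^{-1}T_{tt}/T_t$, and substituting this into the term $(\Sigma_t-iZ_t)/T_t$ and combining it with $-inT_{tt}/(4T_t^2)$ collapses the imaginary part into $i\lambda'T_{tt}/T_t^2$ with $\lambda'=1/\lambda-n/4$, reproducing exactly the asserted expression for $\tilde V$. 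Conversely, every transformation of this form maps $\mathscr P_\lambda$ into itself, so the admissible transformations coincide with the group parameterized by the function $T$ (with $T_t\ne0$), the functions $\mathcal X^a$ and $\Sigma$, the orthogonal matrix $O$ and the constant $\mu>0$.

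The point I would emphasize, and the only genuinely delicate step, is that the constant-$\tilde\delta$ requirement constrains \emph{only} $Z$ (tying it to $T_t$) and imposes \emph{no} restriction on $T$ itself: the Jacobian factor $|T_t|^{-1}$ multiplying the nonlinearity is absorbed by the $\psi$-scaling $Z$ precisely because $\rho^\lambda$ is homogeneous of degree $\lambda$ in $\rho$. This is what distinguishes the power case from the subclass $\tilde{\mathscr V}$ and from the logarithmic class $\mathscr P_0$, where the analogous matching forced $T_{tt}=0$. Normalization then follows because the variable-components of every admissible transformation are independent of the particular $(\delta,V)$, so each admissible transformation extends to an equivalence transformation; equivalently, $\mathcal G^\sim_{\mathscr P_\lambda}$ is generated by $G^\sim_{\mathscr P_\lambda}$.
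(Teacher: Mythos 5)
Your proposal is correct and follows essentially the same route as the paper: the paper likewise obtains $\mathcal G^\sim_{\mathscr P_\lambda}$ as a subgroupoid of $\mathcal G^\sim_{\mathscr S}$ by substituting $S=\delta\rho^\lambda+V(t,x)$ and $\tilde S=\tilde\delta\tilde\rho^\lambda+\tilde V(\tilde t,\tilde x)$ into~\eqref{eq:ClassSEquivGroupC} and matching the coefficient of $\tilde\rho^\lambda$, which ties $Z$ to $T_t$ via $e^Z=\mu|T_t|^{-1/\lambda}$ and collapses $-iZ_t/T_t-inT_{tt}/(4T_t^2)$ into $i\lambda'T_{tt}/T_t^2$. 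Your observation that the homogeneity of $\rho^\lambda$ lets the Jacobian factor be absorbed by $Z$ (so that, unlike in $\tilde{\mathscr V}$ and $\mathscr P_0$, no constraint $T_{tt}=0$ arises) is exactly the point the paper relies on.
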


\begin{corollary}\label{cor:NSchEPPowerMNReductionPotToXIndep}
A $(1+n)$-dimensional nonlinear Schr\"odinger equation of the form~\eqref{subclasspowernonlinerity}
with a potential~$V$ is reduced by a point transformation to an equation of the same form with potential independent of $x$
if and only if
\begin{gather}\label{persubclasspowernonlinerity4}
V=h(t)x_ax_a+h^a(t)x_a+\tilde h^0(t)+ih^0(t),
\end{gather}
where $h$, $h^a$, $h^0$ and $\tilde h^0$ are real-valued functions of $t$.
Moreover, the transformed potential can be assumed imaginary-valued.
\end{corollary}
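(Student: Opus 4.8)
The plan is to exploit the normalization of the class $\mathscr P_\lambda$ established in Theorem~\ref{thm:NSchEPMNPlambdaGequiv}: since this class is normalized, any point transformation mapping one equation of the form~\eqref{subclasspowernonlinerity} to another such equation must belong to the equivalence group $G^\sim_{\mathscr P_\lambda}$, so I may argue directly from the explicit expression for $\tilde V$ given there. Reading off the $x$-dependence of that expression, one sees that the transformation adds to $\hat V/|T_t|$ exactly three kinds of terms: a \emph{real} isotropic quadratic term proportional to $x_ax_a$ whose coefficient depends on $T$ alone, a \emph{real} term linear in $x$ whose coefficient depends on $T$ and $\mathcal X^a$, and a term independent of $x$ (its real part coming from $\Sigma_t/T_t-\mathcal X^a_t\mathcal X^a_t/(4T_t{}^2)$ and its imaginary part from $i\lambda' T_{tt}/T_t^2$). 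The absence of any anisotropic quadratic or imaginary $x$-dependent contribution from the transformation is the structural fact driving the whole argument.

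For necessity, suppose a reducing transformation exists, so $\tilde V$ depends on $\tilde t$ only; since at each fixed $t$ the map $x\mapsto\tilde x$ is an invertible affine map, this is equivalent to $\tilde V$ being independent of $x$. Solving the $\tilde V$-relation for $\hat V$ then forces the $x$-dependent part of $\hat V$ to equal, up to the real factor $|T_t|$, minus the isotropic-quadratic-plus-linear terms, which are real. Because $\hat V$ equals $V$ or $V^*$ and conjugation preserves real coefficients, the $x$-dependent part of $V$ must be $h(t)x_ax_a+h^a(t)x_a$ with real $h$ and $h^a$, while the leftover $x$-independent contribution supplies a complex term $\tilde h^0(t)+ih^0(t)$. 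This is precisely the form~\eqref{persubclasspowernonlinerity4}.

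For sufficiency I would construct the transformation explicitly, taking $O=E$ and $T_t>0$ so that $\hat V=V$ and $\varepsilon'=1$. The coefficient of $x_ax_a$ in $\tilde V$ is then $h/T_t+(2T_{ttt}T_t-3T_{tt}{}^2)/(16T_t{}^3)$; recognizing the numerator $2T_{ttt}T_t-3T_{tt}{}^2$ as $2T_t{}^2$ times the Schwarzian derivative $\{T;t\}:=T_{ttt}/T_t-\tfrac32(T_{tt}/T_t)^2$, the vanishing of this coefficient is the single equation $\{T;t\}=-8h(t)$. Once $T$ is fixed, the linear-in-$x$ part cancels provided $\mathcal X^a$ solves a linear second-order ODE with coefficients built from $T$, which is integrable. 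After these choices $\tilde V$ is a function of $t$ (hence of $\tilde t$) alone, with real part $\tilde h^0/T_t+\Sigma_t/T_t-\mathcal X^a_t\mathcal X^a_t/(4T_t{}^2)$; choosing $\Sigma_t$ to annihilate it leaves $\tilde V$ purely imaginary, which yields the final ``moreover'' assertion.

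The only nontrivial step is the solvability of the Schwarzian equation $\{T;t\}=-8h$ for an arbitrary smooth real-valued $h$, together with the requirement $T_t\ne0$, and this is where I expect the main (though classical) effort to lie. It is resolved by the standard correspondence with second-order linear ODEs: writing $T=u/v$, where $u$ and $v$ are linearly independent solutions of $y_{tt}-4h\,y=0$, gives $T_t=(u_tv-uv_t)/v^2$, a ratio of the constant nonzero Wronskian to $v^2$, so $T_t\ne0$ wherever $v\ne0$. Thus a local solution with $T_t\ne0$ always exists, and all remaining steps reduce to routine integrations.
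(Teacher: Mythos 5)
Your proposal is correct and follows the same route the paper intends: the class $\mathscr P_\lambda$ is normalized (Theorem~\ref{thm:NSchEPMNPlambdaGequiv}), so any reducing point transformation belongs to $G^\sim_{\mathscr P_\lambda}$, and the claim is read off from the explicit formula for $\tilde V$ there, exactly as the paper does for the subsequent corollary on reduction to zero potential. The only addition is that you make explicit the solvability of $\{T;t\}=-8h$ via the standard correspondence with $y_{tt}-4hy=0$, a classical step the paper leaves implicit; this is accurate and harmless.
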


\begin{corollary}
A $(1+n)$-dimensional nonlinear Schr\"odinger equation of the form~\eqref{subclasspowernonlinerity}
with a potential~$V$ is reduced by a point transformation to an equation of the same form with zero potential
if and only if $V$ is of the form~\eqref{persubclasspowernonlinerity4},
where $16(\lambda')^2 h=2\lambda'h^0_t+(h^0)^2$.
If $\lambda=4/n$, then the condition for~$V$ means that $V$ is an arbitrary real-valued $x$-quadratic potential.
\end{corollary}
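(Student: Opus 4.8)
The plan is to reduce the assertion to the already-established structural result and then isolate a single compatibility relation. Since the zero potential is $x$-independent, reducibility of an equation~\eqref{subclasspowernonlinerity} to one with zero potential forces reducibility to an equation with $x$-independent potential; by Corollary~\ref{cor:NSchEPPowerMNReductionPotToXIndep} this already requires $V$ to be of the form~\eqref{persubclasspowernonlinerity4}. I would therefore assume from the outset that $V=h(t)x_ax_a+h^a(t)x_a+\tilde h^0(t)+ih^0(t)$ with real-valued $h,h^a,\tilde h^0,h^0$, and look for parameters $T$, $O$, $\mathcal X^a$, $\Sigma$ and $\mu$ of a transformation from Theorem~\ref{thm:NSchEPMNPlambdaGequiv} that annihilate the target potential, $\tilde V=0$ (here the normalization of $\mathscr P_\lambda$ guarantees that every point transformation within the class comes from $G^\sim_{\mathscr P_\lambda}$).

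Substituting this $V$ into the expression for $\tilde V$ and splitting the equation $\tilde V=0$ with respect to the monomials $x_ax_a$, $x_a$ and~$1$, and then into real and imaginary parts, I would obtain a system of $n+3$ ordinary differential equations in $t$ for the $n+2$ parameter functions $T$, $\mathcal X^a$ and $\Sigma$. I expect the $n$ linear-in-$x$ equations to determine $\mathcal X^a$ (given $T$ and $O$, using that $O$ is invertible) by two integrations, and the real part of the constant equation to determine $\Sigma$ by one integration; these impose no constraint on~$V$. The remaining two equations, coming from the coefficient of $x_ax_a$ and from the imaginary part of the constant term, both involve the single function $T$ alone and read
\begin{gather*}
16hT_t{}^2+2T_{ttt}T_t-3T_{tt}{}^2=0,\qquad h^0+\lambda'\frac{T_{tt}}{T_t}=0,
\end{gather*}
so the system is overdetermined by exactly one equation and its solvability is the sought condition. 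Both signs of $T_t$ lead to this same pair, since conjugating $V$ flips the sign of $h^0$ while the factors $\varepsilon'$ and $|T_t|$ compensate.

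The crux of the argument is the elimination of $T$ between these two equations. Setting $p:=T_{tt}/T_t$, the Schwarzian-type combination rewrites as $2T_{ttt}T_t-3T_{tt}{}^2=(2p_t-p^2)T_t{}^2$, so the first equation becomes $16h=p^2-2p_t$. When $\lambda'\ne0$ the second equation gives $p=-h^0/\lambda'$, hence $p_t=-h^0_t/\lambda'$; substituting into $16h=p^2-2p_t$ yields $16(\lambda')^2h=(h^0)^2+2\lambda'h^0_t$, which is the stated condition. For the converse, given this condition I would integrate $T_{tt}/T_t=-h^0/\lambda'$ to produce an admissible real-valued $T$ with $T_t\ne0$; the first $T$-equation then holds automatically by the compatibility relation, after which $\mathcal X^a$ and $\Sigma$ are recovered as above, completing the construction of a transformation with $\tilde V=0$.

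Finally, for $\lambda=4/n$ one has $\lambda'=1/\lambda-n/4=0$, so the imaginary-constant equation degenerates to $h^0=0$ and the condition $16(\lambda')^2h=2\lambda'h^0_t+(h^0)^2$ reduces to $(h^0)^2=0$, i.e.\ again $h^0=0$; thus $V$ must be real-valued and $x$-quadratic. In this case $T$ is no longer fixed by the imaginary part, and the single surviving equation $16h=p^2-2p_t$ is a Riccati equation for $p$ that is solvable for any real-valued $h$, so every real-valued $x$-quadratic potential is indeed reducible to zero. I expect the main obstacle to be the bookkeeping that cleanly separates the two $T$-equations from the $\mathcal X^a$- and $\Sigma$-equations and confirms the solvability of the latter, together with the careful treatment of the degenerate case $\lambda'=0$.
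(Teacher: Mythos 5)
Your proposal is correct and follows essentially the same route as the paper: set $\tilde V=0$ in the transformation formula of Theorem~\ref{thm:NSchEPMNPlambdaGequiv}, observe that the $x_a$-linear and real constant parts only fix $\mathcal X^a$ and $\Sigma$, and eliminate $T$ between the representations $h^0=-\lambda'T_{tt}/T_t$ and $16h=-(2T_{ttt}T_t-3T_{tt}^2)/T_t^2$ to get the stated compatibility relation. The paper states this elimination without detail; your substitution $p=T_{tt}/T_t$, the converse construction of $T$, and the degenerate analysis for $\lambda'=0$ are exactly the intended fleshing-out.
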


\begin{proof}
The expression~\eqref{persubclasspowernonlinerity4} for $V$ that is reducible to zero
is obtained by setting~$\tilde V=0$ in Theorem~\ref{thm:NSchEPMNPlambdaGequiv}.
The representations
\[
h^0=-\lambda'\frac{T_{tt}}{T_t}, \quad h=-\frac{2T_{ttt}T_t-3T^2_{tt}}{16T^2_t}
\]
are jointly equivalent to the equality $16(\lambda')^2 h=2\lambda'h^0_t+(h^0)^2$.
\end{proof}

\begin{remark}
Instead of the family of the subclasses $\mathscr P_\lambda$ parameterized by $\lambda\in\mathbb R_{\ne0}$,
one can study the entire class~$\mathscr P$ of (1+$n$)-dimensional nonlinear Schr\"odinger equations
with potentials and power modular nonlinearities,
i.e., the class of equations of the form~\eqref{subclasspowernonlinerity},
where $\lambda$ is assumed to be one more constant arbitrary element running through $\mathbb R_{\ne0}$.
We can interpret the class~$\mathscr P$ as the disjoint union of the subclasses $\mathscr P_\lambda$ with $\lambda\in\mathbb R_{\ne0}$.
A drawback of the above approach is the need to consider the generalized equivalence group of~$\mathscr P$,
where the $x$-component of equivalence transformations will depend on~$\lambda$, see Theorem~\ref{thm:NSchEPMNPlambdaGequiv},
since the class~$\mathscr P$ is normalized in the generalized sense only.
\end{remark}

\begin{remark}%\looseness=-1
Arguing similarly to Remark~\ref{rem:NSchEPLogMNP0delta} using Theorem~\ref{thm:NSchEPMNPlambdaGequiv},
we fix an arbitrary value of~$\delta$ and assume $V$ the only arbitrary element.
By $\mathscr P_\lambda^\delta$ we denote the subclass of equations in $\mathscr P_\lambda$ with a fixed value of~$\delta$.
The class~$\mathscr P_\lambda$ can be interpreted as the disjoint union of its subclasses~$\mathscr P_\lambda^\delta$,
where $\delta$ runs through the set of nonzero complex numbers.
Theorem~\ref{thm:NSchEPMNPlambdaGequiv} implies
that equations from subclasses~\smash{$\mathscr P_\lambda^\delta$} and~\smash{$\mathscr P_\lambda^{\tilde\delta}$}
are related by point transformations if and only if
$\tilde\delta/\delta\in\mathbb R_{>0}$ or $\tilde\delta/\delta^*\in\mathbb R_{>0}$,
and then the point transformation~$\mathcal D(T)$
with $T=\delta\tilde\delta^{-1}t$ or with $T=-\delta^*\tilde\delta^{-1}t$, respectively,
maps the entire subclass~\smash{$\mathscr P_0^\delta$} onto the subclass~\smash{$\mathscr P_0^{\tilde\delta}$}.
In other words, up to $G^\sim_{\mathscr P_0}$-equivalence
one can set the constraints $|\delta|=1$ and $\mathop{\rm Im}\delta\geqslant0$ on~$\delta$.
\end{remark}

\begin{proposition}\label{pro:NSchEPMNPlambdadeltaGequiv}
The class~$\mathscr P_\lambda^\delta$ is normalized.
The equivalence group~\smash{$G^\sim_{\mathscr P_\lambda^\delta}$} of this class
is constituted by the projections of elements of~$G^\sim_{\mathscr P_\lambda}$
onto the space with the coordinates $(t,x,\psi,\psi^*,V,V^*)$,
where $\mu=1$ and, if $\mathop{\rm Im}\delta\ne0$, $T_t>0$.
\end{proposition}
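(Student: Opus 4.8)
The plan is to deduce the statement from the normalization of the superclass $\mathscr P_\lambda$ established in Theorem~\ref{thm:NSchEPMNPlambdaGequiv}, treating $\mathscr P_\lambda^\delta$ as the subclass obtained by fixing the value of the arbitrary element $\delta$ and retaining $V$ as the only arbitrary element. Since $\mathscr P_\lambda^\delta$ is a subclass of the normalized class $\mathscr P_\lambda$, every point transformation between two equations of $\mathscr P_\lambda^\delta$ is in particular a point transformation between two equations of $\mathscr P_\lambda$, and hence is generated by an element of $G^\sim_{\mathscr P_\lambda}$. Such an element belongs to the equivalence groupoid of $\mathscr P_\lambda^\delta$ precisely when it preserves the fixed value of $\delta$, that is, when $\tilde\delta=\delta$.

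First I would impose the constraint $\tilde\delta=\delta$ on the action $\tilde\delta=\hat\delta/\mu^\lambda$ recorded in Theorem~\ref{thm:NSchEPMNPlambdaGequiv}. Passing to absolute values and using $|\hat\delta|=|\delta|$ together with $\mu>0$ and $\lambda\ne0$ immediately yields $\mu^\lambda=1$, hence $\mu=1$; this in turn fixes $Z$ through the relation $e^Z=\mu|T_t|^{-1/\lambda}=|T_t|^{-1/\lambda}$. It then remains to treat the phase. If $T_t>0$, then $\hat\delta=\delta$ and the constraint $\tilde\delta=\delta$ holds identically. If $T_t<0$, then $\hat\delta=\delta^*$ and $\tilde\delta=\delta$ forces $\delta^*=\delta$, i.e.\ $\mathop{\rm Im}\delta=0$; equivalently, whenever $\mathop{\rm Im}\delta\ne0$ the orientation $T_t>0$ is obligatory.

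These constraints on $\mu$ and on the sign of $T_t$ involve neither $V$ nor the remaining transformation parameters $T$, $\mathcal X^a$, $\Sigma$ and $O$, so the transformations satisfying them form the stabilizer of the fixed value of $\delta$ and thus a subgroup of $G^\sim_{\mathscr P_\lambda}$. This subgroup acts uniformly on every equation of $\mathscr P_\lambda^\delta$, i.e.\ for all admissible potentials $V$, so I would conclude that the entire equivalence groupoid of $\mathscr P_\lambda^\delta$ is generated by it. This is exactly the normalization of $\mathscr P_\lambda^\delta$, and the equivalence group $G^\sim_{\mathscr P_\lambda^\delta}$ is obtained by projecting this subgroup onto the space with the coordinates $(t,x,\psi,\psi^*,V,V^*)$, the $\delta$-component having become the redundant identity.

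The argument is largely formal once Theorem~\ref{thm:NSchEPMNPlambdaGequiv} is in hand; the only step requiring genuine care is the phase bookkeeping through $\hat\delta$, and in particular the observation that a negative orientation $T_t<0$ is admissible only for real $\delta$. This parallels the treatment of the logarithmic case in Proposition~\ref{pro:NSchEPLogMNP0deltaGequiv}, the essential difference being that here the reality/orientation constraint is produced by the factor $\mu^\lambda$ rather than by $|T_t|$, so that $\mu=1$ replaces the condition $|T_t|=1$ appearing in the logarithmic setting.
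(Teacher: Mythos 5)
Your proposal is correct and follows essentially the same route as the paper, which states the proposition without a separate proof as an immediate consequence of Theorem~\ref{thm:NSchEPMNPlambdaGequiv}: normalization of~$\mathscr P_\lambda$ reduces everything to the stabilizer of the fixed~$\delta$ in~$G^\sim_{\mathscr P_\lambda}$, and the constraint $\tilde\delta=\hat\delta/\mu^\lambda=\delta$ yields $\mu=1$ from the modulus and $T_t>0$ from the conjugation when $\mathop{\rm Im}\delta\ne0$. Your closing comparison with the logarithmic case is accurate, up to a minor slip of wording: it is the \emph{modulus} constraint that is carried by $\mu^\lambda$ here versus $|T_t|$ there, while the reality/orientation constraint comes from $\hat\delta$ in both cases.
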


Substituting $S=\delta\rho^\lambda+V(t,x)$ into the classifying condition~\eqref{NSchEGMNClassifyingCondition}
and splitting with respect to~$\rho$, we obtain the determining equation $\lambda\zeta +\tau_t=0$
and the classifying condition for the class~$\mathscr P_\lambda^\delta$,
\begin{equation}\label{eq:NSchEPMNPlambdadeltaClassifyingCondition}
\tau V_t+\left(\frac12\tau_tx_a+\kappa_{ab}x_b+\chi^a\right)V_a+\tau_tV=\frac18\,\tau_{ttt}x_ax_a+\frac12\chi^a_{tt}x_a+\sigma_t+i{\lambda'}\tau_{tt}.
\end{equation}
As a result, Theorem~\ref{thm:ClassSMIA}  implies the following assertion.

\begin{lemma}\label{lem:NSchEPMNPlambdadeltaMIA}
Any vector field $Q$ from the maximal Lie invariance algebra~$\mathfrak g_V$
of an equation~$\mathcal L_V$ from the class~$\mathscr P_\lambda^\delta$
is of the form $D^\lambda(\tau)-\sum_{a<b}\kappa_{ab}J_{ab}+P(\chi)+\sigma M$,
where $D^\lambda(\tau)=D(\tau)-\lambda^{-1}\tau_t I$,
$(\kappa_{ab})$ is an arbitrary constant skew-symmetric $n\times n$ matrix,
and the parameter functions $\tau$, $\chi^a$ and $\sigma$ are arbitrary real-valued smooth functions of~$t$
that satisfy the classifying condition~\eqref{eq:NSchEPMNPlambdadeltaClassifyingCondition}.
\end{lemma}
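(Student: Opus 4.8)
The plan is to obtain the assertion as a direct specialization of Theorem~\ref{thm:ClassSMIA} to the arbitrary element $S=\delta\rho^\lambda+V(t,x)$, which realizes $\mathscr P_\lambda^\delta$ as a subclass of the superclass~$\mathscr S$. An equation of $\mathscr P_\lambda^\delta$ with potential~$V$ is literally the equation $\mathcal L_S$ of~$\mathscr S$ for this $S$, and the maximal Lie invariance algebra is an intrinsic attribute of the equation; hence $\mathfrak g_V=\mathfrak g_S$ for this $S$. Therefore each $Q\in\mathfrak g_V$ already has the generic form $D(\tau)-\sum_{a<b}\kappa_{ab}J_{ab}+P(\chi)+\sigma M+\zeta I$ of Theorem~\ref{thm:ClassSMIA}, with components $\tau$, $\chi^a$, $\sigma$, $\zeta$ and the skew matrix $(\kappa_{ab})$ constrained by the classifying condition~\eqref{NSchEGMNClassifyingCondition}. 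It remains only to feed the specific structure of~$S$ into that condition.

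First I would compute $S_t=V_t$, $S_a=V_a$, $\rho S_\rho=\delta\lambda\rho^\lambda$ and $S=\delta\rho^\lambda+V$, and substitute them into~\eqref{NSchEGMNClassifyingCondition}. On the left-hand side the only $\rho$-dependence enters through the single monomial $\rho^\lambda$, coming from $\zeta\rho S_\rho$ and from the $\delta\rho^\lambda$ part of $\tau_tS$, with total coefficient $\delta(\lambda\zeta+\tau_t)$; everything else is $\rho$-independent. Since $\lambda\ne0$, the functions $1$ and $\rho^\lambda$ are linearly independent, so splitting with respect to~$\rho$ is legitimate, and because $\delta\ne0$ the vanishing of the $\rho^\lambda$-coefficient yields the determining equation $\lambda\zeta+\tau_t=0$, that is, $\zeta=-\tau_t/\lambda$.

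The $\rho$-independent remainder of the split equation is the sought classifying condition. To match~\eqref{eq:NSchEPMNPlambdadeltaClassifyingCondition} verbatim, I would insert $\zeta=-\tau_t/\lambda$, whence $\zeta_t=-\tau_{tt}/\lambda$, into the inhomogeneous term $-i\zeta_t-i\frac n4\tau_{tt}$ of~\eqref{NSchEGMNClassifyingCondition}; it collapses to $i(1/\lambda-n/4)\tau_{tt}=i\lambda'\tau_{tt}$ with $\lambda'=1/\lambda-n/4$, while the remaining terms coincide with those of~\eqref{eq:NSchEPMNPlambdadeltaClassifyingCondition}. Finally, eliminating $\zeta$ from the vector field by the same relation gives $D(\tau)+\zeta I=D(\tau)-\lambda^{-1}\tau_t I=D^\lambda(\tau)$, so $Q=D^\lambda(\tau)-\sum_{a<b}\kappa_{ab}J_{ab}+P(\chi)+\sigma M$, as claimed. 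This is essentially a routine computation with no genuine obstacle; the only point deserving a word of care is the validity of the $\rho$-splitting, which rests precisely on $\lambda\ne0$ (so that $\rho^\lambda$ is nonconstant) and on $\delta\ne0$ (so that its coefficient may be cancelled to extract $\lambda\zeta+\tau_t=0$).
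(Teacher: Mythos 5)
Your proposal is correct and follows exactly the paper's route: the paper likewise derives the lemma by substituting $S=\delta\rho^\lambda+V(t,x)$ into the classifying condition~\eqref{NSchEGMNClassifyingCondition} of Theorem~\ref{thm:ClassSMIA}, splitting with respect to~$\rho$ to obtain $\lambda\zeta+\tau_t=0$ together with~\eqref{eq:NSchEPMNPlambdadeltaClassifyingCondition}, and then eliminating $\zeta$ to arrive at $D^\lambda(\tau)$. Your added remarks on why the $\rho$-splitting is legitimate ($\lambda\ne0$, $\delta\ne0$) are a correct, if implicit in the paper, justification.
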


\begin{lemma}
The kernel Lie invariance algebra of the equations from the class~$\mathscr P_\lambda^\delta$
is \smash{$\mathfrak g^\cap_{\mathscr P_\lambda^\delta}=\langle M\rangle$}.
\end{lemma}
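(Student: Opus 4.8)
The plan is to compute the kernel as the intersection $\bigcap_V\mathfrak g_V$ over all admissible potentials, using the explicit description of $\mathfrak g_V$ from Lemma~\ref{lem:NSchEPMNPlambdadeltaMIA}. A vector field $Q=D^\lambda(\tau)-\sum_{a<b}\kappa_{ab}J_{ab}+P(\chi)+\sigma M$ lies in the kernel precisely when its parameters $\tau$, $\kappa_{ab}$, $\chi^a$ and $\sigma$ satisfy the classifying condition~\eqref{eq:NSchEPMNPlambdadeltaClassifyingCondition} for \emph{every} admissible $V$. I would therefore treat $V$ as a free arbitrary element and split~\eqref{eq:NSchEPMNPlambdadeltaClassifyingCondition} with respect to $V$ and its derivatives, which enter its left-hand side linearly and may be regarded as independent jet variables because $V$ ranges over all smooth complex-valued functions of $(t,x)$.

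Carrying out the splitting, I would first collect the coefficient of $V_t$, obtaining $\tau=0$ and hence $\tau_t=\tau_{tt}=\tau_{ttt}=0$. The coefficient of each $V_a$ then reduces to $\kappa_{ab}x_b+\chi^a=0$; being an identity in $x$ with $\chi^a$ depending on $t$ alone, it splits further with respect to $x_b$ to give $\kappa_{ab}=0$ and $\chi^a=0$. The term proportional to $V$ yields $\tau_t=0$, already ensured. Finally, the $V$-free part of~\eqref{eq:NSchEPMNPlambdadeltaClassifyingCondition} collapses to $\sigma_t=0$, its imaginary part $\lambda'\tau_{tt}$ vanishing automatically, so $\sigma$ is a real constant and every kernel element has the form $\sigma M$.

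Conversely, $M$ itself satisfies~\eqref{eq:NSchEPMNPlambdadeltaClassifyingCondition} trivially for all $V$ (all terms vanish), so $M\in\mathfrak g_V$ for every $V$, whence $\mathfrak g^\cap_{\mathscr P_\lambda^\delta}=\langle M\rangle$. There is no genuine obstacle here: the argument is a routine linear splitting. The only point deserving attention, and the one that distinguishes this case from that of $\mathscr P_0^\delta$, is that in Lemma~\ref{lem:NSchEPMNPlambdadeltaMIA} the dilation operator carries the rigidly coupled term $-\lambda^{-1}\tau_t I$ rather than an independent scaling $\zeta I$; once $\tau=0$ forces $\tau_t=0$, this coupled $I$-contribution disappears as well, which is exactly why the power nonlinearity produces no extra kernel element beyond $M$.
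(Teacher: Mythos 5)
Your proof is correct and follows exactly the route the paper uses for all its kernel computations (the lemma itself is stated there without an explicit proof, but the method ``vary the arbitrary element~$V$ and split the classifying condition with respect to $V$ and its derivatives'' is the one applied to $\mathfrak g^\cap_{\mathscr V^f}$ and $\mathfrak g^\cap_{\mathscr P_0^\delta}$): splitting~\eqref{eq:NSchEPMNPlambdadeltaClassifyingCondition} gives $\tau=0$, then $\kappa_{ab}=\chi^a=0$ and $\sigma_t=0$, leaving $\langle M\rangle$. Your closing remark correctly pinpoints why no analogue of $I'$ survives here: in $D^\lambda(\tau)$ the $I$-component is rigidly tied to $\tau_t$ via $\lambda\zeta+\tau_t=0$, so it dies with $\tau$, unlike the independent $\zeta I$ in the logarithmic case.
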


Any vector field of the general form from Lemma~\ref{lem:NSchEPMNPlambdadeltaMIA},
where at least one of the parameters $\tau$, $\kappa_{ab}$, $\chi^a$ and $\sigma$ takes a nonzero value,
belongs to~$\mathfrak g_V$ for a potential~$V$
satisfying the classifying condition~\eqref{eq:NSchEPMNPlambdadeltaClassifyingCondition} for this vector field.
This is why we have
\[\textstyle
\mathfrak g_\spanindex:=\sum_V\mathfrak g_V=\langle\, D^\lambda(\tau),\,J_{ab},\,a<b,\,P(\chi),\,\sigma M\,\rangle,
\]
\looseness=-1
where the parameter functions $\tau$, $\chi^a$ and $\sigma$ run through the set of real-valued smooth functions of~$t$.
This linear span is closed with respect to the Lie bracket of vector fields and thus is a Lie algebra,
which coincides, in view of Proposition~\ref{pro:NSchEPMNPlambdadeltaGequiv},
with the projection of the equivalence algebra of the class~$\mathscr P_\lambda^\delta$ to the space with the coordinates $(t,x,\psi,\psi^*)$.
The above coincidence is again both a consequence of and a justification for fixing~$\delta$
since the action of the group~\smash{$\pi_*G^\sim_{\mathscr P_\lambda^\delta}$} is then consistent with the algebra~$\mathfrak g_\spanindex$.
Therefore, within the framework of the algebraic method, the group classification of the class~$\mathscr P_\lambda^\delta$
reduces to the classification of the appropriate subalgebras of~$\mathfrak g_\spanindex$ up to~\smash{$\pi_*G^\sim_{\mathscr P_\lambda^\delta}$}-equivalence.
The following conditions hold for any equation~$\mathcal L_V$ from the class~$\mathscr P_\lambda^\delta$,
and thus for any appropriate subalgebra of~$\mathfrak g_\spanindex$,
which by Definition~\ref{def:AppropriateSubalg} coincides with $\mathfrak g_V$ for some~$V$:
\begin{gather*}
\dim\mathfrak g_V\leqslant \frac{n(n+3)}2+4,\quad
\mathfrak g_V\cap\langle\sigma M\rangle=\langle M\rangle,\quad
\dim\mathfrak g_V\cap\langle P(\chi),\sigma M\rangle \leqslant 2n+1.
\end{gather*}

\begin{lemma}\label{lem:NSchEPMNPlambdadeltaPigV}
For all $V$, $\pi^0_*\mathfrak g_V $ is a Lie algebra and $\dim\pi^0_*\mathfrak g_V\leqslant 3$.
Moreover,
\[
\pi^0_*\mathfrak g_V\in\{\{0\},\langle \p_t\rangle,\langle \p_t,t\p_t\rangle,\langle \p_t,t\p_t,t^2\p_t\rangle\}
\bmod \pi^0_*G^\sim_{\mathscr P_\lambda},\quad \text{where}\quad\pi^0_*\mathfrak g_\spanindex=\langle \tau\p_t\rangle,
\]
and $\pi^0$ is the projection onto the space with the coordinate~$t$.
\end{lemma}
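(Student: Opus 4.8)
The plan is to reduce the assertion to the classical description of finite-dimensional Lie algebras of vector fields on the real line. First I would observe, using Lemma~\ref{lem:NSchEPMNPlambdadeltaMIA}, that every $Q\in\mathfrak g_V$ is of the form $Q=D^\lambda(\tau)-\sum_{a<b}\kappa_{ab}J_{ab}+P(\chi)+\sigma M$, whose only $\p_t$-term is $\tau\p_t$ with $\tau$ a function of $t$ alone, while all remaining summands involve $\p_a$, $\p_\psi$ and $\p_{\psi^*}$ only. Hence each element of $\mathfrak g_V$ is projectable onto the $t$-line, $\pi^0_*Q=\tau\p_t$, and for $Q_1,Q_2\in\mathfrak g_V$ the $t$-component of $[Q_1,Q_2]$ equals $\tau_1\tau_{2,t}-\tau_2\tau_{1,t}$ because $\tau_1,\tau_2$ depend on $t$ only. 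Thus $\pi^0_*$ restricts to a Lie-algebra homomorphism on $\mathfrak g_V$, so $\pi^0_*\mathfrak g_V$ is a Lie algebra; it is finite-dimensional, being the homomorphic image of $\mathfrak g_V$, whose dimension does not exceed $n(n+3)/2+4$.

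Next I would invoke the classical Lie theorem on vector fields on a one-dimensional manifold: any finite-dimensional Lie algebra of such fields has dimension at most $3$ and, up to a local change of~$t$, coincides with one of
\[
\{0\},\quad \langle\p_t\rangle,\quad \langle\p_t,t\p_t\rangle,\quad \langle\p_t,t\p_t,t^2\p_t\rangle;
\]
see, e.g., \cite{Olver1986}. The low-dimensional cases are easy to recover directly: a single nonvanishing field is rectified to~$\p_t$, and a two-dimensional algebra of vector fields on the line is necessarily nonabelian (two commuting fields $\tau_1\p_t$, $\tau_2\p_t$ are proportional), hence equivalent to $\langle\p_t,t\p_t\rangle$. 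This simultaneously yields $\dim\pi^0_*\mathfrak g_V\leqslant3$ and the list of normal forms.

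To express these normal forms modulo the equivalence group, I would use Theorem~\ref{thm:NSchEPMNPlambdaGequiv}: the $t$-component of an arbitrary element of $G^\sim_{\mathscr P_\lambda}$ is $\tilde t=T(t)$ with $T$ an arbitrary smooth real-valued function satisfying $T_t\ne0$, so $\pi^0_*G^\sim_{\mathscr P_\lambda}$ is exactly the (pseudo)group of local diffeomorphisms of the $t$-line. Therefore the rectifying change of variable supplied by Lie's theorem is induced by an equivalence transformation of the class, and the four normal forms are attained modulo $\pi^0_*G^\sim_{\mathscr P_\lambda}$. The concluding equality $\pi^0_*\mathfrak g_\spanindex=\langle\tau\p_t\rangle$ is then immediate from the explicit form $\mathfrak g_\spanindex=\langle D^\lambda(\tau),J_{ab},P(\chi),\sigma M\rangle$, since $\pi^0_*D^\lambda(\tau)=\tau\p_t$ with arbitrary~$\tau$ whereas $J_{ab}$, $P(\chi)$ and $\sigma M$ project to zero.

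The hard part is conceptual rather than computational. It lies in making precise the two reductions: that the projection is a genuine Lie-algebra homomorphism, which hinges on $\tau$ depending on~$t$ only (guaranteed by Lemma~\ref{lem:NSchEPMNPlambdadeltaMIA}), and that Lie's local classification may legitimately be combined with the global equivalence group of the class so that the normal forms indeed hold modulo $\pi^0_*G^\sim_{\mathscr P_\lambda}$. Some care is also required near zeros of the projected fields, where the rectification is only local; but this does not affect the statement, which is formulated up to equivalence.
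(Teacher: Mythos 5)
Your argument is correct and coincides with the paper's (unstated) reasoning: the lemma is given without proof there, and the proof of Theorem~\ref{thm:NSchEPPowerMN(1+2)D} relies on precisely your reduction, namely that the projection $\pi^0_*$ is a Lie-algebra homomorphism because the $t$-components $\tau$ depend on $t$ alone, that $\pi^0_*G^\sim_{\mathscr P_\lambda}$ exhausts the local diffeomorphisms of the $t$-line, and that Lie's classification of finite-dimensional Lie algebras of vector fields on the real line then yields the four normal forms. The only point worth making explicit is the one you already flag: Lie's theorem is local and, for merely smooth fields, the global dimension bound needs the observation that the classifying condition~\eqref{eq:NSchEPMNPlambdadeltaClassifyingCondition} forces each $\tau$ to satisfy a linear ordinary differential equation (cf.\ the proof of Lemma~\ref{lem:ClassSUpperBoundOfDimgS}), so the projected fields are determined by finite jets and the local normal forms propagate globally.
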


\section{Group classification in dimension (1+2)}\label{sec:NSchEPMN(1+2)D}

We present the complete group classification of the class $\mathscr V$ for $n=2$.
We recall that this class is not normalized
but it is partitioned into the normalized subclasses
$\mathscr V'$, $\mathscr P_0$ and $\mathscr P_\lambda$, $\lambda\in\mathbb R_{\ne0}$,
consisting of equations
whose general forms are~\eqref{groupclassificationofsubclassnonlineartity},~\eqref{eq:NSchEPLogMN}
and~\eqref{subclasspowernonlinerity}, respectively.
Following results of Section~\ref{sec:NSchEPMN},
we consider the classes~$\mathscr V^f$, $\mathscr P_0^\delta$ and~$\mathscr P_\lambda^\delta$
instead of $\mathscr V'$, $\mathscr P_0$ and $\mathscr P_\lambda$, respectively.

In this section, the maximal Lie invariance algebra of the equation from one of the classes under consideration
with a potential~$V$ and the corresponding nonlinearity is denoted by~$\mathfrak g_V$.
The indices $a$ and $b$ run from~1 to~2, and we assume summation over repeated indices.
Since here the collection $\{J_{ab},a<b\}$ consists of the single vector field~$J_{12}$,
for convenience we re-denote $J:=J_{12}$ and $\kappa:=-\kappa_{12}=\kappa_{21}$.
In each subsection of this section, we will omit the indication of the class under study
in the notation of its equivalence group, shortly denoting this group by~$G^\sim$.
We also use the notations
\begin{gather*}
|x|=\sqrt{x_1^2+x_2^2},\quad
\phi=\arctan\frac{x_2}{x_1},\quad
\omega_1=x_1\cos\kappa t+x_2\sin\kappa t,\quad
\omega_2=-x_1\sin\kappa t+x_2\cos\kappa t.
\end{gather*}
For each appropriate subalgebra $\mathfrak s$ of the associated linear span $\mathfrak g_\spanindex$,
we also introduce five integers that are invariant under equivalence transformations of the class under study.
We define these integers as
\begin{gather*}
r_1:=\rank\{\chi\mid \exists\,\sigma\colon P(\chi)+\sigma M\in \mathfrak s\},\\
k_0:=\dim\mathfrak s\cap\langle\sigma M\rangle=\dim\mathfrak g^\cap=1 \quad\mbox{since}\quad \mathfrak s\supseteq\mathfrak g^\cap=\langle M\rangle,\\
k_1:=\dim\mathfrak s\cap\langle P(\chi),\,\sigma M\rangle-k_0,\\
k_2:=\dim\mathfrak s\cap\langle J,\,P(\chi),\,\sigma M \rangle-k_1-k_0,\\
k_3:=\dim\mathfrak s-k_2-k_1-k_0=\dim\pi^0_*\mathfrak s
\end{gather*}
for the classes~$\mathscr V^f$ and $\mathscr P_0^\delta$ in Sections~\ref{sec:NSchEPMNVf(1+2)D} and~\ref{sec:NSchEPPowerMN(1+2)D}
and as
\begin{gather*}
r_1:=\rank\{\chi\mid \exists\,\sigma,\zeta\colon P(\chi)+\sigma M+\zeta I\in \mathfrak s\},\\[.5ex]
k_0:=\dim\mathfrak s\cap\langle \sigma M,\,\zeta I\rangle=\dim\mathfrak g^\cap=2 \quad\mbox{since}\quad \mathfrak s\supseteq\mathfrak g^\cap=\langle M,\,I\rangle,\\
k_1:=\dim\mathfrak s\cap\langle P(\chi),\,\sigma M,\,\zeta I \rangle-k_0,\\[.5ex]
k_2:=\dim\mathfrak s\cap\langle J,\,P(\chi),\,\sigma M,\,\zeta I\rangle-k_1-k_0,\\[.5ex]
k_3:=\dim\mathfrak s-k_2-k_1-k_0=\dim\pi^0_*\mathfrak s
\end{gather*}
for the class~$\mathscr P_\lambda^\delta$ in Section~\ref{sec:NSchEPLogMN(1+2)D}.
Here and in what follows $\pi^0$ is the projection onto the space of the variable $t$, 
and the parameter functions $\chi^1$, $\chi^2$, $\sigma$ and~$\zeta$ in linear spans run through the set of real-valued smooth functions of~$t$. 
It is obvious that in each case 
\[
\dim\mathfrak s=k_0+k_1+k_2+k_3,\quad r_1\in\{0,1,2\},\quad k_2\in\{0,1\},\quad r_1\leqslant k_1,
\]
%$\dim\mathfrak s=k_0+k_1+k_2+k_3$, $r_1\in\{0,1,2\}$, $k_2\in\{0,1\}$, $r_1\leqslant k_1$,
and $r_1=0$ if and only if $k_1=0$.
Results of Sections~\ref{sec:PreliminaryAnalysisOfLieSymsOfClassS} and~\ref{sec:NSchEPMN} imply that $k_1\in \{0,\dots,4\}$;
%$k_0=1$ within the classes~$\mathscr V^f$ and $\mathscr P_\lambda^\delta$ and $k_0=2$ within the class~$\mathscr P_0^\delta$
$k_3\in\{0,1\}$ within the classes~$\mathscr V^f$ and $\mathscr P_0^\delta$ and $k_3\in\{0,1,2,3\}$ within the class~$\mathscr P_\lambda^\delta$,
and $\dim\mathfrak s\leqslant7,8,9$ for the classes~$\mathscr V^f$, $\mathscr P_0^\delta$ and~$\mathscr P_\lambda^\delta$, respectively.

\begin{lemma}\label{lem:NSchEPMN(1+2)Dr1k1}
If $r_1=1$, then $k_2=0$.
\end{lemma}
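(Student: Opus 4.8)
The plan is to argue by contradiction: I assume $k_2=1$ together with $r_1\geqslant1$ and then exhibit a second, linearly independent translation direction, forcing $r_1=2$, which is incompatible with $r_1=1$. This yields the implication $r_1=1\Rightarrow k_2=0$ in contrapositive form.

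First I would unwind the two hypotheses into concrete elements of the appropriate subalgebra~$\mathfrak s$. Since $\langle J,P(\chi),\sigma M\rangle$ (respectively $\langle J,P(\chi),\sigma M,\zeta I\rangle$ in the case with a free $I$) is spanned, modulo $\langle P(\chi),\sigma M\rangle$ (respectively with $\zeta I$ adjoined), by the single field~$J$, the condition $k_2=1$ means precisely that $\mathfrak s$ contains a vector field $Q_0=\kappa J+P(\chi^0)+\sigma^0M\ (+\zeta^0I)$ with $\kappa\neq0$. Likewise, $r_1\geqslant1$ supplies an element $R=P(\chi)+\sigma M\ (+\zeta I)\in\mathfrak s$ whose $\chi$ is not identically zero, so that $\chi(t)\neq0$ for some~$t$.

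The core of the argument is the commutator $[Q_0,R]$, which again lies in~$\mathfrak s$ because $\mathfrak s$ is a subalgebra. Using the commutation relations collected in Section~\ref{sec:PreliminaryAnalysisOfLieSymsOfClassS}, I observe that $J$ commutes with every $M$- and $I$-term and that $[P(\chi^0),P(\chi)]$ is a multiple of~$M$; hence the only surviving contribution to the $P$-part is $[\kappa J,P(\chi)]=\kappa P(\hat\chi)$, where for $n=2$ one has $\hat\chi=(\chi^2,-\chi^1)$, the planar rotation of~$\chi$ by a right angle. Dividing by $\kappa\neq0$, I conclude that $P(\hat\chi)+\sigma'M\ (+\zeta'I)\in\mathfrak s$ for suitable $\sigma'$ (and $\zeta'$), so $\hat\chi$ also belongs to the set whose rank defines~$r_1$.

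Finally, at any $t$ with $\chi(t)\neq0$ the vectors $\chi(t)$ and $\hat\chi(t)$ are nonzero and orthogonal, hence span~$\mathbb R^2$, so $r_1=2$, contradicting $r_1=1$. I do not expect a genuine obstacle: the whole proof is a short application of the structure relations, and the only point needing care is the bookkeeping confirming that all non-$P$ contributions to $[Q_0,R]$ are absorbed into the $M$- and $I$-parts, so that the $J$-rotated direction $\hat\chi$ appears cleanly and, being orthogonal to~$\chi$, strictly raises the rank.
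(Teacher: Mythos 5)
Your proof is correct and follows essentially the same route as the paper's: both take the commutator of the $J$-containing element with a $P(\chi)$-element of the subalgebra, observe that all non-$P$ contributions land in $\langle\sigma M\rangle$ (resp.\ $\langle\sigma M,\zeta I\rangle$), and conclude that the rotated tuple $\hat\chi$ forces $r_1=2$. Your explicit remark that $\chi(t)$ and $\hat\chi(t)$ are orthogonal and hence span $\mathbb R^2$ wherever $\chi(t)\ne0$ is a small but welcome justification that the paper leaves implicit.
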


\begin{proof} 
Suppose that $r_1=1$ and $k_2=1$.
Then the algebra~$\mathfrak g_V$ contains vector fields
$Q^1=P(\chi^{11},\chi^{12})+Q^{10}$ with $(\chi^{11},\chi^{12})\ne(0,0)$
and $Q^0=J+P(\chi^{01},\chi^{02})+Q^{00}$. 
The condition $Q^2:=[Q^0,Q^1]=P(\chi^{12},-\chi^{11})+Q^{20}\in\mathfrak g_V$ implies $r_1=2$, contradicting the condition $r_1=1$.
Here $Q^{j0}\in\langle\sigma M\rangle$ for the classes~$\mathscr V^f$ and $\mathscr P_0^\delta$
and $Q^{j0}\in\langle\sigma M,\,\zeta I\rangle$ for the class~$\mathscr P_\lambda^\delta$, $j=0,1,2$.
\end{proof}

Below we separately solve the group classification problems
for each of the classes~$\mathscr V^f$, $\mathscr P_0^\delta$ and~$\mathscr P_\lambda^\delta$, $\lambda\in\mathbb R_{\ne0}$.
For this purpose, we classify appropriate subalgebras of the corresponding algebra~$\mathfrak g_\spanindex$,
taking into account the above constraints on the introduced invariant integers
and on the structure of appropriate subalgebras in general
as well as the classification conditions
\eqref{eq:NSchEPMNVfClassifyingCondition},
\eqref{eq:NSchEPLogMNP0deltaClassifyingCondition} and
\eqref{eq:NSchEPMNPlambdadeltaClassifyingCondition}
for the classes~$\mathscr V^f$, $\mathscr P_0^\delta$ and~$\mathscr P_\lambda^\delta$, respectively.
Simultaneously or successively with this, we integrate the equations on~$V$
implied by the corresponding classification condition for each appropriate subalgebra to be listed.
In the course of solving the classification problems,
we find more constraints for the invariant integers,
whose derivation, in contrast to the presented constraints, 
is intricate for separating it from the process of group classification:
\begin{gather*}
k_1\ne3; \quad
r_1=1\ \ \mbox{if and only if}\ \ k_1\in\{1,2\};\quad
r_1=2\ \ \mbox{if and only if}\ \ k_1=4;\\[.5ex]
\mbox{for the class $\mathscr P_\lambda^\delta$:}\quad
k_3\in\{0,1,2\}\ \ \mbox{if}\ \ \lambda\ne2, \quad
k_3\in\{0,1,3\}\ \ \mbox{if}\ \ \lambda=2.
\end{gather*}

\subsection{General case of modular nonlinearity}\label{sec:NSchEPMNVf(1+2)D}

In view of results of Section~\ref{sec:NSchEPMNVf},
the following conditions hold for any equation~$\mathcal L_V$ from the class $\mathscr V^f$ with $n=2$:
\[
\dim\mathfrak g_V\leqslant 7,\quad
r_1\in \{0,1,2\},\quad
k_0=1,\quad
k_1\in \{0,\dots,4\},\quad
k_2\in \{0,1\},\quad
k_3\in\{0,1\}.
\]
Therefore, any appropriate subalgebra of $\mathfrak g_\spanindex$ is spanned by
\begin{itemize}\itemsep=0.5ex
\item the basis vector field $M$ of the kernel $\mathfrak g^\cap$,
\item $k_1$ vector fields $P(\chi^{p1},\chi^{p2})+\sigma^pM$
with linearly independent tuples~$(\chi^{p1},\chi^{p2})$, $p=1,\dots,k_1$,
\item $k_2$ vector fields of the form $J+P(\chi^{01},\chi^{02})+\sigma^0M$,
\item $k_3$ vector fields of the form $D(1)+\kappa_q J+P(\chi^{q1},\chi^{q2})+\sigma^qM$ with $q=k_1+k_3$.
\end{itemize}

\begin{theorem}\label{thm:NSchEPMNVf(1+2)DGroupClassification}
A complete list of inequivalent Lie symmetry extensions
in the class~$\mathscr V^f$ is exhausted by the cases listed below,
where $U$ is an arbitrary complex-valued smooth function of its arguments or an arbitrary complex constant,
and the other functions and constants take real values.
\end{theorem}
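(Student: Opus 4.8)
The plan is to carry out the algebraic method of group classification. As established in Section~\ref{sec:NSchEPMNVf}, the group classification of $\mathscr V^f$ reduces to classifying the appropriate subalgebras $\mathfrak s$ of $\mathfrak g_\spanindex=\langle D(1),J,P(\chi),\sigma M\rangle$ up to $\pi_*G^\sim_{\mathscr V^f}$-equivalence, so the whole argument is organized around the equivalence-invariant integers $(r_1,k_1,k_2,k_3)$. First I would enumerate the a priori admissible tuples from the constraints already recorded ($k_0=1$, $k_2\in\{0,1\}$, $k_3\in\{0,1\}$, $k_1\in\{0,\dots,4\}$, $r_1\leqslant k_1$, $r_1\in\{0,1,2\}$, $r_1=0\iff k_1=0$, together with Lemma~\ref{lem:NSchEPMN(1+2)Dr1k1}), and then refine this list by deriving the additional relations $k_1\neq3$, $r_1=1\iff k_1\in\{1,2\}$ and $r_1=2\iff k_1=4$.

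Second, the structural description of an appropriate subalgebra (spanned by $M$, by $k_1$ fields $P(\chi^p)+\sigma^pM$ with linearly independent tuples, by at most one $J+P(\chi^0)+\sigma^0M$, and by at most one $D(1)+\kappa J+P(\chi^q)+\sigma^qM$) lets me normalize the basis using the pushforward actions of the elementary equivalence transformations listed before Lemma~\ref{lem:ClassSIntersectionOfgSAndSigmaMZetaI}. Concretely I would use the boosts and translations $\mathcal P(\mathcal X)$ (which shift the $P$-part inside $D(1)$ and $J$ and modify the $\sigma$-parts), the rotations $\mathcal J(O)$, the phase shifts $\mathcal M(\Sigma)$ and the time translations to bring the tuples $(\chi^{p1},\chi^{p2})$ to canonical constant or linear forms and to eliminate or normalize the $\sigma^p$. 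The commutator $[J,P(\chi)]=P(\hat\chi)$, a right-angle rotation of $\chi$, forces the $P$-part to be invariant under this rotation whenever $J$ is present; this is the mechanism behind the refined integer constraints and explains, in particular, why rank-one configurations are incompatible with $J$ and why $k_1=3$ cannot occur.

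Third, for each surviving canonical subalgebra I would substitute its generators into the classifying condition~\eqref{eq:NSchEPMNVfClassifyingCondition} and integrate the resulting over-determined linear system for $V$, splitting with respect to the independent variables. The $D(1)$-equation fixes the $t$-dependence of $V$ up to an additive real function of $t$, the $P(\chi)$-equations constrain its $x$-dependence, and the $J$-equation imposes rotational symmetry; together they produce the explicit potential templates of the theorem, with $U$ the residual arbitrary function. Simultaneously I would verify maximality, re-examining~\eqref{eq:NSchEPMNVfClassifyingCondition} for each listed $V$ to confirm that no further $(c,\kappa,\chi,\sigma)$ satisfy it; this step also discards spurious branches and removes residual $\pi_*G^\sim_{\mathscr V^f}$-equivalences between entries.

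The hard part will be the interlocking of these three tasks. Establishing the refined integer constraints and the canonical forms cannot be cleanly separated from integrating~\eqref{eq:NSchEPMNVfClassifyingCondition}, because whether a normalized subalgebra is \emph{appropriate} (Definition~\ref{def:AppropriateSubalg}) depends on the existence of a potential realizing it as $\mathfrak g_V$ exactly, while conversely the admissible potential templates are dictated by the subalgebra being imposed. Managing this feedback loop — carrying the normalization of generators, the solvability of the $V$-system, the maximality check and the elimination of equivalent branches in parallel across all values of $(r_1,k_1,k_2,k_3)$ — is the principal obstacle, rather than any single computation.
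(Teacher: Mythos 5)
Your proposal follows essentially the same route as the paper's own proof: organizing the case split by the invariant integers $(r_1,k_2,k_3)$, normalizing basis generators via the pushforwards $\mathcal P_*(\mathcal X)$, $\mathcal J_*(O)$, $\mathcal M_*(\Sigma)$ and time translations, integrating the classifying condition~\eqref{eq:NSchEPMNVf(1+2)DClassifyingCondition} for each canonical subalgebra, and checking maximality — including the key observation (Lemma~\ref{lem:NSchEPMN(1+2)Dr1k1}) that $[J,P(\chi)]$ being a right-angle rotation of $\chi$ rules out $r_1=1$ together with $k_2=1$. Your closing remark that the refined integer constraints cannot be derived separately from the integration of the classifying condition is also exactly the paper's stated position, so the plan is a faithful outline of the actual argument.
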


\newcommand{\cc}{\item\label{case:Vf\theenumi}}
\begin{enumerate}
\setcounter{enumi}{-1}\itemsep=0.5ex

\cc
$V=V(t,x)$:\quad $\mathfrak g_V=\mathfrak g^\cap=\langle M\rangle$.

\cc
$V=U(x_1,x_2)$:\quad $\mathfrak g_V=\langle M,\, D(1)\rangle$.

\cc
$V=U(\omega_1,\omega_2)$, \ $\kappa\ne0$:\quad $\mathfrak g_V=\langle M,\, D(1)+\kappa J\rangle$.

\cc
$V=U(t,|x|)+\sigma_t(t)\phi $:\quad $\mathfrak g_V=\langle M,\ J+\sigma(t) M\rangle$.

\cc
$V=U(|x|)+\mu\phi$:\quad $\mathfrak g_V=\langle M,\, J+\mu tM,\, D(1)\rangle$.

\cc
$V=U(t,x_2)+\frac14h^{11}(t)x^2_1$:\quad
$\mathfrak g_V=\langle M,\,P(\theta^1,0),\,P(\theta^2,0)\rangle$,\\
where $\theta^1$ and $\theta^2$ are linearly independent solutions of the equation $\theta_{tt}=h^{11}\theta$.

\cc
$V=U(x_2)+\frac14\alpha x^2_1$:\quad
$\mathfrak g_V=\langle M,\,P(\theta^1,0),\, P(\theta^2,0),\,D(1)\rangle$,\\
where $\theta^1$ and $\theta^2$ are linearly independent solutions of the equation $\theta_{tt}=\alpha\theta$.

\cc
$V=U(t,\varpi)+\dfrac{\chi^1_{tt}}{4\chi^1}x^2_1+\dfrac{\chi^2_{tt}}{4\chi^2}x^2_2$, \
$\chi^1\chi^2_t-\chi^2\chi^1_t\ne 0$, \ $\varpi=\chi^1x_2-\chi^2x_1$:\quad
$\mathfrak g_V=\langle M,\,P(\chi^1,\chi^2)\rangle$.

\cc
$V=U(\omega_2)+\frac14(\alpha^2-\kappa^2)\omega^2_1+\alpha\kappa\omega_1\omega_2$, \ $\alpha\kappa\ne0$:\\
$\mathfrak g_V=\langle M,\, P(e^{\alpha t}\cos\kappa t,e^{\alpha t}\sin\kappa t),\, D(1)+\kappa J\rangle$.

\cc
$V=U(\omega_2)-\frac14\kappa^2\omega^2_1+\mu\omega_1$, \ $\kappa\ne0$, \ $\mu\geqslant0$:\quad
$\mathfrak g_V=\langle M,\,P(\cos\kappa t,\sin\kappa t)+\mu tM,\,D(1)+\kappa J\rangle$.

\cc
$V=\frac14h^{ab}(t)x_ax_b +ih^{00}(t)$, \ $h^{12}=h^{21}$:\quad
$\mathfrak g_V=\langle M,\, P(\chi^{p1}, \chi^{p2}),\, p=1,\dots,4\rangle,$\\
where $\{(\chi^{p1}(t),\chi^{p2}(t))\}$ is a fundamental set of solutions
of the system $\chi^a_{tt}=h^{ab}\chi^b$.

\cc
$V=\frac14\alpha x_1^2+\frac14\beta x^2_2+i\nu$, \ $\beta\ne\alpha\ne0$:\quad
$\mathfrak g_V=\langle M,\,P(\theta^1,0),\,P(\theta^2,0),\,P(0,\theta^3),\,P(0,\theta^4),\,D(1)\rangle$,\\
where $\{\theta^1(t),\theta^2(t)\}$ and $\{\theta^3(t),\theta^4(t)\}$
are fundamental sets of solutions of the equations $\theta_{tt}=\alpha\theta$ and $\theta_{tt}=\beta\theta$, respectively.

\cc
$V=\frac14\alpha\omega_1^2+\frac14\beta\omega^2_2+i\nu$, \ $\beta\ne\alpha\ne0$, \ $\kappa\ne0$:\\
$\mathfrak g_V=\langle M,\,
P(\theta^{p1}\cos\kappa t-\theta^{p2}\sin \kappa t,\, \theta^{p1}\sin\kappa t+\theta^{p2}\cos\kappa t),\,
D(1)+\kappa J,\, p=1,\dots,4 \rangle$,\\
where $(\theta^{p1}(t),\theta^{p2}(t))$ are linearly independent solutions of the system\\
$\theta^1_{tt}-2\kappa\theta^2_t=(\kappa^2+\alpha)\theta^1$,
$\theta^2_{tt}+2\kappa\theta^1_t=(\kappa^2+\beta )\theta^2$.

\cc
$V=\frac 14h^{11}(t)|x|^2+i h^{00}(t)$:\quad
$\mathfrak g_V=\langle M,\, P(\theta^1,0),\, P(\theta^2,0),\, P(0,\theta^1),\, P(0,\theta^2),\, J\rangle,$\\
where $\{\theta^1(t),\theta^2(t)\}$ is a fundamental set of solutions of the equation $\theta_{tt}=h^{11}\theta$.

\cc
$V=\frac14\alpha|x|^2+i\nu$:\quad
$\mathfrak g_V=\langle M,\, P(\theta^1,0),\, P(\theta^2,0),\, P(0,\theta^1),\, P(0,\theta^2),\, J,\, D(1)\rangle$,\\
where $\{\theta^1(t),\theta^2(t)\}$ is a fundamental set of solutions of the equation $\theta_{tt}=\alpha\theta$.

\end{enumerate}

\begin{remark}\label{rem:NSchEPMNVf(1+2)DMaxConditions}
Lie invariance algebras listed in Theorem~\ref{thm:NSchEPMNVf(1+2)DGroupClassification}
are indeed maximal for the corresponding potentials
if these potentials are $G^\sim$-inequivalent to listed potentials with larger Lie invariance algebras.
Here we discuss the conditions of maximality of Lie invariance algebras for a few classification cases.
Thus, in Case~\ref{case:Vf3} the maximality condition is
\[
(\sigma_{tt},\sigma_tU_{t\,|x|},\sigma_t\mathop{\rm Im}U_{tt})\ne(0,0,0)
\quad\mbox{or, \ if \ $\sigma_t=0$,}\quad
(U_{|x|\,|x|}-2|x|U_{|x|},\mathop{\rm Im}U_{|x|\,|x|})\ne(0,0),
\]
which excludes the values of~$V$ that are $G^\sim$-equivalent to those from Cases~\ref{case:Vf4} and~\ref{case:Vf13}.
Analogously, the condition ($\mu\ne0$ or $U_{|x|\,|x|}\ne2|x|U_{|x|}$ or $\mathop{\rm Im}U_{|x|\,|x|}\ne0$)
singles out the potentials of Case~\ref{case:Vf4} that are $G^\sim$-inequivalent to those from Case~\ref{case:Vf14}.
The condition associated with Case~\ref{case:Vf5} includes the inequality
$(U_t,h^{11}_t)\ne(0,0)$ in order to exclude potentials $G^\sim$-equivalent to those from Case~\ref{case:Vf6}.
To avoid the equivalence of potentials of Case~\ref{case:Vf13} to those from Case~\ref{case:Vf14}
we require that $h_t(t)\ne0$ or $h^0_t(t)\ne0$ in~Case~\ref{case:Vf13}.
Analogous inequalities should be satisfied by each tuple of parameter functions
appearing in listed potentials and depending only on the variable~$t$.
Similarly, potentials in Cases~\ref{case:Vf5}--\ref{case:Vf9} are $G^\sim$-inequivalent to ones in Cases~\ref{case:Vf10}--\ref{case:Vf14}
if and only if $U_{\varpi\varpi\varpi}\ne0$ or $\mathop{\rm Im}U_\varpi\ne0$,
where $\varpi:=x_2$ in Cases~\ref{case:Vf5} and~\ref{case:Vf6} and $\varpi:=\omega_2$ in Cases~\ref{case:Vf8} and~\ref{case:Vf9}.
A potential of the form given in Case~\ref{case:Vf10} is $G^\sim$-inequivalent to a potential from Case~\ref{case:Vf13}
if and only if the matrix $(h^{ab})$ is a multiple of the identity matrix, $h^{12}=h^{21}=0$ and $h^{11}=h^{22}$,
i.e., the former potential itself belongs to Case~\ref{case:Vf13}.
%Similar remarks are relevant to Theorems~\ref{thm:NSchEPLogMN(1+2)D} and~\ref{thm:NSchEPPowerMN(1+2)D} below.
\end{remark}

\begin{remark}
In Theorem~\ref{thm:NSchEPMNVf(1+2)DGroupClassification} we neglect
possible gauges of constant parameters in~$V$ by discrete and scaling equivalence transformations.
Thus, alternating the sign of~$x_2$, we can set
$\mu\geqslant0$ in Case~\ref{case:Vf4}, and
$\kappa>0$ in Cases~\ref{case:Vf2}, \ref{case:Vf8}, \ref{case:Vf9}, and~\ref{case:Vf12}.
If $\mathop{\rm Im}f=0$, then we can additionally set $\nu\geqslant0$
in Cases~\ref{case:Vf11}, \ref{case:Vf12} and~\ref{case:Vf14} using the Wigner time reflection.
Deriving the group-classification list for the class~$\mathcal V'$ from that for the class~$\mathcal V^f$ via varying~$f$,
we need to regularly take into account the Wigner time reflection and the scale equivalence transformations,
which correspond to linear (in~$t$) values of the parameter function~$T$.
This leads to gauges
$\mu\in\{0,1\}$ in Case~\ref{case:Vf4},
$\alpha\in\{-1,0,1\}$ in Case~\ref{case:Vf6},
$\kappa=1$ in Cases~\ref{case:Vf2}, \ref{case:Vf8}, \ref{case:Vf9}, and~\ref{case:Vf12},
$\alpha=\pm1$ and $\nu\geqslant0$ in Case~\ref{case:Vf11},
$\alpha\in\{-1,0,1\}$, $\nu\geqslant0$ and, if $\alpha=0$, $\nu\in\{0,1\}$ in Case~\ref{case:Vf14}.
\end{remark}

\begin{proof}
Following the discussion of Section~\ref{sec:NSchEPMNVf},
we single out different classification cases using the invariant integers~$r_1$, $k_2$ and $k_3$.
For the class $\mathscr V^f$, the general form of basis vector fields of $\mathfrak g_V$ from the complement of~$\mathfrak g^\cap$
is
\[
Q^s=D(c_s)+\kappa_s J+P(\chi^{s1},\chi^{s2})+\sigma^sM,
\]
where the index~$s$ runs from~1 to $\dim\mathfrak g_V-1$,
$c_s$ and $\kappa_s$ are real constants,
and $\chi^{s1}$, $\chi^{s2}$ and~$\sigma^s$ are real-valued functions of~$t$,
and the parameter tuples $(c_s,\kappa_s,\chi^{s1},\chi^{s2})$ are linearly independent.
Substituting the tuples $(c_s,\kappa_s,\chi^{s1},\chi^{s2},\sigma^s)$
into the classifying condition~\eqref{eq:NSchEPMNVfClassifyingCondition},
we derive the system of $\dim\mathfrak g_V-1$ equations with respect to~$V$,
\begin{gather}\label{eq:NSchEPMNVf(1+2)DClassifyingCondition}
c_sV_t+(-\kappa_sx_2+\chi^{s1})V_1+(\kappa_sx_1+\chi^{s2})V_2=\frac12\chi^{sa}_{tt}x_a+\sigma^s_t.
\end{gather}

\par\noindent
$\boldsymbol{r_1=k_2=0.}$
The value $k_3=0$ corresponds to the general Case~\ref{case:Vf0} with no Lie-symmetry extension, $\mathfrak g_V=\mathfrak g^\cap$.
If $k_3=1$, then an extension of the algebra $\mathfrak g_V$
is given by the vector field $Q^1=D(1)+\kappa_1 J+P(\chi^{11},\chi^{12})+\sigma^1 M$.
Up to $\pi_*G^\sim$-equivalence we can set $(\chi^{11},\chi^{12})=(0,0)$, $\sigma^1=0$
and reduce $Q^1$ to $D(1)+\kappa_1J$, which results in Cases~\ref{case:Vf1} and~\ref{case:Vf2} depending on
whether or not the parameter~$\kappa$ vanishes.

\medskip\par\noindent
$\boldsymbol{r_1=0,\ k_2=1.}$
The algebra $\mathfrak g_V$ contains, apart from the kernel, a vector field $Q^0$ with $\kappa_0=1$.
Using $\mathcal P_*(\mathcal X)$ with an appropriate~$\mathcal X$,
we can set $\chi^{0a}=0$, which gives $Q^0=J+\sigma^0 M$.
For constant $\sigma^0$, $Q^0$ is reduced to~$J$ by combining with~$M$.
Otherwise, no further reduction is possible.
If $k_3=0$, then this extension is maximal.
Integrating the equation~\eqref{eq:NSchEPMNVf(1+2)DClassifyingCondition} with $s=0$
yields the potential presented in Case~\ref{case:Vf3}.

For $k_3=1$, an additional extension is provided  by a vector field $Q^1$ with $c_1=1$.
Combining $Q^1$ with~$Q^0$ allows us to set $\kappa_1=0$.
From the condition $[Q^0,Q^1]\in\mathfrak g_V$ we derive that $ \chi^{1a}=0$.
Acting by $\mathcal M_*(\Sigma)$ with an appropriate~$\Sigma$ on $\mathfrak g_V$,
we also set $\sigma^1=0$.
Finally, recalling the condition $[Q^0,Q^1]\in\mathfrak g_V$ gives $\sigma^0=\mu t$ with $\mu=\const$.
Solving the system~\eqref{eq:NSchEPMNVf(1+2)DClassifyingCondition}, where $s=0,1$,
leads to Case~\ref{case:Vf4}.

\medskip\par\noindent
$\boldsymbol{r_1=1.}$
The algebra~$\mathfrak g_V$ contains a vector field
$Q^1=P(\chi^{11},\chi^{12})+\sigma^1 M$ with $(\chi^{11},\chi^{12})\ne (0,0)$,
and $k_2=0$ in view of Lemma~\ref{lem:NSchEPMN(1+2)Dr1k1}.
The further consideration splits into two cases depending on
whether or not the tuple $(\chi^{11},\chi^{12})$ is proportional to a constant tuple.

\medskip\par\noindent
{\bf 1.} Suppose that $\chi^{11}\chi^{12}_t-\chi^{11}_t\chi^{12}=0$.
Modulo $\pi_*G^\sim$-equivalence, we can set $\sigma^1=0$, $\chi^{12}=0$
and thus reduce $Q^1$ to $P(\theta^1,0)$.
Integrating the equation~\eqref{eq:NSchEPMNVf(1+2)DClassifyingCondition} with $s=1$,
we construct the potential $V=U(t,x_2)+\frac14h^{11}(t)x_1^2$, where $h^{11}:=\theta^1_{tt}/\theta^1$.
The classifying condition~\eqref{eq:NSchEPMNVfClassifyingCondition} implies
that for this value of~$V$, the equation~$\mathcal L_V$ admits one more
Lie-symmetry vector field of the similar form, $Q^2=P(\theta^2,0)$,
where $\theta^2$ is a solution of the equation $\theta_{tt}=h^{11}\theta$
that is not proportional to~$\theta^1$.
If no further extensions are possible, then we have~Case~\ref{case:Vf5}.

Otherwise, an additional extension is provided by a vector field $Q^3$ with $\tau^3=1$.
The conditions $r_1=1$ and $[Q^1,Q^3]\in\mathfrak g_V$ require $\kappa_3=0$.
Successively acting on~$\mathfrak g_V$ by~$\mathcal P_*(0,\mathcal X^2)$ and~$\mathcal M_*(\Sigma)$
with appropriately chosen values of the parameter functions~$\mathcal X^2$ and~$\Sigma$,
we can set $\chi^{32}=0$ and $\sigma^3=0$.
Further, the condition $[Q^1,Q^3]=P(\theta^1_t,0)+\frac12(\chi^{31}\theta^1_t-\chi^{31}_t\theta^1)M\in\langle Q^1,Q^2,M\rangle$
implies that $\chi^{31}\theta^1_t-\chi^{31}_t\theta^1=\const$, i.e.,
$\chi^{31}\theta^1_{tt}-\chi^{31}_{tt}\theta^1=0$ and thus $\chi^{31}_{tt}=h^{11}\chi^{31}$.
This means that $\chi^{31}\in\langle\theta^1,\theta^2\rangle$.
Therefore this parameter function can be set to zero by linear combining~$Q^3$ with~$Q^1$ and~$Q^2$,
which reduces $Q^3$ to $D(1)$ and gives~Case~\ref{case:Vf6}, where $V_t=0$, i.e., $h^{11}_t=0$ and $U_t=0$.

\medskip\par\noindent
{\bf 2.} Let $\chi^{11}\chi^{12}_t-\chi^{11}_t\chi^{12}\ne 0$.

If $k_3=0$, then the vector field $Q^1$ reduces to the form $Q^1=P(\chi^{11},\chi^{12})$ up to $\pi_*G^\sim$-equivalence.
The equation~\eqref{eq:NSchEPMNVf(1+2)DClassifyingCondition} with $s=1$ integrates
to the expression for~$V$ presented in~Case~\ref{case:Vf7}.

For $k_3=1$, there is an additional symmetry extension provided by a vector field~$Q^2$ with $c_2=1$.
We reduce~$Q^2$ to $D(1)+\kappa_2J$.
Since the commutator $[Q^1,Q^2]$ belongs to $\mathfrak g_V$, we have the equations
\begin{gather}\label{eq:kappa1}
\chi^{11}_t+\kappa_2\chi^{12}=\alpha\chi^{11},\quad
\chi^{12}_t-\kappa_2\chi^{11}=\alpha\chi^{12},
\quad \sigma^1_t=\alpha\sigma^1+\beta,
\end{gather}
where $\alpha$ and $\beta$ are real constants.
In view of~\eqref{eq:kappa1}, for $\kappa_2=0$ the functions~$\chi^{11}$ and~$\chi^{12}$ are necessarily proportional.
Consequently, $\kappa_2=:\kappa\ne0$.

For $\alpha\ne 0$, modulo shifts of~$t$ and combining with~$M$,
we have $Q^1=P(e^{\alpha t}\cos\kappa t,e^{\alpha t}\sin\kappa t)+\beta e^{\alpha t}M$
with a real constant~$\beta$.
Push-forwarding vector fields from $\mathfrak g_V$
by $\mathcal P_*(-\beta\sin\kappa t,\beta\cos\kappa t)$, we set $\beta=0$.
The general solution of the corresponding equations~\eqref{eq:NSchEPMNVf(1+2)DClassifyingCondition} with $s=1,2$
is presented in Case~\ref{case:Vf8}.

Let $\alpha=0$.
Solving the system~\eqref{eq:kappa1}, we obtain the vector field~$Q^1$,
which takes, after linearly combining with $M$, the form $Q^1=P(\cos\kappa t,\sin\kappa t)+\mu tM$.
No further simplification of~$Q^1$ preserving~$Q^2$ is possible.
Thus, we obtain~Case~\ref{case:Vf9}.

\medskip\par\noindent
$\boldsymbol{r_1=2.}$
The algebra $\mathfrak g_V$ contains two vector fields of the form $Q^a=P(\chi^{a1},\chi^{a2})+\sigma^aM$,
where $a=1,2$ and~$\chi^{11}\chi^{22}-\chi^{12}\chi^{21}\ne 0$.
The equations~\eqref{eq:NSchEPMNVf(1+2)DClassifyingCondition} with $s=1,2$ imply that
$V_a=\frac 12h^{ab}(t)x_b+h^{0a}(t)$,
where the coefficients~$h^{ab}$, $h^{0a}$ are real-valued functions of~$t$.
Since $V_{12}=V_{21}$, the matrix $(h^{ab})$ is symmetric
and hence the potential $V$ is a quadratic polynomial in $x_1$ and $x_2$ with the coefficients being functions of $t$,
and only the coefficient of the zeroth-degree summand may have a nonzero imaginary part,
\begin{gather}\label{eq:ClassVfQuadraticPotential}
V=\frac 14h^{ab}(t)x_ax_b+ h^{0b}(t)x_b+\tilde h^{00}(t)+ih^{00}(t).
\end{gather}
The subclass~$\mathscr V^f_{\rm q}$ of the class $\mathscr V^f$ with potentials of the form~\eqref{eq:ClassVfQuadraticPotential} is normalized.
The coefficients $ h^{0b}$ and $\tilde h^{00}$ can be set equal to zero up to $\pi_*G^\sim$-equivalence,
which reduces the potential~$V$ to the form
\begin{gather}\label{eq:ClassVfReducedQuadraticPotential}
 V=\frac 14h^{ab}(t)x_ax_b+ih^{00}(t).
\end{gather}
The equations~\eqref{eq:NSchEPMNVf(1+2)DClassifyingCondition} with this~$V$
are split with respect to different powers of $(x_1,x_2)$, yielding the systems
\begin{gather}\label{eq:ClassVfSystemForH}
c_sh^{11}_t+2\kappa_s h^{12}=0,\quad
c_sh^{12}_t+ \kappa_s (h^{22}-h^{11})=0,\quad
c_sh^{22}_t-2\kappa_s h^{12}=0,\quad
c_sh^{00}_t=0, \\\label{eq:ClassVfChiSigma}
\chi^{sa}_{tt}=h^{ab}\chi^{sb},\quad \sigma^s_t=0.
\end{gather}
In view of the system~\eqref{eq:ClassVfChiSigma},
the algebra $\mathfrak g_V$ in fact contains, in addition to~$M$,
four vector fields $Q^p=P(\chi^{p1},\chi^{p2})$,
where the tuples~$(\chi^{p1},\chi^{p2})$, $p=1,\dots,4$, constitute
a fundamental set of solutions of the system $\chi^a_{tt}=h^{ab}\chi^b$.
Moreover, we can set $\chi^{s1}=\chi^{s2}=\sigma^s=0$ for $s\ne1,\dots,4$ by
linearly combining~$Q^s$ with $Q^1$, \dots, $Q^4$ and~$M$.
Note that due to~\eqref{eq:ClassVfSystemForH},
the condition $k_2=1$ is equivalent to having $h^{11}=h^{22}$ and $h^{12}=0$.

We analyze different cases depending on values of~$k_2$ and~$k_3$.

\medskip\par\noindent
{\bf 1.} $k_2=k_3=0$.
This is the general case within the subclass~$\mathscr V^f_{\rm q}$ with no further Lie symmetry extensions,
which is represented by Case~\ref{case:Vf10}.

\medskip\par\noindent
{\bf 2.} $k_2=0$, $k_3=1$.
The additional Lie symmetry extension is provided by the vector field~$Q^5=D(1)+\kappa_5 J$.

If $\kappa_5=0$, the system~\eqref{eq:ClassVfSystemForH} with $s=5$ is equivalent to
that all $h^{ab}$ and $h^{0a}$ are constants.
Up to rotations, we can reduce the matrix $(h^{ab})$ to a diagonal matrix ${\rm diag}(\alpha,\beta)$
with $\beta\ne\alpha\ne0$, obtaining Case~\ref{case:Vf11}.

Otherwise, up to translations of time, the general solution of the system~\eqref{eq:ClassVfSystemForH} with $s=5$ is
$h^{11}=\alpha\cos^2t+\beta\sin^2t$, $h^{12}=h^{21}=(\alpha-\beta)\cos t\sin t$,
$h^{22}=\alpha\sin^2t+\beta\cos^2t$, $h^{00}=\nu$,
where $\alpha$, $\beta$ and $\nu$ are arbitrary real constants with $\alpha\ne\beta$
due to the auxiliary conditions for the subclass~$\mathscr V^f_0$.
We integrate the system~\eqref{eq:ClassVfChiSigma} with the above values of~$h^{ab}$,
rearranging the potential $V$ in terms of $\omega_a$.
This provides Case~\ref{case:Vf12}.

\medskip\par\noindent
{\bf 3.} $k_2=1$, $k_3=0$.
The additional Lie-symmetry extension is provided by the vector field~$Q^0=J$,
which corresponds to Case~\ref{case:Vf13}.

\medskip\par\noindent
{\bf 4.} $k_2=1$, $k_3=1$. We obviously obtain Case~\ref{case:Vf14},
where there is one more Lie-symmetry vector field $Q^5=D(1)$ in comparison to Case~\ref{case:Vf13}.
\end{proof}

\subsection{Logarithmic modular nonlinearity}\label{sec:NSchEPLogMN(1+2)D}

We recall that the class of Schr\"odinger equations with a logarithmic modular nonlinearity with a fixed $\delta$
is denoted by~$\mathscr P_0^\delta$ and consists of equations of the form~\eqref{eq:NSchEPLogMN}.
For any equation~$\mathcal L_V$ from the class~$\mathscr P_0^\delta$ with $n=2$,
the results from Section~\ref{sec:NSchEPLogMN} imply that
\begin{gather*}
\dim\mathfrak g_V\leqslant 8,\quad
r_1\in\{0,1,2\}, \quad
k_0=2,\quad
k_1\in \{0,\dots,4\},\quad
k_2\in\{0,1\}, \quad
k_3\in\{0,1\}.
\end{gather*}
It then follows that any appropriate subalgebra of $\mathfrak g_\spanindex$ is spanned by
\begin{itemize}\itemsep=0ex
\item the basis vector fields $M$ and $I$ of the kernel $\mathfrak g^\cap$,
\item $k_1$ vector fields $P(\chi^{p1},\chi^{p2})+\sigma^pM+\zeta^pI$ with linearly
independent tuples $(\chi^{p1},\chi^{p2})$, $p=1,\dots,k_1$,
\item $k_2$ vector fields $J+P(\chi^{01},\chi^{02})+\sigma^0M+\zeta^0I$,
\item $k_3$ vector fields $D(1)+\kappa_q J+P(\chi^{q1},\chi^{q2})+\sigma^qM+\zeta^qI$ with $q=k_1+k_3$.
\end{itemize}

We will also use the following notation:
\begin{gather*}
I'=e^{-\delta_2t}(\delta_2I-\delta_1M)\quad\text{if}\quad\delta_2\ne0\quad \text{and} \quad
I'=I+\delta_1tM\quad \text{if} \quad\delta_2=0,\\
P'(\chi^1,\chi^2)=P(\chi^1,\chi^2)-\hat\zeta I-\delta_1\int\hat\zeta\,{\rm d}t M\quad \text{with}\quad
\hat\zeta=e^{-\delta_2t}\int e^{\delta_2t}h^{0b}\chi^b\,{\rm d}t,
\end{gather*}
where all involved parameters will be explained in the corresponding places,
$\delta_1:=\mathop{\rm Re}\delta$ and $\delta_2:=\mathop{\rm Im}\delta$,
and we assume that
$h^{0a}$ is the imaginary part of the coefficient of the summand of the first degree in~$x_a$
if this summand is explicitly presented in the corresponding potential~$V$,
otherwise $h^{0a}:=0$.
For example, $h^{01}=\nu^1\cos\kappa t-\nu_2\sin\kappa t$ and $h^{02}=\nu^1\sin\kappa t+\nu_2\cos\kappa t$
in Case~\ref{case:Log12} below.

\begin{theorem}\label{thm:NSchEPLogMN(1+2)D}
A complete list of inequivalent Lie symmetry extensions
in the class~$\mathscr P_0^\delta$ is exhausted by the cases listed below,
where $U$ is an arbitrary complex-valued smooth function of its arguments or an arbitrary complex constant,
and the other functions and constants take real values.
\end{theorem}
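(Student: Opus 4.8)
The plan is to mirror the proof of Theorem~\ref{thm:NSchEPMNVf(1+2)DGroupClassification}, classifying the appropriate subalgebras of $\mathfrak g_\spanindex=\langle D(1),J,P(\chi),\sigma M,\zeta I\rangle$ up to $\pi_*G^\sim$-equivalence and organizing the work by the values of the invariant integers $r_1$, $k_2$ and $k_3$, subject to the constraints $k_0=2$, $k_1\in\{0,\dots,4\}$ and $k_2,k_3\in\{0,1\}$ collected above. By Lemma~\ref{lem:NSchEPLogMNP0deltaMIA}, a basis vector field of $\mathfrak g_V$ in the complement of the kernel $\mathfrak g^\cap=\langle M,I'\rangle$ has the form
\[
Q^s=D(c_s)+\kappa_s J+P(\chi^{s1},\chi^{s2})+\sigma^s M+\zeta^s I,
\]
and substituting its parameters into the classifying condition~\eqref{eq:NSchEPLogMNP0deltaClassifyingCondition} with $n=2$ gives, for each $s$,
\[
c_sV_t+(-\kappa_sx_2+\chi^{s1})V_1+(\kappa_sx_1+\chi^{s2})V_2=\tfrac12\chi^{sa}_{tt}x_a+\sigma^s_t-i\zeta^s_t-\delta\zeta^s.
\]
I would integrate this system for $V$ case by case, determining in parallel when each resulting algebra is maximal.

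For $r_1=k_2=0$ the sole extension is a $D(1)$-type field, which I reduce modulo the kernel to $D(1)+\kappa J$ using $\mathcal P_*$, $\mathcal M_*$ and $\mathcal I_*$, thereby producing the lowest-dimensional cases. For $r_1=0$, $k_2=1$, the field $J+P+\sigma M+\zeta I$ is simplified by $\mathcal P_*(\mathcal X)$ to $J+\sigma^0 M+\zeta^0 I$, after which Lie-bracket closure and~\eqref{eq:NSchEPLogMNP0deltaClassifyingCondition} determine~$V$. When $r_1=1$, Lemma~\ref{lem:NSchEPMN(1+2)Dr1k1} forces $k_2=0$, and I split on whether the tuple $(\chi^{11},\chi^{12})$ is proportional to a constant tuple, using the commutator relation $[Q^1,Q^3]\in\mathfrak g_V$ to constrain the remaining free functions. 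For $r_1=2$, two independent $P$-type fields force $V$ to be at most quadratic in~$x$; splitting~\eqref{eq:NSchEPLogMNP0deltaClassifyingCondition} with respect to the powers of $(x_1,x_2)$ then yields the system for the Hessian coefficients $h^{ab}$ together with scalar equations for $\sigma^s$ and $\zeta^s$, and the remaining analysis runs parallel to the quadratic-potential discussion in the proof of Theorem~\ref{thm:NSchEPMNVf(1+2)DGroupClassification}.

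The essential new difficulty, absent in the treatment of~$\mathscr V^f$, is the appearance of the generator~$I$ and of the term $-i\zeta^s_t-\delta\zeta^s$ on the right-hand side of~\eqref{eq:NSchEPLogMNP0deltaClassifyingCondition}. This couples the real and imaginary parts of~$V$ and makes both the reduction of vector fields and the integration of the classifying system depend on whether $\delta_2:=\mathop{\rm Im}\delta$ vanishes. Concretely, the imaginary parts $h^{0a}$ of the first-degree coefficients of~$V$ cannot be gauged away freely but must be absorbed into the $\zeta$-components of the $P$-type generators, which is exactly the role of the operator $P'$ and the auxiliary function $\hat\zeta$ introduced before the theorem. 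The hard part will therefore be to perform these $\zeta$-dependent simplifications consistently, keeping track of the two regimes $\delta_2=0$ and $\delta_2\ne0$, of the time-dependent kernel element $I'$, and of the commutator-closure conditions, while ensuring that no two listed potentials are related by an element of~$G^\sim_{\mathscr P_0^\delta}$.
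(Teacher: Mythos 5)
Your proposal follows essentially the same route as the paper's own proof: the same organization by the invariant integers $r_1$, $k_2$, $k_3$, the same reduction of basis elements $Q^s=D(c_s)+\kappa_s J+P(\chi^{s1},\chi^{s2})+\sigma^sM+\zeta^sI$ via pushforwards and commutator closure, the same splitting of the classifying condition~\eqref{eq:NSchEPLogMNP0deltaClassifyingCondition}, and the same treatment of the genuinely new features (the $\zeta I$-terms absorbed into $P'$ via $\hat\zeta$, the kernel element $I'$, and the dichotomy $\delta_2=0$ versus $\delta_2\ne0$). The plan is sound and matches the paper's argument in both structure and the points it flags as delicate.
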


\renewcommand{\cc}{\item\label{case:Log\theenumi}}
\begin{enumerate}
\setcounter{enumi}{-1}\itemsep=0.5ex

\cc
$V=V(t,x)$:\quad $\mathfrak g_V=\mathfrak g^\cap=\langle M,\, I'\rangle$.

\cc
$V=U(x_1,x_2)$:\quad $\mathfrak g_V=\langle M,\, I',\, D(1)\rangle$.

\cc
$V=U(\omega_1,\omega_2)$:\quad $\mathfrak g_V=\langle M,\, I',\, D(1)+\kappa J,\, \kappa \ne 0\rangle$.

\cc
$V=U(t,|x|)+(\sigma_t-i\zeta_t-\delta\zeta)\phi\,$:\quad $\mathfrak g_V=\langle M,\, I',\, J+\sigma(t) M+\zeta(t)I\rangle$.

\cc
$V=U(|x|)+(\mu-\delta'\nu)\phi$,\\
$\delta_2\ne0$:\quad $\delta':=\delta_1+i\delta_2$,\quad $\mathfrak g_V=\langle M,\, I',\, J+\mu tM+\nu I,\, D(1)\rangle$,\\
$\delta_2=0$:\quad $\delta':=i$,\quad $\mathfrak g_V=\langle M,\, I',\, J+(\mu t+\frac12\nu\delta_1t^2)M+\nu tI,\, D(1)\rangle$.

\cc
$V=U(t,x_2)+\frac14 h^{11}(t)x^2_1+ih^{01}(t)x_1$:\quad
$\mathfrak g_V=\langle M,\, I',\,P'(\theta^1,0),\, P'(\theta^2,0)\rangle$,\\
where $\theta^1$ and $\theta^2$ are linearly independent solutions of the equation $\theta_{tt}=h^{11}\theta$.

\cc
$V=U(x_2)+\frac14\alpha x^2_1+i\nu x_1$:\quad
$\mathfrak g_V=\langle M,\, I',\, P'(\theta^1,0),\, P'(\theta^2,0),\, D(1)\rangle$,\\
where $\theta^1$ and $\theta^2$ are linearly independent solutions of the equation $\theta_{tt}=\alpha\theta$.

\cc
$V=U(t,\varpi)+\dfrac{\chi^{11}_{tt}}{4\chi^{11}}x_1^2+\dfrac{\chi^{12}_{tt}}{4\chi^{12}}x_2^2+ih^{01}(t)x_1, \ \varpi=\chi^{11}x_2-\chi^{12}x_1$, \ $\chi^{11}_t\chi^{12}\ne\chi^{11}\chi^{12}_t$:\quad
$\mathfrak g_V=\langle M,\, I',\, P'(\chi^{11},\chi^{12})\rangle$.
%where $h^{11}$, $h^{12}$ and $h^{01}$ are real-valued smooth functions of~$t$,

\cc
$V=U(\omega_2)+\frac14(\alpha^2-\kappa^2)\omega^2_1+\alpha\kappa\omega_1\omega_2+i\nu\omega_1$, \ $\alpha\kappa\ne0$:\\
$\mathfrak g_V=\langle M,\, I',\, P'(e^{\alpha t}\cos\kappa t,e^{\alpha t}\sin\kappa t),\, D(1)+\kappa J\rangle$.

\cc
$V=U(\omega_2)-\frac14\kappa^2\omega^2_1+(\mu+i\nu)\omega_1$, \ $\kappa\ne0$:\quad
$\mathfrak g_V=\langle M,\, I',\, P'(\cos\kappa t,\sin\kappa t)+\mu tM,\, D(1)+\kappa J\rangle$.

\cc
$V=\frac14h^{ab}(t)x_ax_b +ih^{0b}(t)x_b$, \ $h^{12}=h^{21}$:\quad
$\mathfrak g_V=\langle M,\,I',\,P'(\chi^{p1},\chi^{p2}),\,p=1,\dots,4\rangle$,\\
where $\{(\chi^{p1}(t),\chi^{p2}(t))\}$ is a fundamental set of solutions of the system $\chi^a_{tt}=h^{ab}\chi^b$.

\cc
$V=\frac14\alpha x_1^2+\frac14\beta x^2_2+i\nu_ax_a$, \ $\beta\ne\alpha\ne0$:\\
$\mathfrak g_V=\langle M,\, I',\,P'(\theta^1,0),\,P'(\theta^2,0),\,P'(0,\theta^3),\,P'(0,\theta^4),\,D(1)\rangle$,\\
where $\{\theta^1(t),\theta^2(t)\}$ and $\{\theta^3(t),\theta^4(t)\}$
are fundamental sets of solutions of the equations $\theta_{tt}=\alpha\theta$ and $\theta_{tt}=\beta\theta$, respectively.

\cc
$V=\frac14\alpha\omega_1^2+\frac14\beta\omega_2^2+i\nu_a\omega_a$, \ $\beta\ne\alpha\ne0$, \ $\kappa\ne0$:\\
$\mathfrak g_V=\langle M,\, I',\, P'(\theta^{p1}\cos\kappa t-\theta^{p2}\sin\kappa t,\theta^{p1}\sin\kappa t+\theta^{p2}\cos\kappa t),\,
D(1)+\kappa J,\, p=1,\dots,4\rangle$,\\
where $(\theta^{p1}(t),\theta^{p2}(t))$ are linearly independent solutions of the system \\
$\theta^1_{tt}-2\kappa\theta^2_t=(\kappa^2+\alpha)\theta^1$,
$\theta^2_{tt}+2\kappa\theta^1_t=(\kappa^2+\beta )\theta^2$.

\cc
$V=\frac 14h(t)|x|^2$:\quad
$\mathfrak g_V=\langle M,\, I',\, P(\chi^{p1}, \chi^{p2}),\,p=1,\dots,4,\, J\rangle,$\\
where $\{(\chi^{p1}(t),\chi^{p2}(t))\}$ is a fundamental set of solutions of the system $\chi_{tt}=h\chi$.

\cc
$V=\frac14\alpha|x|^2$:\quad
$\mathfrak g_V=\langle M,\, I',\, P(\chi^1,0),\, P(\chi^2,0),\, P(0,\chi^1),\, P(0,\chi^2),\, J, \, D(1)\rangle$,\\
where $(\chi^1(t),\chi^2(t))$ is a fundamental set of solutions of the equation $\chi_{tt}=\alpha\chi$.

\end{enumerate}

\begin{remark}\label{rem:NSchEPLogMN(1+2)DMaxConditions}
The conditions of maximality of Lie invariance algebras presented in Theorem~\ref{thm:NSchEPLogMN(1+2)D}
are similar to those discussed in Remark~\ref {rem:NSchEPMNVf(1+2)DMaxConditions}
for algebras listed in Theorem~\ref{thm:NSchEPMNVf(1+2)DGroupClassification}.
The required modifications are obvious.
For example, in Case~\ref{case:Log3} the maximality condition is
$(\sigma_{tt}-i\zeta_{tt}-\delta\zeta_t,(\sigma_t-i\zeta_t-\delta\zeta)U_{t\,|x|})\ne(0,0)$
or, if $\sigma_t-i\zeta_t-\delta\zeta=0$, it is $(U_{|x|\,|x|}-2|x|U_{|x|},\mathop{\rm Im}U_{|x|\,|x|})\ne(0,0)$,
which excludes the values of~$V$ that is $G^\sim$-equivalent to those from Cases~\ref{case:Log4} and~\ref{case:Log13}.
A potential of the form presented in Case~\ref{case:Log10} is $G^\sim$-equivalent to a potential from Case~\ref{case:Log13}
if and only if the matrix $(h^{ab})$ is a multiple of the identity matrix, $h^{12}=h^{21}=0$ and $h^{11}=h^{22}$, and $h^{0a}=0$,
i.e., the former potential itself belongs to Case~\ref{case:Log13}.
\end{remark}

\begin{remark}\label{rem:NSchEPLogMN(1+2)DScalings}
In Theorem~\ref{thm:NSchEPLogMN(1+2)D}, we also neglect
some possible gauges of constant parameters in~$V$.
Thus, alternating the signs of~$x_2$ and/or~$x_1$, we can set
$\mu\geqslant0$ and, if $\mu=0$, we can make $\nu\geqslant0$ in Cases~\ref{case:Log4} and~\ref{case:Log9} and
$\kappa>0$ in Cases~\ref{case:Log2}, \ref{case:Log8}, \ref{case:Log9} and~\ref{case:Log12}.
If $\mathop{\rm Im}\delta=0$, then using the Wigner time reflection
we can additionally set $\nu\geqslant0$ in Cases~\ref{case:Log4}, \ref{case:Log6}, \ref{case:Log8} and~\ref{case:Log9}
and make one of nonzero $\nu_a$ positive in Cases~\ref{case:Log11}, \ref{case:Log12} and~\ref{case:Log14}.
In contrast to the class~$\mathcal V'$, in the course of group classification of the entire class~$\mathscr P_0$,
we have no additional possibilities for gauging constant parameters in~$V$.
\end{remark}

\begin{proof}
Similar to the proof of Theorem~\ref{thm:NSchEPMNVf(1+2)DGroupClassification},
we follow the discussion of Section~\ref{sec:NSchEPLogMN}
and single out different classification cases using the invariant integers~$r_1$, $k_2$ and $k_3$.
At the same time, this proof essentially differs in some points
from the proof of Theorem~\ref{thm:NSchEPMNVf(1+2)DGroupClassification}
due to the appearance of the vector fields $\zeta I$ in $\mathfrak g_\spanindex$
and the extension of the kernel Lie invariance algebra with~$I'$.
For the class~$\mathscr P_0^\delta$, the general form of basis vector fields of $\mathfrak g_V$
from the complement of~$\mathfrak g^\cap$ is
\[
Q^s=D(c_s)+\kappa_s J+P(\chi^{s1},\chi^{s2})+\sigma^sM+\zeta^s I,
\]
where the range of~$s$ is equal to $\dim\mathfrak g_V-2$,
$c_s$ and $\kappa_s$ are real constants,
$\chi^{s1}$, $\chi^{s2}$, $\sigma^s$ and~$\zeta^s$ are real-valued functions of~$t$,
and the parameter tuples $(c_s,\kappa_s,\chi^{s1},\chi^{s2})$ are linearly independent.
Substituting the tuples $(c_s,\kappa_s,\chi^{s1},\chi^{s2},\sigma^s,\zeta^s)$
into the classifying condition~\eqref{eq:NSchEPLogMNP0deltaClassifyingCondition},
we derive the system of $\dim\mathfrak g_V-2$ equations with respect to~$V$,
\begin{gather}\label{eq:NSchEPLogMN(1+2)DClassifyingCondition}
c_sV_t+(-\kappa_sx_2+\chi^{s1})V_1+(\kappa_sx_1+\chi^{s2})V_2
=\frac12\chi^{sa}_{tt}x_a+\sigma^s_t-\delta_1\zeta^s-i(\zeta^s_t+\delta_2\zeta^s).
\end{gather}
We operate with the parameters~$c_s$, $\kappa_s$, $\chi^{s1}$, $\chi^{s2}$ and $\sigma^s$
within the class~$\mathscr P_0^\delta$
in the same way as we did within the class~$\mathscr V^f$.
This is why below we assume all the reductions for these parameters having been carried out
and present only the points where the presence of $\zeta I$ in $\mathfrak g_\spanindex$
crucially modifies the proof.
Each case of this theorem corresponds to the case of Theorem~\ref{thm:NSchEPMNVf(1+2)DGroupClassification}
with the same number.
The derivation of Cases~\ref{case:Log0}--\ref{case:Log3} is essentially the same
as in the proof of Theorem~\ref{thm:NSchEPMNVf(1+2)DGroupClassification}.

\medskip\par\noindent
$\boldsymbol{r_1=0,\ k_2=1,\ k_3=1.}$
Up to $\pi_*G^\sim$-equivalence, the Lie-symmetry extension can be assumed to be provided
by $Q^0=J+\sigma M+\zeta I$ and $Q^1=D(1)$.
The condition $[Q^1,Q^0]\in\mathfrak g_V$ reduces to
$[Q^1,Q^0]=\sigma_tM+\zeta_tI\in\langle M,I'\rangle$.
This gives expressions for the derivatives~$\sigma_t$ and~$\zeta_t$,
which depend on whether or not $\delta_2$ vanishes.
Integrating these expressions up to $\pi_*G^\sim$-equivalence,
we find $\sigma=\mu t$, $\zeta=\nu$ if $\delta_2\ne0$
and $\sigma=\mu t+\frac12\nu\delta_1t^2$, $\zeta=\nu t$ if $\delta_2=0$.
The equations~\eqref{eq:NSchEPLogMN(1+2)DClassifyingCondition} with $s=0,1$
leads to the expression for~$V$ in Case~\ref{case:Log4}.

\medskip\par\noindent
$\boldsymbol{r_1=1.}$
Then $k_2=0$, and, the algebra $\mathfrak g_V$ contains, in addition to~$M$,
at least a vector field~$Q^1=P(\chi^{11},\chi^{12})+\sigma^1M+\zeta^1I$ with $(\chi^{11},\chi^{12})\ne(0,0)$.
The further analysis depends on whether or not the parameter functions~$\chi^{11}$ and~$\chi^{12}$ are linearly dependent.

\medskip\par\noindent
{\bf 1.} $\chi^{11}_t\chi^{12}=\chi^{11}\chi^{12}_t$.
Up to $G^\sim$-equivalence, we can set $\chi^{12}=0$.
It is then convenient to re-denote $\theta^1:=\chi^{11}$.
Then integrating the equation~\eqref{eq:NSchEPLogMN(1+2)DClassifyingCondition} with $s=1$ with respect to~$V$ gives
\begin{gather}\label{multinonlicomponentswithproportionality}
V=U(t,x_2)+\frac14 h^{11}(t)x^2_1+ih^{01}(t)x_1+\tilde h^{01}(t)x_1,
\end{gather}
where $h^{11}$, $h^{01}$ and $\tilde h^{01}$ are smooth real-valued functions of $t$.

If $k_3=0$, then we make the coefficient $\tilde h^{01}$ zero using equivalence transformations.
For this potential, the classifying condition~\eqref{eq:NSchEPLogMNP0deltaClassifyingCondition}
implies that the functions involved in~$Q^1$ satisfy the system
$\theta^1_{tt}=h^{11}\theta^1$,
$\zeta^1_t+\delta_2\zeta^1=-h^{01}\chi^{11}$
and $\sigma^1_t=\delta_1\zeta^1$, i.e., $Q^1=P'(\theta^1,0)$.
Moreover, the algebra~$\mathfrak g_V$ contains one more vector field of the similar form,
$Q^2=P'(\theta^2,0)$, where $\theta^2$ is a solution of the equation $\theta_{tt}=h^{11}\theta$
that is linearly independent with~$\theta^1$.
As a result, we have~Case~\ref{case:Log5}.

For $k_3=1$, an additional Lie-symmetry extension is provided by a vector field $Q^3$ with $\tau^3=1$.
The conditions $r_1=1$ and $[Q^1,Q^3]\in\mathfrak g_V$ require $\kappa_3=0$.
Successively acting on~$\mathfrak g_V$ by~$\mathcal P_*(\mathcal X)$, $\mathcal M_*(\Sigma)$ and~$\mathcal I_*(Z)$
with appropriately chosen values of the parameter functions~$\mathcal X^a$, $\Sigma$ and~$Z$,
we can set $\chi^{3a}=0$, $\sigma^3=0$ and $\zeta^3=0$.
Then the equation~\eqref{eq:NSchEPLogMN(1+2)DClassifyingCondition} with $s=3$ takes the form $V_t=0$
and implies that the coefficients $h^{11}$, $h^{01}$ and $\tilde h^{01}$
in the representation~\eqref{multinonlicomponentswithproportionality} for~$V$ are constants.
Acting by~$\mathcal P(\tilde h^{01},0)$ on~$\mathcal L_V$, we annihilate~$\tilde h^{01}$.
Repeating the argumentation from the case $k_3=0$ leads to Case~\ref{case:Log6}.

\medskip\par\noindent
{\bf 2.}  $\chi^{11}_t\chi^{12}\ne\chi^{11}\chi^{12}_t$.
The equation~\eqref{eq:NSchEPLogMN(1+2)DClassifyingCondition} with $s=1$
integrates with respect to~$V$ to
\begin{gather}\label{eq:NSchEPLogMN(1+2)Dr11k31ExpressionForV}
V=U(t,\varpi)+\frac{\chi^{11}_{tt}}{4\chi^{11}}x_1^2+\frac{\chi^{12}_{tt}}{4\chi^{12}}x_2^2+ih^{01}(t)x_1+\tilde h^{01}(t)x_1,
\end{gather}
where $\varpi:=\chi^{11}x_2-\chi^{12}x_1$, and
$h^{01}$ and $\tilde h^{01}$ are smooth real-valued functions of $t$.

If $k_3=0$, then we make the coefficient $\tilde h^{01}$ zero using equivalence transformations.
For this potential, the classifying condition~\eqref{eq:NSchEPLogMNP0deltaClassifyingCondition}
implies that the functions involved in~$Q^1$ satisfy the system
$\zeta^1_t+\delta_2\zeta^1=-h^{01}\chi^{11}$ and $\sigma^1_t=\delta_1\zeta^1$,
i.e., $Q^1=P'(\chi^{11},\chi^{12})$, which corresponds to Case~\ref{case:Log7}.

If $k_3=1$, then modulo $\pi_*G^\sim$-equivalence we reduce a vector field $Q^2$ with $\tau^2=1$,
which provides an additional Lie-symmetry extension, to the form $Q^2=D(1)+\kappa_2 J$.
In view of the condition $[Q^2,Q^1]\in\mathfrak g_V$ and modulo shifts of~$t$ and combining with~$M$,
we have $(\chi^{11},\chi^{12})=(e^{\alpha t}\cos\kappa_2t,e^{\alpha t}\sin\kappa_2t)$.
The equations~\eqref{eq:NSchEPLogMN(1+2)DClassifyingCondition} with $s=1,2$
simultaneously integrate with respect to~$V$ to
\[
V=U(\omega_2)+\frac14(\alpha^2-\kappa^2_2)\omega^2_1+\alpha\kappa_2\omega_1\omega_2+(\mu+i\nu)\omega_1.
\]
Here $\alpha$, $\mu$ and~$\nu$ are real constants such that
$\zeta^1_t+\delta_2\zeta^1=-\nu$ and $\sigma^1_t=\delta_1\zeta^1+\mu$.
If $\alpha\ne 0$, then acting by $\mathcal P(-\mu\cos t,\mu\sin t)$ on $\mathcal L_V$, we set $\mu=0$.
For $\alpha=0$, annihilating~$\mu$ if it is nonzero is impossible.
After integrating the above system for~$(\sigma^1,\zeta^1)$ and linearly combining~$Q^1$ with~$M$ and~$I'$,
we obtain Cases~\ref{case:Log8} and~\ref{case:Log9} for $\alpha\ne 0$ and $\alpha=0$, respectively.

\medskip\par\noindent
$\boldsymbol{r_1=2.}$
Similarly to the case $r_1=2$ in the proof of Theorem~\ref{thm:NSchEPMNVf(1+2)DGroupClassification},
here the potential~$V$ is quadratic in~$x$,
\[
V=\frac 14h^{ab}(t)x_ax_b+\tilde h^{0b}(t)x_b+ih^{0b}(t)x_b+h^{00}(t)+i\tilde h^{00}(t),
\]
where the functions $\tilde h^{0b}$, $h^{00}$, $\tilde h^{00}$ can be set to zero up to~$G^\sim$-equivalence.
So, it suffices to study potentials of the form 
\begin{gather}\label{eq:ClassP0deltaReducedQuadraticPotential}
V=\frac 14h^{ab}(t)x_ax_b+ih^{0b}(t)x_b.
\end{gather}
Splitting the equations~\eqref{eq:NSchEPLogMN(1+2)DClassifyingCondition}
in view of~\eqref{eq:ClassP0deltaReducedQuadraticPotential},
with respect to different powers of $(x_1,x_2)$ leads to the systems
\begin{gather}\label{eq:ClassP0deltaSystemForHA}
c_sh^{11}_t+2\kappa_s h^{12}=0,\quad
c_sh^{12}_t+ \kappa_s (h^{22}-h^{11})=0,\quad
c_sh^{22}_t-2\kappa_s h^{12}=0,
\\\label{eq:ClassP0deltaSystemForHB}
c_sh^{01}_t+\kappa_s h^{02}=0,\quad
c_sh^{02}_t-\kappa_s h^{01}=0,
\\\label{eq:ClassP0deltaChiSigma}
\chi^{sa}_{tt}=h^{ab}\chi^{sb},\quad
\sigma^s_t=\delta_1\zeta^s,\quad
\zeta^s_t+\delta_2\zeta^s=-h^{0b}\chi^{sb}.
\end{gather}
In view of the system~\eqref{eq:ClassP0deltaChiSigma},
the algebra $\mathfrak g_V$ necessarily contains, in additional to~$M$ and~$I'$,
four vector fields $Q^p=P'(\chi^{p1},\chi^{p2})$,
where the tuples~$(\chi^{p1},\chi^{p2})$, $p=1,\dots,4$, constitute
a fundamental set of solutions of the system $\chi^a_{tt}=h^{ab}\chi^b$.
Moreover, we can set $\chi^{s1}=\chi^{s2}=\sigma^s=\zeta^s=0$, $s\ne1,\dots,4$, by
linearly combining~$Q^s$ with $Q^1$, \dots, $Q^4$, $M$ and~$I'$.
Further analysis, which depends on values of~$k_2$ and~$k_3$,
is analogous to the proof of Theorem~\ref{thm:NSchEPMNVf(1+2)DGroupClassification}
and results in Cases~\ref{case:Log10}--\ref{case:Log14}.
Note only that due to~\eqref{eq:ClassP0deltaSystemForHA} and~\eqref{eq:ClassP0deltaSystemForHB}, 
here the condition $k_2=1$ is equivalent to having $h^{11}=h^{22}$ and $h^{12}=h^{0b}=0$.
\end{proof}

\subsection{Power modular nonlinearity}\label{sec:NSchEPPowerMN(1+2)D}

We solve the group classification problem of the class~$\mathscr P_\lambda^\delta$ of nonlinear Schr\"odinger equations with potentials
and power nonlinearity with a fixed $\delta$, $\delta\in\mathbb C_0$ for $n=2$.
For any potential $V$ in an equation of this class we have
\begin{gather*}
\dim\mathfrak g_V\leqslant 9,\quad
r_1\in\{0,1,2\},\quad
k_0=1,\quad
k_1\in \{0,\dots,4\},\quad
k_2\in\{0,1\},\quad
k_3\in\{0,1,2,3\},
\end{gather*}
and $\lambda'=\frac {1}{\lambda}-\frac 12$.
The algebra $\mathfrak g_V$ is spanned by the following vector fields:
\begin{itemize}
\item the basis vector field $M$ of the kernel $\mathfrak g^\cap$,
\item $k_1$ vector fields $P(\chi^{p1},\chi^{p2})+\sigma^pM$ with linearly independent tuples~$(\chi^{p1},\chi^{p2})$
and~$p=1,\dots,k_1$,
\item $k_2$ vector fields of the form $J+P(\chi^{01},\chi^{02})+\sigma^0M$,
\item $k_3$ vector fields $D^\lambda(\tau^q)+\kappa_q J+P(\chi^{q1},\chi^{q2})+\sigma^qM$,
$q=k_1+1,\dots,k_1+k_3$ with linearly independent $\tau^{k_1+1}$, \dots, $\tau^{k_1+k_3}$.
\end{itemize}

\begin{theorem}\label{thm:NSchEPPowerMN(1+2)D}
A complete list of inequivalent Lie symmetry extensions
in the class~$\mathscr P_\lambda^\delta$ is exhausted by the cases listed below,
where $U$ is an arbitrary complex-valued smooth function of its arguments or an arbitrary complex constant,
and the other functions and constants take real values.
\end{theorem}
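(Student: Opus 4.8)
The plan is to follow the strategy used for Theorems~\ref{thm:NSchEPMNVf(1+2)DGroupClassification} and~\ref{thm:NSchEPLogMN(1+2)D}, namely to classify the appropriate subalgebras of~$\mathfrak g_\spanindex$ up to $\pi_*G^\sim$-equivalence, organizing the work by the invariant integers $r_1$, $k_2$ and~$k_3$. By Lemma~\ref{lem:NSchEPMNPlambdadeltaMIA}, every basis vector field of~$\mathfrak g_V$ outside $\mathfrak g^\cap=\langle M\rangle$ has the form
\[
Q^s=D^\lambda(\tau^s)+\kappa_s J+P(\chi^{s1},\chi^{s2})+\sigma^sM,\qquad D^\lambda(\tau)=D(\tau)-\lambda^{-1}\tau_t I,
\]
so substituting the parameter tuples into the classifying condition~\eqref{eq:NSchEPMNPlambdadeltaClassifyingCondition} produces, for each candidate subalgebra, a linear system for~$V$ that I integrate case by case. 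The cases with $r_1\in\{0,1\}$ and $k_3\leqslant1$ are almost verbatim copies of the corresponding cases in the proof of Theorem~\ref{thm:NSchEPMNVf(1+2)DGroupClassification}: for constant~$\tau$ one has $D^\lambda(\tau)=D(\tau)$, Lemma~\ref{lem:NSchEPMN(1+2)Dr1k1} again yields $k_2=0$ once $r_1=1$, and the same $\pi_*G^\sim$-reductions normalize the functions $\chi^{sa}$ and~$\sigma^s$.

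The genuinely new feature is that $\tau$ need not be constant, so $k_3=\dim\pi^0_*\mathfrak g_V$ may now reach~$2$ or~$3$. Here I invoke Lemma~\ref{lem:NSchEPMNPlambdadeltaPigV}: up to $\pi^0_*G^\sim_{\mathscr P_\lambda}$-equivalence the time projection $\pi^0_*\mathfrak g_V$ is one of $\{0\}$, $\langle\p_t\rangle$, $\langle\p_t,t\p_t\rangle$, $\langle\p_t,t\p_t,t^2\p_t\rangle$, so each admissible~$\tau$ is a polynomial in~$t$ of degree at most two taken from an explicit basis. Substituting $\tau\in\{1,t,t^2\}$ into~\eqref{eq:NSchEPMNPlambdadeltaClassifyingCondition} and integrating fixes the form of~$V$; the presence of $D^\lambda(1)$ forces $\mathrm{Im}\,V$ to be $t$-independent, and adjoining $D^\lambda(t)$ then kills its $x$-free part.

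For $r_1=2$ the two translational generators force $V$ to be quadratic in~$x$, and, after using $\pi_*G^\sim$ to remove the affine and real $x$-free terms, $V=\tfrac14 h^{ab}(t)x_ax_b+ih^{00}(t)$ with symmetric $(h^{ab})$. Splitting the classifying conditions with respect to powers of~$(x_1,x_2)$ then gives ODE systems for the $h^{ab}$ and for the fundamental solutions $(\chi^{p1},\chi^{p2})$, and the remaining subcases are separated by $k_2$ and~$k_3$ exactly as in the quadratic case of Theorem~\ref{thm:NSchEPMNVf(1+2)DGroupClassification}.

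The main obstacle is the critical power $\lambda=2$, where $\lambda'=1/\lambda-\tfrac12=0$ and the term $i\lambda'\tau_{tt}$ on the right-hand side of~\eqref{eq:NSchEPMNPlambdadeltaClassifyingCondition} vanishes. I expect the crux to be establishing the dichotomy $k_3\in\{0,1,2\}$ for $\lambda\neq2$ against $k_3\in\{0,1,3\}$ for $\lambda=2$. Concretely, once $\mathfrak g_V$ contains lifts of both $\p_t$ and $t\p_t$ one shows $h^{00}=0$, and then the $x$-free imaginary balance for $\tau=t^2$ reads $0=2\lambda'$, so the special-conformal generator $D^\lambda(t^2)$ is admissible if and only if $\lambda=2$. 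Hence for $\lambda\neq2$ it is obstructed and $k_3\leqslant2$, whereas for $\lambda=2$ its presence is automatic and a two-dimensional time projection always extends to the full $\mathfrak{sl}(2,\mathbb R)$, so $k_3=2$ cannot be maximal. Checking that this conformal extension closes into a subalgebra and that the associated real $x$-quadratic potential satisfies all three classifying conditions at once is the most delicate bookkeeping, and it is exactly the manifestation of the criticality of $\lambda=4/n$ (here $\lambda=2$) noted in the Introduction.
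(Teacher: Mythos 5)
Your overall strategy coincides with the paper's: the same invariant integers $r_1$, $k_2$, $k_3$, the same reduction to classifying appropriate subalgebras of~$\mathfrak g_\spanindex$ via the classifying condition~\eqref{eq:NSchEPMNPlambdadeltaClassifyingCondition}, the reuse of Lemma~\ref{lem:NSchEPMN(1+2)Dr1k1}, the reduction to $x$-quadratic potentials for $r_1=2$, and Lemma~\ref{lem:NSchEPMNPlambdadeltaPigV} together with the vanishing of $i\lambda'\tau_{tt}$ to explain why $D^\lambda(t^2)$ appears exactly at the critical power $\lambda=2$. Your account of the dichotomy $k_3\in\{0,1,2\}$ versus $k_3\in\{0,1,3\}$ matches the paper's argument in the relevant branches.

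There is, however, one concrete completeness gap. When the time projection $\pi^0_*\mathfrak g_V$ is one-dimensional inside a realization of $\langle\p_t,t\p_t,t^2\p_t\rangle\simeq{\rm sl}(2,\mathbb R)$ (this happens in the branch $r_1=1$ with $V=U(t,x_2)$, where the residual equivalence group only allows $T$ fractional linear in~$t$, and again in the branch $r_1=2$, $k_2=1$ with $V=ih^{00}(t)$), you must run over a complete set of \emph{inequivalent one-dimensional subalgebras} of ${\rm sl}(2,\mathbb R)$, namely $\langle\p_t\rangle$, $\langle t\p_t\rangle$ and $\langle(t^2+1)\p_t\rangle$. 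Your recipe of ``substituting $\tau\in\{1,t,t^2\}$'' is a basis, not a list of subalgebra representatives: $\langle t^2\p_t\rangle$ is conjugate to $\langle\p_t\rangle$ under $t\mapsto-1/t$ and so contributes nothing new, while the compact subalgebra $\langle(t^2+1)\p_t\rangle$ is conjugate to neither and is precisely what produces Cases~\ref{case:Power11} and~\ref{case:Power22} (potentials $V=(t^2+1)^{-1}(U(\varpi)+2i\lambda't)$ and $V=i(t^2+1)^{-1}(2\lambda't+\nu)$). As written, your procedure would omit these two cases and the resulting list would not be complete; the fix is to justify the restricted form of the residual equivalence group on the relevant subclass (e.g.\ the subclass~$\mathscr U$ with $V_1=0$, where $T$ must be fractional linear) and then classify subalgebras of $\langle\p_t,t\p_t,t^2\p_t\rangle$ up to that equivalence rather than substituting basis elements one at a time.
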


\renewcommand{\cc}{\item\label{case:Power\theenumi}}
\begin{enumerate}
\setcounter{enumi}{-1}\itemsep=0.5ex

\cc
$V=V(t,x)$:\quad $\mathfrak g_V=\mathfrak g^\cap=\langle M\rangle$.

\cc
$V=U(x_1,x_2)$:\quad $\mathfrak g_V=\langle M,\, D(1)\rangle$.

\cc
$V=U(\omega_1,\omega_2)$, \ $\kappa\ne0$:\quad $\mathfrak g_V=\langle M,\, D(1)+\kappa J\rangle$.

\cc
$V=|x|^{-2}U(\varpi)$,
$\varpi=\phi-2\kappa\ln|x|$, \ $\kappa\ne0$, \ $U_\varpi\ne0$:\quad
$\mathfrak g_V=\langle M ,\, D(1),\, D^\lambda(t)+\kappa J\rangle$.

\cc
$V=|x|^{-2}U(\phi)$, \ $U_\phi\ne0$,\\
$\lambda\ne2$:\quad $\mathfrak g_V=\langle M,\, D(1),\, D^\lambda(t)\rangle$,\\
$\lambda=2  $:\quad $\mathfrak g_V=\langle M,\, D(1),\, D^\lambda(t),\, D^\lambda(t^2)\rangle$.

\cc
$V=U(t,|x|)+\mu\phi$:\quad $\mathfrak g_V=\langle M,\, J+\mu t M\rangle$.

\cc
$V=U(|x|)+\mu\phi$:\quad $\mathfrak g_V=\langle M,\, J+\mu tM,\, D(1)\rangle$.

\cc
$V=|x|^{-2}U$, \ $U\ne0$,\\
$\lambda\ne2$:\quad $\mathfrak g_V=\langle M,\, J,\, D(1),\, D^\lambda(t)\rangle$,\\
$\lambda=2  $:\quad $\mathfrak g_V=\langle M,\, J,\, D(1),\, D^\lambda(t),\, D^\lambda(t^2)\rangle$.

\cc
$V=U(t,x_2)$:\quad
$\mathfrak g_V=\langle M,\, P(1,0),\, P(t,0)\rangle$.

\cc
$V=U(\varpi)$, \ $\varpi=x_2$:\quad $\mathfrak g_V=\langle M,\, P(1,0),\, P(t,0),\, D(1)\rangle$.

\cc
$V=t^{-1}U(\varpi)$, \ $\varpi=|t|^{-1/2}x_2$:\quad $\mathfrak g_V=\langle M,\, P(1,0),\, P(t,0),\, D^\lambda(t)\rangle$.

\cc
$V=(t^2+1)^{-1}\big(U(\varpi)+2i\lambda't\big)$, \ $\varpi=(t^2+1)^{-1/2}x_2$:\\
$\mathfrak g_V=\langle M,\,  P(1,0),\, P(t,0),\,  D^\lambda(t^2+1)\rangle$.

\cc
$V=x_2^{-2}U$, \ $U\ne0$,\\
$\lambda\ne2 $:\quad $\mathfrak g_V=\langle M,\, P(1,0),\, P(t,0),\, D(1),\, D^\lambda(t)\rangle$,\\
$\lambda=2   $:\quad $\mathfrak g_V=\langle M,\, P(1,0),\, P(t,0),\, D(1),\, D^\lambda(t),\, D^\lambda(t^2)\rangle$.

\cc
$V=U(t,\omega_2)+\frac14(h_{tt}-h)h^{-1}\omega^2_1+h_th^{-1}\omega_1$, \ $h=h(t)\ne0$:\quad
$\mathfrak g_V=\langle M,\, P(h\cos t,h\sin t)\rangle$.

\cc

$V=U(\omega_2)+\frac14(\alpha^2-\kappa^2)\omega^2_1+\alpha\kappa\omega_1\omega_2$, \ $\alpha\kappa\ne0$:\\
$\mathfrak g_V=\langle M,\, P(e^{\alpha t}\cos\kappa t,e^{\alpha t}\sin\kappa t),\, D(1)+\kappa J\rangle\rangle $.

\cc
$V=U(\omega_2)-\frac14\kappa^2\omega^2_1+\mu\omega_1$, \ $\kappa\ne0$:\quad
$\mathfrak g_V=\langle M,\, P(\cos\kappa t,\,\sin\kappa t)+\mu t M,\, D(1)+\kappa J\rangle$.

\cc
$V=\frac14h^{ab}(t)x_ax_b +ih^{00}(t)$, \ $h^{12}=h^{21}$:\quad
$\mathfrak g_V=\langle M,\, P(\chi^{p1}, \chi^{p2}),\, p=1,\dots,4\rangle$,\\
where $\{(\chi^{p1}(t),\, \chi^{p2}(t))\}$ is a fundamental set of solutions of the system $\chi^a_{tt}=h^{ab}\chi^b$.

\cc
$V=\frac14\alpha x_1^2+\frac14\beta x^2_2+i\nu$, \ $\beta\ne\alpha\ne0$:\quad
$\mathfrak g_V=\langle M,\,P(\theta^1,0),\,P(\theta^2,0),\,P(0,\theta^3),\,P(0,\theta^4),\,D(1)\rangle$,\\
where $\{\theta^1(t),\theta^2(t)\}$ and $\{\theta^3(t),\theta^4(t)\}$
are fundamental sets of solutions of the equations $\theta_{tt}=\alpha\theta$ and $\theta_{tt}=\beta\theta$, respectively.

\cc
$V=\frac14\alpha\omega_1^2+\frac14\beta\omega^2_2+i\nu$, \ $\beta\ne\alpha\ne0$, \ $\kappa\ne0$:\\
$\mathfrak g_V=\langle M,\,
P(\theta^{p1}\cos\kappa t-\theta^{p2}\sin \kappa t,\, \theta^{p1}\sin\kappa t+\theta^{p2}\cos\kappa t),\,
D(1)+\kappa J,\, p=1,\dots,4 \rangle$,\\
where $(\theta^{p1}(t),\theta^{p2}(t))$ are linearly independent solutions of the system\\
$\theta^1_{tt}-2\kappa\theta^2_t=(\kappa^2+\alpha)\theta^1$,
$\theta^2_{tt}+2\kappa\theta^1_t=(\kappa^2+\beta )\theta^2$.

\cc
$V=ih^{00}(t)$:\quad
$\mathfrak g_V=\langle M,\, P(1,0),\, P(t,0),\, P(0,1),\, P(0,t),\, J)\rangle$.

\cc
$V=i\nu$, \ $\nu\ne0$:\quad
$\mathfrak g_V=\langle M,\, P(1,0),\, P(t,0),\, P(0,1),\, P(0,t),\, J ,\,D(1)\rangle$.

\cc
$V=i\nu t^{-1}$, \ $\nu\ne0$:\quad
$\mathfrak g_V=\langle M,\, P(1,0),\, P(t,0),\, P(0,1),\, P(0,t),\,J,\, D^\lambda(t)\rangle$.

\cc
$V=i(t^2+1)^{-1}\left(2\lambda't+\nu\right)$, \ $(\lambda',\nu)\ne(0,0)$:\\
$\mathfrak g_V=\langle M,\, P(1,0),\, P(t,0),\, P(0,1),\, P(0,t),\, J,\, D^\lambda(t^2+1)\rangle$.

\cc
$V=0\colon$\\
$\lambda\ne2$:\quad $\mathfrak g_V=\langle M,\, P(1,0),\, P(t,0),\, P(0,1),\, P(0,t),\, J,\,
D(1),\, D^\lambda(t)\rangle$,\\
$\lambda=2$:\quad $\mathfrak g_V=\langle M,\, P(1,0),\, P(t,0),\, P(0,1),\, P(0,t),\, J,\,
D(1),\, D^\lambda(t),\, D^\lambda(t^2)\rangle.$

\end{enumerate}

\begin{remark}\label{rem:NSchEPPowerMN(1+2)DMaxConditions}
As in Theorems~\ref{thm:NSchEPMNVf(1+2)DGroupClassification} and~\ref{thm:NSchEPLogMN(1+2)D},
Lie invariance algebras listed in Theorem~\ref{thm:NSchEPPowerMN(1+2)D}
are indeed maximal  and coincide with~$\mathfrak g_V$ for the corresponding potentials
if these potentials are $G^\sim$-inequivalent to listed potentials with larger Lie invariance algebras.
For Cases~\ref{case:Power3}, \ref{case:Power4}, \ref{case:Power12} \mbox{and~\ref{case:Power20}--\ref{case:Power22}},
the maximality conditions are obvious and presented directly in the corresponding cases.
A~potential of the form given in Case~\ref{case:Power16} is $G^\sim$-equivalent to a potential from Case~\ref{case:Power19}
if and only if $h^{12}=h^{21}=0$ and $h^{11}=h^{22}$, i.e., the former potential itself belongs to Case~\ref{case:Power19}.
At the same time, the maximality conditions for a number of cases of Lie-symmetry extensions
within the class~$\mathscr P_\lambda^\delta$ are more cumbersome
than similar conditions within the classes~$\mathscr V^f$ and~$\mathscr P_0^\delta$.
This is caused by the fact that the class~$\mathscr P_\lambda^\delta$ admits equivalence transformations with nonconstant~$T_t$,
which complicates describing $G^\sim$-equivalent cases.
Moreover, there are a number of cases of Lie symmetry extensions within the class~$\mathscr P_\lambda^\delta$
that have no counterparts among Lie symmetry extensions within the classes~$\mathscr V^f$ and~$\mathscr P_0^\delta$
since they are associated with the values of~$k_3$ in $\{2,3\}$.
For example, the maximality condition for Case~\ref{case:Power5},
\[
\mu(U_{t\,|x|},\mathop{\rm Im}U_{tt})\ne(0,0)
\quad\mbox{or, \ if \ $\mu=0$,}\quad
(U_{|x|\,|x|}-2|x|U_{|x|},\mathop{\rm Im}U_{|x|\,|x|})\ne(0,0),
\]
is analogous to that from Remark~\ref{rem:NSchEPMNVf(1+2)DMaxConditions}
for Case~\ref{case:Vf3} of Theorem~\ref{thm:NSchEPMNVf(1+2)DGroupClassification},
and similarly excludes the values of~$V$ being $G^\sim$-equivalent to those from Cases~\ref{case:Power6} and~\ref{case:Power16}.
Nevertheless, the maximality condition for Case~\ref{case:Power6} is modified, in comparison with
the condition from Remark~\ref{rem:NSchEPMNVf(1+2)DMaxConditions}
for Case~\ref{case:Vf4} of Theorem~\ref{thm:NSchEPMNVf(1+2)DGroupClassification},~to
\[
\mu\ne0
\quad\mbox{or, \ if \ $\mu=0$,}\quad
(U_{|x|\,|x|}-2|x|U_{|x|},\mathop{\rm Im}U_{|x|\,|x|})
\ne(0,0)\quad\mbox{and}\quad
(|x|^2U)_{|x|}\ne0
\]
to single out the potentials from Case~\ref{case:Power4} 
that are $G^\sim$-inequivalent to those in both Cases~\ref{case:Power7} and~\ref{case:Power16}.
Inequalities to be satisfied by tuples of parameter functions
appearing in listed potentials and depending only on the variable~$t$
also become more complicated.
Thus, the maximality condition in Case~\ref{case:Power19} is that
$\big((\alpha t^2+\beta t+\gamma)h^{00}\big)_t\ne2\alpha\lambda'$
for any nonzero constant tuple $(\alpha,\beta,\gamma)$,
which excludes further Lie symmetry extensions $G^\sim$-equivalent to Cases~\ref{case:Power20}--\ref{case:Power22}.
For Case~\ref{case:Power8}, the maximality condition is
$(U_{222},\mathop{\rm Im}U_2)\ne(0,0)$ and
$\big((\alpha t^2+\beta t+\gamma)U\big)_t\ne2\alpha\lambda'$ for any nonzero constant tuple $(\alpha,\beta,\gamma)$.
In Cases~\ref{case:Power9}--\ref{case:Power11} the maximality condition is $(U_{\varpi\varpi\varpi},\mathop{\rm Im}U_\varpi)\ne(0,0)$ and $(\varpi^2U)_\varpi\ne0$,
and this excludes values of~$V$ that are $G^\sim$-equivalent to those from Cases~\ref{case:Power12} and~\ref{case:Power16}--\ref{case:Power23}.
Similarly, potentials from Cases~\ref{case:Power13}--\ref{case:Power15} are $G^\sim$-inequivalent to those from Cases~\ref{case:Power16}--\ref{case:Power23}
if and only if $(U_{\omega_2\omega_2\omega_2},\mathop{\rm Im}U_{\omega_2})\ne(0,0)$.
This condition is necessary and sufficient for the maximality of Lie symmetry extensions given in Cases~\ref{case:Power14} and~\ref{case:Power15},
but for Case~\ref{case:Power13} we need a further condition to guarantee the inequivalence with potentials from Cases~\ref{case:Power14} and~\ref{case:Power15}.
\end{remark}

\begin{remark}\label{rem:NSchEPPowerMN(1+2)DScalings}
To make the presentation in Theorem~\ref{thm:NSchEPPowerMN(1+2)D} consistent with
Theorems~\ref{thm:NSchEPMNVf(1+2)DGroupClassification} and~\ref{thm:NSchEPLogMN(1+2)D},
we avoid scalings of constant parameters and alternating their signs
as well as gauging some parameter functions.
In contrast to the classes~$\mathcal V^f$ and~$\mathscr P_0^\delta$,
the class~$\mathscr P_\lambda^\delta$ admits scaling equivalence transformations
and even more general transformations with nonconstant~$T_t$.
As a result, we can set
$\mu\in\{0,1\}$ in Cases~\ref{case:Power5} and~\ref{case:Power6},
$\mu\geqslant0$ in Case~\ref{case:Power15},
$\kappa=1$ in Cases~\ref{case:Power2}, \ref{case:Power3}, \ref{case:Power14}, \ref{case:Power15}, \ref{case:Power17} and~\ref{case:Power18},
$h^{11}=0$ (or $h^{22}=0$) in Case~\ref{case:Power16},
$\mathop{\rm Im}U\geqslant0$ if $\mathop{\rm Im}\delta=0$ in Cases~\ref{case:Power7} and~\ref{case:Power12},
$\nu=\pm1$ if $\mathop{\rm Im}\delta\ne0$ and $\nu=1$ if $\mathop{\rm Im}\delta=0$ in Case~\ref{case:Power20},
$\nu\geqslant\lambda'$ in Case~\ref{case:Power21} using the equivalence transformation $\mathcal D(-t^{-1})\mathcal I(2\lambda^{-1}\ln|t|)$,
and $\nu\geqslant0$ if $\mathop{\rm Im}\delta=0$ in Case~\ref{case:Power22}.
\end{remark}

\begin{proof}
As in the proofs of Theorems~\ref{thm:NSchEPMNVf(1+2)DGroupClassification} and~\ref{thm:NSchEPLogMN(1+2)D},
we follow the discussion from the corresponding Section~\ref{sec:NSchEPPowerMN}
and single out different classification cases using the invariant integers~$r_1$, $k_2$ and~$k_3$.
For the class~$\mathscr P_\lambda^\delta$, the general form of basis vector fields of $\mathfrak g_V$ from the complement of~$\mathfrak g^\cap$
is
\[
Q^s=D^\lambda(\tau^s)+\kappa_s J+P(\chi^{s1},\chi^{s2})+\sigma^sM,
\]
where the range of~$s$ is equal to $\dim\mathfrak g_V-1$,
$\kappa_s$ are real constants,
$\tau^s$, $\chi^{s1}$, $\chi^{s2}$ and~$\sigma^s$ are real-valued functions of~$t$,
and the parameter tuples $(\tau^s,\kappa_s,\chi^{s1},\chi^{s2})$ are linearly independent.
Substituting $(\tau^s,\kappa_s,\chi^{s1},\chi^{s2},\sigma^s)$
into the classifying condition~\eqref{eq:NSchEPMNPlambdadeltaClassifyingCondition},
we derive the system of $\dim\mathfrak g_V-1$ equations with respect to~$V$,
\begin{gather}\label{eq:NSchEPPowerMN(1+2)DClassifyingCondition}
\begin{split}
\tau^sV_t+{}&\frac12\tau^s_tx_a V_a+\kappa_s(x_1V_2-x_2V_1)+\chi^{sa}V_a+\tau^s_tV\\
&=\frac18\,\tau^s_{ttt}x_ax_a+\frac12\chi^{sa}_{tt}x_a+\sigma^s_t+i{\lambda'}\tau^s_{tt}.
\end{split}
\end{gather}

\medskip\par\noindent
$\boldsymbol{r_1=k_2=0.}$
The classification of Lie-symmetry extensions depends only on~$k_3$,
and Lemma~\ref{lem:NSchEPMNPlambdadeltaPigV} reduces it
to the classification of finite-dimensional algebras of vector fields on the real line.
The algebra $\mathfrak g_V$ contains $k_3$ vector fields~$Q^s$, $s=1,\dots,k_3$,
where $\tau^s$ are linearly independent.

The condition $k_3=0$ means that there is no Lie-symmetry extension,
$\mathfrak g_V=\mathfrak g^\cap=\langle M\rangle$, and we have Case~\ref{case:Power0}.

If $k_3=1$, then up to $\pi_*G^\sim$-equivalence we can set
$\tau^1=1$, $\chi^{11}=\chi^{12}=\sigma^1=0$ and, if $\delta_2=0$, then $\kappa_3\geqslant0$.
Depending on whether or not $\kappa$ is zero, this splits into Cases~\ref{case:Power2} and~\ref{case:Power3}.

For $k_3\geqslant2$, modulo $\pi_*G^\sim$-equivalence and changing basis of~$\mathfrak g_V$
we can assume that $(\tau^1,\tau^2)=(1,t)$ and further set $\chi^{1a}=\sigma^1=0$.
In view of the equations $\kappa_1=0$, $\chi^{2a}_t=\sigma^2_t=0$ following from
the condition $[Q^1,Q^2]\in\mathfrak g_V$,
we can annihilate $\sigma^2$ and $(\chi^{21},\chi^{22})$
by combining $Q^2$ with~$M$ and by acting with $\mathcal P_*(2\chi^{21},2\chi^{22})$
on~$\mathfrak g_V$, respectively.
Solving the equations~\eqref{eq:NSchEPPowerMN(1+2)DClassifyingCondition} with $s=1,2$,
we obtain the expression for~$V$ from Case~\ref{case:Power4}.
Then the classifying condition~\eqref{eq:NSchEPMNPlambdadeltaClassifyingCondition} with such~$V$
implies that there is no further Lie-symmetry extension for $\lambda\ne2$,
and for $\lambda=2$ the algebra $\mathfrak g_V$ contains one more vector field,
$Q^3=D^\lambda(t^2)$.

\medskip\par\noindent
$\boldsymbol{r_1=0,\ k_2=1.}$
The algebra $\mathfrak g_V$ necessarily contains the vector field $Q^0$ with $\tau^1=0$ and $\kappa_1=1$,
where $\chi^{0a}$ can be set to zero up to $\pi_*G^\sim$-equivalence.
The further consideration depends on the value of~$k_3$.

\medskip\par\noindent
$k_3=0$.
Modulo $\pi_*G^\sim$-equivalence, we can assume that $\sigma^0=\mu t$ with $\mu \in \{0,1\}$,
which leads to Case~\ref{case:Power5}.

\medskip\par\noindent
$k_3=1$.
The algebra $\mathfrak g_V$ additionally contains the vector field $Q^1$ with $\tau^1\ne 0$.
Up to $\pi_*G^\sim$-equivalence we can set $\tau^1=1$, $\sigma^1=0$.
The condition $[Q^0,Q^1]\in\mathfrak g_V$ implies $\chi^{1a}=0$, $\sigma^0_t=\mu=\const$
and thus, linearly combining~$Q^0$ with~$M$, we set $\sigma^0=\mu t$ and obtain~Case~\ref{case:Power6}.

\medskip\par\noindent
$k_3\geqslant 2$.
Modulo $\pi_*G^\sim$-equivalence and changing basis of~$\mathfrak g_V$,
we can again assume that $(\tau^1,\tau^2)=(1,t)$.
We successively set $\kappa_1=\kappa_2=0$ by linearly combining~$Q^1$ and $Q^2$ with $Q^0$
and $\sigma^1=0$ by acting with appropriate $\mathcal M_*(\Sigma)$ on~$\mathfrak g_V$.
The condition $[Q^0,Q^s]\in\mathfrak g_V$, $s=1,2$, gives $\chi^{sa}=0$, $\sigma^0_t=0$.
Therefore, we can set $\sigma^0=0$ by linearly combining~$Q^0$ with~$M$.
The simultaneous integration of the equations~\eqref{eq:NSchEPPowerMN(1+2)DClassifyingCondition} with $s=0,1,2$
provides the expression for~$V$ from Case~\ref{case:Power7}.
In view of the classifying condition~\eqref{eq:NSchEPMNPlambdadeltaClassifyingCondition} with such~$V$,
there is no further Lie-symmetry extension for $\lambda\ne2$,
and the algebra $\mathfrak g_V$ additionally contains the vector field $Q^3=D^\lambda(t^2)$ if $\lambda=2$.

\medskip\par\noindent
$\boldsymbol{r_1=1.}$
Then $k_2=0$, and the algebra $\mathfrak g_V$ contains, in addition to~$M$,
at least a vector field~$Q^1=P(\chi^{11},\chi^{12})+\sigma^1M$.
We again have two possibilities for the tuple $(\chi^{11},\chi^{12})$.

\medskip\par\noindent
{\bf 1.} $\chi^{11}_t\chi^{12}=\chi^{11}\chi^{12}_t$.
That is, the tuple $(\chi^{11},\chi^{12})$ is proportional to a constant tuple.
Up to $\pi_*G^\sim$-equivalence we can set $Q^1=P(1,0)$.
Then the equation~\eqref{eq:NSchEPPowerMN(1+2)DClassifyingCondition} with $s=1$ is $V_1=0$,
in view of which the classifying condition~\eqref{eq:NSchEPMNPlambdadeltaClassifyingCondition} implies
that the vector field $P(t,0)$ also belongs to $\mathfrak g_V$.
Thus we have $V=U(t,x_2)$, where $(U_{222},\mathop{\rm Im}U_2)\ne(0,0)$, since otherwise $r_1=2$, cf.\ the case $r_1=2$ below.
Consider the subclass~$\mathscr U$ of the class~$\mathscr P_\lambda^\delta$ of equations with potentials of this form,
which is singled out from~$\mathscr P_\lambda^\delta$ by the constraints
$V_1=0$ and $(V_{222},\mathop{\rm Im}V_2)\ne(0,0)$.
This subclass is normalized
and its equivalence group is singled out from the equivalence group~$G^\sim$ of~$\mathscr P_\lambda^\delta$
by the conditions that
$T$ is fractional linear in~$t$, $O\in\{\mathop{\rm diag}(\varepsilon_1,\varepsilon_2)\mid \varepsilon_1,\varepsilon_2=\pm1\}$,
and $\mathcal X^1$ is affine in~$T$.
We successively split the equations~\eqref{eq:NSchEPPowerMN(1+2)DClassifyingCondition} with $V=U(t,x_2)$
with respect to~$x_1$ and~$U_2$, obtaining $\tau^s_{ttt}=0$, $\kappa_s=0$ and $\chi^{s1}_{tt}=0$.
Therefore, linearly combining $Q^s$, $s\ne1,2$ with $Q^1$ and~$Q^2$,
we can annihilate $\chi^{s1}$ for $s\ne1,2$.
The classification of Lie-symmetry extensions within the subclass~$\mathscr U$ is reduced to
the classification of subalgebras of the algebra~$\langle \p_t,t\p_t,t^2\p_t\rangle\simeq{\rm sl}(2,\mathbb R)$.
A list of inequivalent subalgebras of this algebra is exhausted by
\[
\{0\},\quad \langle \p_t\rangle,\quad \langle t\p_t\rangle,\quad \langle (t^2+1)\p_t\rangle,\quad \langle \p_t,t\p_t\rangle,\quad \langle \p_t,t\p_t,t^2\p_t\rangle.
\]
%$\{0\}$, $\langle \p_t\rangle$, $\langle t\p_t\rangle$, $\langle (t^2+1)\p_t\rangle$, $\langle \p_t,t\p_t\rangle$ and $\langle \p_t,t\p_t,t^2\p_t\rangle$.
The zero subalgebra corresponds to Case~\ref{case:Power8}, where $k_3=0$, which is the general case for the subclass~$\mathscr U$.
For $k_3=1$, we should consider the listed one-dimensional subalgebras.
In other words, here the additional Lie symmetry extension  is provided by a vector field~$Q^3$
with $\tau^3\in\{1,t,t^2+1\}$.
Acting by $\mathcal P_*(0,\mathcal X^2)$ and $\mathcal M_*(\Sigma)$
with appropriate values of the parameter functions~$\mathcal X^2$ and~$\Sigma$,
we can make~$\chi^{32}$ and~$\sigma^3$ vanishing,
deriving Cases~\ref{case:Power9}--\ref{case:Power11}.
If $k_3\geqslant2$, then modulo $\pi_*G^\sim$-equivalence the Lie-symmetry extension
contains at least two vector fields~$Q^3$ and~$Q^4$ with $\tau^3=1$ and~$\tau^4=t$.
We again reduce $Q^3$ to the form $Q^3=D(1)$.
The condition $[Q^3,Q^4]\in\mathfrak g_V$ implies that $\chi^{42}_t=\sigma^4_t=0$.
This is why we can annihilate $\sigma^4$ and~$\chi^{42}$
by combining $Q^4$ with~$M$ and by acting with $\mathcal P_*(0,2\chi^{42})$
on~$\mathfrak g_V$, respectively.
Then the equations~\eqref{eq:NSchEPPowerMN(1+2)DClassifyingCondition} with $s=3,4$
jointly integrate to $V=Ux_2^{-2}$, and it is obvious
that $k_3$ is equal to either 2 or 3 if $\lambda\ne2$ or $\lambda=2$, respectively.
This gives Case~\ref{case:Power12}.

\medskip\par\noindent
{\bf 2.} $\chi^{11}_t\chi^{12}\ne \chi^{11}\chi^{12}_t$,
i.e., the tuple $(\chi^{11},\chi^{12})$ is not proportional to a constant tuple.

If $k_3=0$, then we can, up to $\pi_*G^\sim$-equivalence, make $Q^1=P(h\cos t, h\sin t)$, where
$h(t)$ is a nonvanishing smooth function of $t$, which gives Case~\ref{case:Power13}.

The condition $k_3=1$ means that the algebra~$\mathfrak g_V$ contains exactly one (up to linear combining)
vector field~$Q^2$ with $\tau^2\ne 0$.
Up to $\pi_*G^\sim$-equivalence, the parameter functions $\chi^{1a}$, $\sigma^1$ can be set to zero,
and $Q^2$ reduces to $Q^2=D(1)+\kappa_2J$.
Further consideration, which is similar to that
in the respective case of the proof of Theorem~\ref{thm:NSchEPMNVf(1+2)DGroupClassification},
leads to Cases~\ref{case:Power14} and~\ref{case:Power15}.

The case $k_2\geqslant2$ is not possible.
Indeed, otherwise the algebra~$\mathfrak g_V$ would contain, modulo $\pi_*G^\sim$-equivalence,
vector fields~$Q^2$ with $\tau^2=1$ and~$Q^3$ with $\tau^3=t$.
The condition $[Q^2,Q^3]\in\mathfrak g_V$ gives $\kappa_2=0$.
Then the condition $[Q^2,Q^1]\in\mathfrak g_V$ implies that
$(\chi^{11}_t,\chi^{12}_t)\in\langle(\chi^{11},\chi^{12})\rangle$,
which contradicts case's assumption.

\medskip\par\noindent
$\boldsymbol{r_1=2.}$
Following the corresponding case of the proof of Theorem~\ref{thm:NSchEPMNVf(1+2)DGroupClassification},
up to~$G^\sim$-equivalence we reduce the potential~$V$ to the form~\eqref{eq:ClassVfReducedQuadraticPotential}
and substitute it into the equations~\eqref{eq:NSchEPPowerMN(1+2)DClassifyingCondition}.
Then we split the obtained equations with respect to different powers of $(x_1,x_2)$
and derive the systems
\begin{gather}\label{eq:ClassPlambdadeltaSystemForH}
\begin{split}
&\tau^sh^{11}_t+2\tau^s_t h^{11}+2\kappa_s h^{12}=\frac{\tau_{ttt}}2,\quad
\tau^sh^{22}_t+2\tau^s_t h^{22}-2\kappa_s h^{12}=\frac{\tau_{ttt}}2,\\
&\tau^sh^{12}_t+2\tau^s_t h^{12}+ \kappa_s (h^{22}-h^{11})=0,\quad
\tau^sh^{00}_t+\tau^s_t h^{00}=\tau_{tt},
\end{split}
\\\label{eq:ClassPlambdadeltaChiSigma}
\chi^{sa}_{tt}=h^{ab}\chi^{sb},\quad \sigma^s_t=0.
\end{gather}
In view of the system~\eqref{eq:ClassPlambdadeltaChiSigma},
the algebra $\mathfrak g_V$ in fact contains, in addition to~$M$,
four vector fields $Q^p=P(\chi^{p1},\chi^{p2})$,
where tuples~$(\chi^{p1},\chi^{p2})$, $p=1,\dots,4$, constitute
a fundamental set of solutions of the system $\chi^a_{tt}=h^{ab}\chi^b$.
Moreover, we can set $\chi^{s1}=\chi^{s2}=\sigma^s=0$ for $s\ne1,\dots,4$ by
linearly combining~$Q^s$ with $Q^1$, \dots, $Q^4$ and~$M$.
Note that due to~\eqref{eq:ClassPlambdadeltaSystemForH},
the condition $k_2=1$ is equivalent to having $h^{11}=h^{22}$ and $h^{12}=0$.

The consideration of the cases $(r_1,k_2,k_3)=(2,0,0)$ and $(r_1,k_2,k_3)=(2,0,1)$
is similar to that in the proof of Theorem~\ref{thm:NSchEPMNVf(1+2)DGroupClassification}
and leads to Cases~\ref{case:Power16}, \ref{case:Power17} and~\ref{case:Power18}.
The case $(r_1,k_2,k_3)=(2,0,2)$ is impossible.
Indeed, otherwise the algebra~$\mathfrak g_V$ could contain, modulo $\pi_*G^\sim$-equivalence,
vector fields~$Q^5$ and~$Q^6$ with $\tau^5=1$ and $\tau^6=t$.
The condition $[Q^5,Q^6]\in\mathfrak g_V$ implies that $\kappa_1=0$,
and thus the system of equations~\eqref{eq:ClassPlambdadeltaSystemForH} with $s=5,6$
admits, as a system with respect to~$h$'s, only the zero solution,
which contradicts the condition $k_2=0$.

In view of Corollary~\ref{cor:NSchEPPowerMNReductionPotToXIndep},
the condition $k_2=1$ implies that, modulo $G^\sim$-equivalence,
the potential $V$ does not depend on~$x$ and is purely imaginary, i.e.,
$V=ih^{00}(t)$, where $h^{00}$ is a smooth real-valued function of $t$.
Substituting~$V$ into the classifying condition~\eqref{eq:NSchEPMNPlambdadeltaClassifyingCondition}
and splitting with respect to the powers of $x$, we obtain the equations
$\tau_{ttt}=0$, $\chi^a_{tt}=0$, $\sigma_t=0$,
$(\tau h^{00})_t=\lambda'\tau_{tt}$.
They imply that the vector fields $J$, $P(1,0)$, $P(t,0)$, $P(0,1)$, $P(0,t)$
belong to the maximal Lie invariance algebra of any equation with potentials of the above form.
Additional $G^\sim$-inequivalent Lie-symmetry extensions are related to inequivalent nonzero subalgebras
of the algebra $\langle D(1),D^\lambda(t),D^\lambda(t^2)\rangle$,
which are exhausted by
$\langle D(1)\rangle$,
$\langle D^\lambda(t)\rangle$,
$\langle D^\lambda(t^2+1)\rangle$,
$\langle D(1),D^\lambda(t)\rangle$ and the entire algebra itself.
This leads to Cases~\ref{case:Power19}--\ref{case:Power23}.
\end{proof}

\begin{remark}\label{rem:NSchEPPowerMN(1+2)DStationaryPots}
Cases~\ref{case:Power10}, \ref{case:Power11}, \ref{case:Power21} and~\ref{case:Power22}
of Theorem~\ref{thm:NSchEPPowerMN(1+2)D} can be replaced by
the corresponding $G^\sim$-equivalent cases with $t$-independent potentials,
\begin{enumerate}\itemsep=0.5ex

\item[\ref{case:Power10}$'$.]
$V=\frac14x_1^2+\tilde U(x_2)$:\quad $\mathfrak g_V=\langle M,\, P(e^{-t},0),\, P(e^t,0),\, D(1)\rangle$,

\item[\ref{case:Power11}$'$.]
$V=-\frac14x_1^2+\tilde U(x_2)$:\quad
$\mathfrak g_V=\langle M,\,  P(\cos t,0),\, P(\sin t,0),\, D(1)\rangle$,

\item[\ref{case:Power21}$'$.]
$V=\frac14|x|^2+i\tilde\nu$:\quad
$\mathfrak g_V=\langle M,\, P(e^{-t},0),\, P(e^t,0),\, P(0,e^{-t}),\, P(0,e^t),\,J,\, D(1)\rangle$,

\item[\ref{case:Power22}$'$.]
$V=-\frac14|x|^2+i\nu$:\quad
$\mathfrak g_V=\langle M,\, P(\cos t,0),\, P(\sin t,0),\, P(0,\cos t),\, P(0,\sin t),\, J,\, D(1)\rangle$.

\end{enumerate}
Here $\tilde\nu:=2(\nu-\lambda')\ne-2\lambda'$ in Case~\ref{case:Power21}$'$ and
$(\lambda',\nu)\ne(0,0)$ in Case~\ref{case:Power22}$'$.
The mappings of Cases~\ref{case:Power10}$'$, \ref{case:Power21}$'$  to Cases~\ref{case:Power10}, \ref{case:Power21}
and of Cases~\ref{case:Power11}$'$, \ref{case:Power22}$'$  to Cases~\ref{case:Power11}, \ref{case:Power22}
are realized by the equivalence transformations
$\mathcal D(e^{2t})\mathcal I(-2\lambda^{-1}t)$ and $\mathcal D(\tan t)\mathcal I(\lambda^{-1}\ln\cos^2t)$,
respectively.
\end{remark}

\subsection{Classification list for the entire class}%2\label{sec:NSchEPMN(1+2)DclassificationResult}

Summing up the results of this section, we obtain the following assertion.

\begin{theorem}\label{thm:NSchEPMN(1+2)D}
A complete list of $\mathcal G^\sim_{\mathscr V}$-inequivalent Lie symmetry extensions in the class $\mathscr V$ with $n=2$
is the union of the group classification lists for the subclasses~$\mathscr V'$, $\mathscr P_0$ and $\mathscr P_\lambda$, $\lambda\in\mathbb R_{\ne0}$,
up to $G^\sim_{\mathscr V'}$-, $G^\sim_{\mathscr P_0}$- and $G^\sim_{\mathscr P_\lambda}$-equivalences,
which are presented in
Theorems~\ref{thm:NSchEPMNVf(1+2)DGroupClassification}, \ref{thm:NSchEPLogMN(1+2)D} and~\ref{thm:NSchEPPowerMN(1+2)D}, respectively.
\end{theorem}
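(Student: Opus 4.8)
The plan is to derive this as an immediate consequence of the partition established in Theorem~\ref{theoremGequivNSchEPMN} together with the three classification theorems obtained in Sections~\ref{sec:NSchEPMNVf(1+2)D}--\ref{sec:NSchEPPowerMN(1+2)D}. Throughout I fix $n=2$. Theorem~\ref{theoremGequivNSchEPMN} gives the decomposition $\mathscr V=\mathscr V'\bigsqcup\big(\bigsqcup_{\lambda\in\mathbb R}\mathscr P_\lambda\big)$ into normalized subclasses no two of which are related by any point transformation. Hence every admissible transformation of the class~$\mathscr V$ maps each subclass into itself, and the $\mathcal G^\sim_{\mathscr V}$-equivalence classes of equations in~$\mathscr V$ are precisely the union of the equivalence classes computed separately inside~$\mathscr V'$, $\mathscr P_0$ and $\mathscr P_\lambda$, $\lambda\in\mathbb R_{\ne0}$. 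In particular, no case arising in one subclass can be identified with a case from another, so forming the union introduces no redundancy across the subclasses.

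The step to justify carefully is that, for each subclass, the equivalence induced by $\mathcal G^\sim_{\mathscr V}$ coincides with the equivalence used in its own group classification. Since the three subclasses are normalized, their equivalence groupoids are generated by the corresponding equivalence groups $G^\sim_{\mathscr V'}$, $G^\sim_{\mathscr P_0}$ and $G^\sim_{\mathscr P_\lambda}$, and, as recorded after Theorem~\ref{theoremGequivNSchEPMN}, we have $\mathcal G^\sim_{\mathscr V}=\mathcal G^\sim_{\mathscr V'}\bigsqcup\big(\bigsqcup_{\lambda}\mathcal G^\sim_{\mathscr P_\lambda}\big)$. An admissible transformation of~$\mathscr V$ relating two equations that both belong to the same subclass is, by definition, an admissible transformation of that subclass, and conversely; therefore $\mathcal G^\sim_{\mathscr V}$-equivalence restricted to $\mathscr V'$ is exactly $G^\sim_{\mathscr V'}$-equivalence, and likewise for $\mathscr P_0$ and each $\mathscr P_\lambda$. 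This is what licenses classifying each subclass up to its intrinsic equivalence group and then simply concatenating the resulting lists.

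It then remains to quote the completed classifications. For the general modular case I use the list for $\mathscr V^f$ from Theorem~\ref{thm:NSchEPMNVf(1+2)DGroupClassification}, and I recover the classification of $\mathscr V'$ by letting $f$ vary, taking into account that $G^\sim_{\mathscr V}$ acts on~$f$ only by complex conjugation, real scaling, a complex shift and a rescaling of its argument (Remark~\ref{rem:NSchEPMNVf}); these residual transformations are exactly the ones used to gauge the constant parameters in the listed potentials. The logarithmic and power cases are handled identically, passing from $\mathscr P_0^\delta$ and $\mathscr P_\lambda^\delta$ (Theorems~\ref{thm:NSchEPLogMN(1+2)D} and~\ref{thm:NSchEPPowerMN(1+2)D}) to $\mathscr P_0$ and $\mathscr P_\lambda$ by varying~$\delta$, using that $G^\sim_{\mathscr V}$ acts on~$\delta$ only by real scaling and conjugation (cf.\ Remark~\ref{rem:NSchEPLogMNP0delta}). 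Assembling the three lists yields the asserted complete list of $\mathcal G^\sim_{\mathscr V}$-inequivalent Lie symmetry extensions.

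The genuinely demanding work is not in this summing-up argument, which is essentially bookkeeping built on the disjointness in Theorem~\ref{theoremGequivNSchEPMN}, but in the three underlying classification theorems. The only point needing attention here is the reparameterization boundary: one must check that no case obtained for one value of~$f$ (or~$\delta$) and another obtained for a different value are related by a transformation of~$G^\sim_{\mathscr V}$ that has been overlooked, and that cases in distinct subclasses stay distinct. This is controlled precisely because $G^\sim_{\mathscr V}$ cannot change the invariant $\rho f_{\rho\rho}/f_\rho$ beyond the elementary operations above, so its action respects the partition and the fixing of the nonlinearity, leaving no cross-identifications to resolve.
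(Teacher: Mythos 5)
Your proposal is correct and follows essentially the same route as the paper: the paper presents this theorem as an immediate consequence of the partition in Theorem~\ref{theoremGequivNSchEPMN} (and the induced partition of the equivalence groupoid $\mathcal G^\sim_{\mathscr V}$) combined with the three subclass classifications, which is exactly the bookkeeping argument you spell out. Your additional attention to the reparameterization boundary --- that the residual action of $G^\sim_{\mathscr V}$ on $f$ and $\delta$ is controlled by the invariant $\rho f_{\rho\rho}/f_\rho$, so no cross-identifications arise when varying the fixed nonlinearity --- is a useful explicit justification of a step the paper leaves implicit in Remarks~\ref{rem:NSchEPMNVf} and~\ref{rem:NSchEPLogMNP0delta}.
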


It is easy to select
$\mathcal G^\sim_{\mathscr V}$-inequivalent Lie symmetry extensions with stationary potentials.
These are cases involving the vector field~$D(1)$,
which are exhausted by
Cases~\ref{case:Vf1}, \ref{case:Vf4}, \ref{case:Vf6}, \ref{case:Vf11} and~\ref{case:Vf14} of Theorem~\ref{thm:NSchEPMNVf(1+2)DGroupClassification},
Cases~\ref{case:Log1}, \ref{case:Log4}, \ref{case:Log6}, \ref{case:Log11} and~\ref{case:Log14} of Theorem~\ref{thm:NSchEPLogMN(1+2)D},
Cases~\ref{case:Power1}, \ref{case:Power3}, \ref{case:Power4}, \ref{case:Power6}, \ref{case:Power7}, \ref{case:Power9},
\ref{case:Power12}, \ref{case:Power17}, \ref{case:Power20} and~\ref{case:Power23} of Theorem~\ref{thm:NSchEPPowerMN(1+2)D}
and the modified Cases~\ref{case:Power10}$'$, \ref{case:Power11}$'$, \ref{case:Power21}$'$ and~\ref{case:Power22}$'$ given in Remark~\ref{rem:NSchEPPowerMN(1+2)DStationaryPots}.

\subsection*{Acknowledgements}

The authors are pleased to thank Peter Basarab-Horwath,
Vyacheslav Boyko, Michael Kunzinger, Anatoly Nikitin, Galyna Popovych, Dmytro Popovych,
Alexander Sakhnovich and Olena Vaneeva for stimulating discussions.
The research of C.K. was supported by International Science Programme (ISP) in collaboration
with Eastern Africa Universities Mathematics Programme (EAUMP).
The research of R.O.P. was supported by the Austrian Science Fund (FWF), projects P25064, P29177 and P30233.
The authors are grateful to  the Abdus Salam International Centre for Theoretical Physics (ICTP)
for hospitality and excellent scientific environment.

\end{document}